\newenvironment{prOOf}[1]
  {\trivlist\PRstyle\item[]{\bfseries Proof {#1}:}\newline}{\QED\endtrivlist}
\def\squareforqed{\hbox{\rlap{$\sqcap$}$\sqcup$}}
\def\QED{\ifmmode\squareforqed\else{\unskip\nobreak\hfil
\penalty50\hskip1em\null\nobreak\hfil\squareforqed
\parfillskip=0pt\finalhyphendemerits=0\endgraf}\fi}
\title{Stochastic Cellular Automata:\\ Correlations, Decidability and Simulations}
\author{	Pablo Arrighi\thanks{This work has been partially funded by the ANR-10-JCJC-0208 CausaQ grant.}\\
			{Université de Grenoble (LIG, UMR 5217), France}\\[-1mm]
			{Université de Lyon (LIP, UMR 5668), France}
	\and 	Nicolas Schabanel\thanks{This work has been partially funded by the ANR-2010-BLAN-0204 Magnum and ANR-12-BS02-005 RDAM grants.}\\
			{CNRS, Université Paris Diderot (LIAFA, UMR 7089), France}\\[-1mm]
        		{Université de Lyon (IXXI), France}
 	\and 	Guillaume Theyssier\\
			{CNRS, Université de Savoie (LAMA, UMR 5127), France}
      }
\newcommand\DELETE[1]{}
\newcommand{\Qi}{\makebox{$\bigcirc\!\!\!\!\!\rightarrow$}}
\newcommand{\Qo}{\rightarrow}
\newcommand{\Qf}{\checked}
\newcommand{\Qe}{\bot}
\newcommand{\PPT}{\ensuremath{\textsc{PPT}}}
\newcommand{\PF}{\ensuremath{\operatorname{\mathscr{PF}}}}
\newcommand{\ZZ}{\ensuremath{\mathbb Z}}
\newcommand\Z\ZZ
\newcommand{\NN}{\ensuremath{\mathbb N}}
\newcommand\N\NN
\newcommand{\QQ}{\ensuremath{\mathbb Q}}
\newcommand{\AUTO}[1]{{\ensuremath{\mathcal{#1}}}}
\newcommand\CAA{\AUTO A}
\newcommand\CAB{\AUTO B}
\newcommand\Mes[1]{\ensuremath{\mathcal{M}({#1})}}
\newcommand\Pof[1]{\ensuremath{\mathcal{P}({#1})}}
\newcommand\DET[1]{\ensuremath{\mathcal{D}_{#1}}}
\newcommand\NDET[1]{\ensuremath{\mathcal{N}_{#1}}}
\newcommand\STOC[1]{\ensuremath{\mathcal{S}_{#1}}}
\newcommand\cyl[2]{\ensuremath{[{#1}]_{#2}}}
\newcommand{\SUB}{\sqsubseteq}
\newcommand{\PROJ}{\unlhd}
\newcommand{\MIX}{{\PROJ{}\hskip-5pt\SUB}}
\newcommand{\NSUB}{\overset{N}{\SUB}}
\newcommand{\NPROJ}{\overset{N}{\PROJ}}
\newcommand{\NMIX}{\overset{N}{\MIX}}
\newcommand{\SSUB}{\overset{S}{\SUB}}
\newcommand{\SPROJ}{\overset{S}{\PROJ}}
\newcommand{\SMIX}{\overset{S}{\MIX}}
\newcommand{\DSUB}{\overset{D}{\SUB}}
\newcommand{\DPROJ}{\overset{D}{\PROJ}}
\newcommand{\DMIX}{\overset{D}{\MIX}}
\newcommand\rest[2]{{}_{#1}{#2}}
\newcommand\proj[2]{{}^{#1\!}{#2}}
\newcommand\mix[3]{{{}^{#1}_{#2}}{#3}}
\newcommand\shift[1]{\mathfrak{\sigma}_{#1}}
\newcommand\grp[2]{{#1}^{\langle#2\rangle}}
\newcommand\bloc[1]{b_{#1}}
\newcommand\debloc[1]{b^{-1}_{#1}}
\newcommand\simu{\preccurlyeq}
\newcommand\ssimui{{\simu_i^S}}
\newcommand\ssimus{{\simu_\pi^S}}
\newcommand\ssimum{{\simu_m^S}}
\newcommand\nsimui{{\simu_i^N}}
\newcommand\nsimus{{\simu_\pi^N}}
\newcommand\nsimum{{\simu_m^N}}
\newcommand\dsimui{{\simu_i^D}}
\newcommand\dsimus{{\simu_\pi^D}}
\newcommand\dsimum{{\simu_m^D}}
\newcommand\somerel\leq
\newcommand\CFCA{\textsf{CFCA}}
 \renewcommand{\leq}{\leqslant}
 \renewcommand{\geq}{\geqslant}
 \renewcommand{\emptyset}{\varnothing}
\newcommand\probfa[1]{\mathbb{P}_{\mathcal{A}}\left(#1\right)}
 \newcommand{\anevent}{\operatorname{\mathcal E}}
\begin{document}
\maketitle

\begin{abstract}
This paper introduces a simple formalism for dealing with deterministic, non-deterministic and stochastic cellular automata in an unified and composable manner. This formalism allows for local probabilistic correlations, a feature which is not present in usual definitions. We show that this feature allows for strictly more behaviors (for instance, number conserving stochastic cellular automata require these local probabilistic correlations). We also show that several problems which are deceptively simple in the usual definitions, become undecidable when we allow for local probabilistic correlations, even in dimension one. Armed with this formalism, we extend the notion of intrinsic simulation between deterministic cellular automata, to the non-deterministic and stochastic settings. Although the intrinsic simulation relation is shown to become undecidable in dimension two and higher, we provide explicit tools to prove or disprove the existence of such a simulation between any two given stochastic cellular automata. Those tools rely upon a characterization of equality of stochastic global maps, shown to be equivalent to the existence of a stochastic coupling between the random sources. We apply them to prove that there is no universal stochastic cellular automaton. Yet we provide stochastic cellular automata achieving optimal partial universality, as well as a universal non-deterministic cellular automaton.
\end{abstract}

\section{Introduction} 

\paragraph{A motivation: stochastic simulation.} Cellular Automata (CA) are a key tool in simulating natural phenomena. This is because they constitute a privileged mathematical framework in which to cast the simulated phenomena, and they describe a massively parallel architecture in which to implement the simulator. 
Often however, the system that needs to be simulated is a noisy system. More embarrassingly even, it may happen that the system that is used as a simulator is again a noisy system. The latter is uncommon if one thinks of a classical computer as the simulator, but quite common for instance if one thinks of using a small scale model of a system as a simulator for that system.\\
Fortunately, when both the simulated system and the simulating system are noisy, it may happen that both effects cancel out, i.e. that the noise of the simulator is made to coincide with that of the simulated. In such a situation a model of noise is used to simulate another, and the simulation may even turn out to be\ldots exact. This paper begins to give a formal answer to the question: When can it be said that a noisy system is able to exactly simulate another?\\
This precise question has become crucial in the field of quantum simulation. Indeed, there are many quantum phenomena which we need to simulate, and these in general are quite noisy. Moreover, only quantum computers are able to simulate them efficiently, but in the current state of experimental physics these are also quite noisy. Could it be that noisy quantum computers may serve to simulate a noisy quantum systems? The same remark applies to Natural Computing in general. Still, the question is challenging enough in the classical setting.

\paragraph{A challenge: the need for local probabilistic correlations.} The first problem that one comes across is that stochastic CA have only received little attention from the theoretical community. When they have been considered, they were usually defined as the application of a probabilistic function uniformly across space \cite{Toom,Gacs,Fates,Mairesse,RST,FatesRST06}. In this paper we will refer to this model as local Correlation-Free CA (CFCA). Indeed, this particular class of stochastic CA has the unique property that, starting from a determined configuration, the cell's distributions remain uncorrelated after one step. This was pointed out in \cite{DBLP:conf/cie/ArrighiFNT11}, which provides an example (cf. $\textsc{Parity}$ stochastic CA which we will use later) which cannot be realized as CFCA, in spite of the fact that they require only local probabilistic correlations and hence fit naturally in the CA framework. Moreover, \cite{DBLP:conf/cie/ArrighiFNT11} points out that the composition of two CFCA is not always a CFCA. The lack of composablity of a model is an obstacle for defining intrinsic simulation, because the notion must be defined up to grouping in space and in time. In \cite{DBLP:conf/cie/ArrighiFNT11} a composable model is suggested, but it lacks formalization.\\ 
In this paper we propose a simple formalism to deal with general stochastic CA. The formalism relies on considering a CA $F(c,s)$ fed, besides the current configuration $c$, with a new fresh independent uniform random configuration $s$ at every time step. This allows any kind of local probabilistic correlations and includes in particular all the examples of \cite{DBLP:conf/cie/ArrighiFNT11}. As it turns out, the definition also captures deterministic and non-deterministic CA (non-deterministic CA are obtained by ignoring the probability distribution over the random configuration). 

\paragraph{Results on stochastic simulation.} This formalism allows us to extend the notions of simulation developed for the deterministic setting \cite{bulking1,bulking2}, to the non-deterministic and stochastic settings. The choice of making explicit the random source in the formalism has turned out to be crucial to tackle the second problem, as it allows a precise analysis of the influence of randomness, in terms of simulation power. \\
The second problem that one comes across is that the question of whether two such stochastic CA are equal in terms of probability distributions is highly non-trivial. In particular, we show that testing if two stochastic CA define the same random map becomes undecidable in dimension $2$ and higher (Theorem~\ref{thm:deciding}). Still, we provide an explicit tool (the coupling of the random sources of two stochastic CA) that allows to prove (or disprove) the equality of their probability distributions. More precisely, we show that the existence of such a coupling is strictly equivalent to the equality of the distribution of the random maps of two stochastic CA (Theorem~\ref{thm:coupling}).\\
The choice of making explicit the random source allows us to show some no-go results. Any stochastic CA may only simulate stochastic CA with a compatible random source (where compatibility is expressed as a simple arithmetic equation, Theorem~\ref{thm:primefactors}). It follows that there is no universal stochastic CA (Corollary~\ref{cor:no:stoc:universal}). Still, we show that there is a universal CA for the non-deterministic dynamics (Theorem~\ref{thm:nondet:univ}), and we are able to provide a universal stochastic CA for every class of compatible random source (Theorem~\ref{thm:stoc:univ}).  

\paragraph{Results on questions of computability versus allowing for local probabilistic correlations.} 
The fact that testing if two stochastic CA define the same random map is undecidable in dimension $2$ and higher, which is not the case in the particular case of CFCA, suggested that many problems appear deceptively simple in the CFCA formalism. And indeed, their difficulty comes back as soon as one iterates the CFCA: for instance, we show that in dimension $2$ and higher, it is undecidable whether the squares of two CFCA define the same random map (Corollary \ref{cor:undeciteratedequalityforCFCA}). Worse even, it is undecidable whether the square of a CFCA is noisy (Theorem \ref{thm:undecnoisecf}). \\
In dimension one, these problems, and many others, are shown to be decidable for general stochastic CA (Corollary~\ref{cor:dec1Dequality} and \ref{cor:buchi}). Thus, they cannot serve to point out a separation with CFCA. Yet, we show that the Pattern-Probability-Threshold problem (i.e. the question whether some pattern can appear with a probability higher that some threshold) is again undecidable for stochastic CA (Theorem \ref{thm:pptundeci}), whereas it is decidable for CFCA (Theorem \ref{th:pptdeccfca}). We also show that some behaviors are out-of-reach of CFCA. Indeed, CFCA cannot be number-conserving unless they are deterministic (Lemma \ref{lem:CFCA:num:det}). Moreover, even the iterates of a CFCA cannot be surjective number-conserving unless they are deterministic (Theorem \ref{thm:CFCA:num:det}). Iterates of two-state CFCA cannot reproduce behaviors alike the $\textsc{Parity}$ example either (Theorem \ref{th:reven}).

\paragraph{Plan.} Section~\ref{sec:basic} recalls the vital minimum about probability theory. Section~\ref{sec:SCA} states our formalism. Section~\ref{sec:localglobal} explains that some global properties of the SCA (such as equality of stochastic global functions, the probability of appearance of certain patterns, injectivity and surjectivity) cannot be decided from the local rules, but may be approached through some other proof techniques. These problems tend to simplify for the particular case of local Correlation-Free SCA, which corresponds to the more usual definition, and for the one-dimensional case, as explained in Sections~\ref{sec:correlationfree} and \ref{sec:dimensionone}. This shows that local Correlation-Free SCA are fundamentally simpler: some behaviors cannot be reached. Section~\ref{sec:simul} extends the notion of  intrinsic simulation to the non-deterministic and stochastic settings.  Section~\ref{sec:univ} provides the no-go results in the stochastic setting, the universality constructions. Section~\ref{sec:open} concludes this article with a list of open questions.

\section{Standard Definitions}
\label{sec:basic}

\textit{Even if this article focuses mainly on one-dimensional CA for the sake of simplicity, it extends naturally to higher dimensions. Each time a result is sensitive to dimension, it will be explicited in the statement.}
\newcommand\extendstoalldimensions{\textit{The proof is written for 1D CA to simplify notations but it extends to any dimension in a straighforward way.}}

For any finite set $A$ we consider the symbolic space $A^\ZZ$. For any $c\in A^\ZZ$ and $z\in\ZZ$ we denote by $c_z$ the value of $c$ at point $z$. $A^\ZZ$ is endowed with the Cantor topology (infinite product of the discrete topology on each copy of $A$) which is compact and metric (see \cite{kurkabook} for details). A basis of this topology is given by cylinders which are actually clopen sets: given some finite word $u$ and some position $z$, the cylinder $\cyl{u}{z}$ is the set
${\cyl{u}{z} = \{c\in A^\ZZ: \forall x, 0\leq x<|u|-1, c_{z+x}=u_x\}.}$

We denote by $\Mes{A^\ZZ}$ the set of Borel probability measures on $A^\ZZ$. By Carath\'eodory extension theorem, Borel probability measures are characterized by their value on cylinders. Concretely, a measure is given by a function $\mu$ from cylinders to the real interval $[0,1]$ such that $\mu(A^\ZZ) = 1$ and
\[\forall u\in A^*, \forall z\in\ZZ, \quad \mu(\cyl{u}{z}) = \sum_{a\in A}\mu(\cyl{ua}{z}) = \sum_{a\in A}\mu(\cyl{au}{z-1})\]

We denote by $\nu_A$ the uniform measure over $A^\ZZ$ (s.t. $\nu_A(\cyl{u}{z})  = \frac{1}{|A|^{|u|}}$). We shall denote it as $\nu$ when the underlying alphabet~$A$ is clear from the context. 

We endow the set $\Mes{A^\ZZ}$ with the compact topology given by the following distance:
${\mathfrak{D}(\mu_1,\mu_2) = \sum_{n\geq 0}2^{-n}\cdot\max_{u\in A^{2n+1}} \bigl|\mu_1(\cyl{u}{-n})-\mu_2(\cyl{u}{-n})\bigr|}$.
See \cite{Pivato09} for a review of works on cellular automata from the measure-theoretic point of view.

\section{Stochastic Cellular Automata}
\label{sec:SCA}

Non-deterministic and stochastic cellular automata are captured by the same syntactical object given in the following definition. They differ only by the way we look at the associated global behavior. Moreover, deterministic CA are a particular case of stochastic CA and can also be defined in the same formalism.

\subsection{The Syntactical Object}

\begin{definition}
\label{def:syntax}
A \emph{stochastic cellular automaton} $\AUTO A=(Q,R,V,V',f)$ consists in:
\begin{itemize}
\item a finite set of \emph{states} $Q$
\item a finite set $R$ called the \emph{random symbols}
\item two finite subsets of $\mathbb Z$: $V=\{v_1,\ldots, v_\rho\}$ and $V'=\{v'_1,\ldots, v'_{\rho'}\}$, called the \emph{neighborhoods}; $\rho$ and $\rho'$ are the sizes of the neighborhoods and ${k = \max_{v\in V\cup V'}|v|}$ is the \emph{radius} of the neighborhoods.
\item a \emph{local transition function} $f: Q^{\rho} \times R^{\rho'} \rightarrow Q$
\end{itemize}
A function $c\in Q^\ZZ$ is called a \emph{configuration}; $c_j$ is called the \emph{state} of the \emph{cell}~$j$ in configuration~$c$. A function $s\in R^\ZZ$ is called an \emph{$R$-configuration}.

In the particular case where $V'=\{0\}$ (i.e., where each cell uses its own random symbol only), we say that \CAA{} is a \emph{Correlation-Free Cellular Automaton} (\CFCA{} for short).
\end{definition}

\begin{definition}[Explicit Global Function]
 To this local description, we associate the \emph{explicit global function} $F:Q^\ZZ\times R^\ZZ\rightarrow Q^\ZZ$ defined for any configuration $c$ and $R$-configuration $s$ by:
${F(c,s)_z = f\bigl((c_{z+v_1},\ldots,c_{z+v_\rho}),(s_{z+v'_1},\ldots,s_{z+v'_{\rho'}})\bigr).}$
Given a sequence $\bigl(s^t\bigr)_t$ of $R$-configurations and an initial configuration $c$, we define the associated \emph{space-time diagram} as the bi-infinite matrix ${\bigl(c^t_z\bigr)_{t\geq 0, z\in\ZZ}}$ where $c^t\in Q^\ZZ$ is defined by $c^0=c$ and ${c^{t+1}=F(c^t,s^t)}$. We also define for any $t\geq 1$ the $t^\text{th}$ iterate of the explicit global function $F^t:Q^\ZZ\times \bigl(R^\ZZ\bigr)^t\rightarrow Q^\ZZ$ by $F^0(c)=c$ for all configuration~$c$ and 
\[F^{t+1}(c,s^1,\ldots,s^{t+1}) = F\bigl(F^t(c,s^1,\ldots,s^t),s^{t+1}\bigr)\]
so that ${c^t=F^t(c,s^1,\ldots,s^t)}$.\DELETE{ We denote by $\CAA^t = (Q,R^t,V_+,V'_+,f^t)$ the stochatic CA whose explicit global function is $F^t$.}
\end{definition}

In this paper, we adopt the convention that local functions are denoted by a lowercase letter (typically~$f$) and explicit global functions by the corresponding capital letter (typically $F$). Moreover, we will often define CA through their explicit global function since details about neighborhoods often do not matter in this paper.

The explicit global function captures all possible actions of the automaton on configurations. This function allows to derive three kinds of dynamics: deterministic, non-deterministic and stochastic.

\subsection{Deterministic and Non-Deterministic Dynamics}

\paragraph{Deterministic.}  The \emph{deterministic global function} ${\DET{F}:Q^\ZZ\rightarrow Q^\ZZ}$ of ${\CAA=(Q,R,V,V',f)}$ is  defined by ${\DET{F}(c) = F(c,0^\ZZ)}$ where $0$ is a distinguished element of $R$. \CAA{}  is said to be \emph{deterministic} if its local transition function~$f$ does not depend on its second argument (the random symbols).

\paragraph{Non-Deterministic.} The \emph{non-deterministic global function} ${\NDET{F}:Q^\ZZ\rightarrow \Pof{Q^\ZZ}}$ of~$\CAA$ is  defined for any configuration $c\in Q^\ZZ$ by 
${\NDET{F}(c) = \{F(c,s):s\in R^\ZZ\}}$.

\paragraph{Dynamics.} The deterministic dynamics of \CAA{} is given by the sequence of iterates $(\DET{F}^t)_{t\geq 0}$. Similarly the non-deterministic dynamics of $\CAA$ is given by the iterates $\NDET{F}^t:Q^\ZZ\rightarrow \Pof{Q^\ZZ}$ defined by $\NDET{F}^{0}(c) = \{c\}$ and $\NDET{F}^{t+1}(c) = \bigcup_{c'\in\NDET{F}^t(c)}\NDET{F}(c')$.

\subsection{Stochastic Dynamics}
\label{sec:stochdyn}

 The stochastic point of view consists in taking the $R$-component as a source of randomness. More precisely, the explicit global function $F$ is fed at each time step with a
random uniform and independent $R$-configuration. This defines a stochastic process for which we are then interested in the distribution of states across space and time. By Carath\'eodory extension theorem, this distribution is fully determined by the probabilities of the events of the form  ``starting from $c$, the word $u$ occurs at position $z$ after $t$ steps of the process''.
Formally, for $t=1$, this event is the set:
\[\anevent_{c,\cyl{u}{z}}=\bigl\{s\in R^\ZZ : F(c,s)\in\cyl{u}{z}\bigr\}.\]
In order to evaluate the probability of this event, we use the locality of the explicit global function $F$. The event ``$F(c,s)\in\cyl{u}{z}$'' only depends of the cells of $s$ from position ${a=z-k}$ to position ${b=z+k+|u|-1}$. Therefore, if ${J = \{v\in R^{b-a}:F(c,[v]_a)\subseteq[u]_z\}}$, then $\anevent_{c,\cyl{u}{z}}=\cup_{v\in J} [v]_a$ and hence $\anevent_{c,\cyl{u}{z}}$ is a measurable set of probability: $\nu_R(\anevent_{c,\cyl{u}{z}})=\sum_{v\in J}\nu_R(\cyl{v}{a})=|J|/|R|^{b-a}$ (recall that $\nu_R$ is the uniform measure over $R^\ZZ$).


More generally to  any CA \CAA{} we associate its \emph{stochastic global function} ${\STOC{F}:Q^\ZZ\rightarrow\Mes{Q^\ZZ}}$ defined for any configuration $c\in Q^*$ by: $\forall u\in Q^\ZZ, \forall z\in\ZZ,$
 \[
\bigl(\STOC{F}(c)\bigr)(\cyl{u}{z}) = \nu_R(\anevent_{c,\cyl{u}{z}}) \text{ = the probability of event $\anevent_{c,\cyl{u}{z}}$.}
\]

\paragraph{Example.} For instance, consider the stochastic function $\textsc{Parity}$ that maps every configuration over the alphabet $\{0,1,\#\}$ to a random configuration in which every $\{0,1\}$-word delimited by two consecutive $\#$  is replaced by a random independent uniform word of length with even parity. This cannot be realized by a \CFCA{}. Still, one can realize the stochastic function $\textsc{Parity}$ as a stochastic CA with $Q=\{\#,0,1\}$, $R=\{0,1\}$ and local rule ${f:Q^{\{-1,0,1\}}\times R^{\{-1,0\}}\rightarrow Q}$ given by: for all $c_{-1},c_0,c_1,s_{-1},s_0\in\{0,1\}$  and $a,b\in\{\#,0,1\}$,
$$
\begin{array}{c@{\quad\quad}c@{\quad\quad}c}
	f(a\#b,s_{-1}s_0) = \#
&	f(\#c_0\#,s_{-1}s_0) = 0
&	f(\#c_0c_1,s_{-1}s_0) = s_0\\[2mm]
\multicolumn{3}{c}{
	f(c_{-1}c_0\#,s_{-1}s_0) = s_{-1}
\quad\quad	f(c_{-1}c_0c_1,s_{-1}s_0) = s_{-1}+s_0}
\end{array}
$$
One can easily check that this local probabilistic correlations ensures that every word delimited by two consecutive $\#$ is indeed mapped to a uniform independent random word of even parity.

\paragraph{Dynamics.} As opposed to the deterministic and non-deterministic setting, defining an iterate of this map is a not so trivial task.  There are two approaches: defining directly the measure after $t$ steps or extending the map $\STOC{F}$ to a map from $\Mes{Q^\ZZ}$ to itself. Both rely crucially on the continuity of $F$. In particular, we want to make sure that the definition of the measure after $t$ steps matches $t$ iterations of the one-step map, and hence, is independent of the explicit mechanics of $F$ but depends only on the map $\STOC{F}$ defined by $F$. 

The easiest one to present is the first approach. For any $t\geq 1$, the event $\anevent^t_{c,\cyl{u}{z}}$ that the word $u$ appears at position~$z$ at time~$t$ from configuration~$c$ consists in the set of all $t$-uples of random configurations $(s^1,\ldots,s^t)$ yielding $u$ at position~$z$ from~$c$, i.e.: 
\[\anevent^t_{c,\cyl{u}{z}}=\bigl\{(s^1,\ldots,s^t)\in\bigl(R^\ZZ\bigr)^t : F^t(c,s^1,\ldots,s^t)\in\cyl{u}{z}\bigr\}\]
As before $\anevent^t_{c,\cyl{u}{z}}$ is a measurable set in $\bigl(R^\ZZ\bigr)^t$ because it is a product of finite unions of cylinders by the locality of $F$. We therefore define $\STOC{F}^t:Q^\ZZ\rightarrow\Mes{Q^\ZZ} $, the iterate of the stochastic global function, by:
\[\bigl(\STOC{F}^t(c)\bigr)(\cyl{u}{z}) = \nu_{R^t}(\anevent^t_{c,\cyl{u}{z}})
\text{ =  the probability of event $\anevent^t_{c,\cyl{u}{z}}$}
\]
where $\nu_{R^t}$ denotes the uniform measure on the product space $\bigl(R^\ZZ\bigr)^t$. For similar reasons as above, $\STOC{F}^t(c)$ is a well-defined probability measure.

For all $t\geq 0$ and all words $u\in Q^n$ with $n\geq 2kt+1$, we will also denote by $F^t(u)$ the random variable for the random image $v\in Q^{n-2kt}$ of $u$ by $F^t$, defined formally as: for all $u\in Q^n$ and $v\in Q^{n-2kt}$, 
$$
\Pr\{ F^t(u) = v\} =  \bigl(\STOC{F}^t(c)\bigr)(\cyl{v}{kt}), \text{\quad for any $c\in \cyl{u}{0}$.} 
$$

The following key technical fact ensures that two automata define the same distribution over time as soon as their one-step distributions match.
\begin{fact}\label{fac:iterates}
  Let $\AUTO{A}$ and $\AUTO{B}$ be two stochastic CA with the same set of states $Q$ (and possibility different random alphabet) and of explicit global functions $F$ and $G$ respectively. If $\STOC{F}=\STOC{G}$ then for all $t\geq 1$ we have $\STOC{F}^t=\STOC{G}^t$
\end{fact}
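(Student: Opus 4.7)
The plan is to proceed by induction on $t$; the base case $t=1$ is exactly the hypothesis $\STOC{F}=\STOC{G}$. For the induction step, I would express $\STOC{F}^{t+1}(c)$ as a ``composition'' of $\STOC{F}^t(c)$ and $\STOC{F}$ via a convolution-like formula obtained by combining Fubini's theorem with the locality of $F$.

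The starting point is the identity $F^{t+1}(c,s^1,\ldots,s^{t+1}) = F\bigl(F^t(c,s^1,\ldots,s^t),s^{t+1}\bigr)$ together with the fact that $\nu_{R^{t+1}}$ is the product of $\nu_{R^t}$ and $\nu_R$. Applying Fubini to the defining integral of $\STOC{F}^{t+1}(c)(\cyl{u}{z})$ yields
\[
\STOC{F}^{t+1}(c)(\cyl{u}{z}) = \int_{(R^\ZZ)^t} \STOC{F}\bigl(F^t(c,s^1,\ldots,s^t)\bigr)(\cyl{u}{z})\,d\nu_{R^t}.
\]
By locality of $F$ with radius $k$, the integrand $\STOC{F}(c')(\cyl{u}{z})$ depends on $c'$ only through its restriction to the window $[z-k,z+|u|+k-1]$. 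Writing $g_F(w)=\STOC{F}(\tilde c)(\cyl{u}{z})$ for any (hence every) $\tilde c\in\cyl{w}{z-k}$ with $w\in Q^{|u|+2k}$, and partitioning the integral according to the value of that window, gives
\[
\STOC{F}^{t+1}(c)(\cyl{u}{z}) = \sum_{w\in Q^{|u|+2k}} g_F(w)\cdot \STOC{F}^t(c)(\cyl{w}{z-k}).
\]

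The same decomposition holds for $\AUTO{B}$ with $g_G$ and $\STOC{G}^t$; taking $k$ large enough to fit the radii of both automata, the two decompositions become directly comparable term by term. Since $\STOC{F}=\STOC{G}$ by hypothesis, $g_F=g_G$ on every input word $w$; by the induction hypothesis, $\STOC{F}^t(c)(\cyl{w}{z-k})=\STOC{G}^t(c)(\cyl{w}{z-k})$ for all $w$. Hence $\STOC{F}^{t+1}(c)(\cyl{u}{z}) = \STOC{G}^{t+1}(c)(\cyl{u}{z})$ on every cylinder, and Carath\'eodory's extension theorem promotes this to the required equality of measures. The main obstacle is establishing the decomposition formula cleanly: the combination of Fubini (separating the fresh randomness of the last step from the accumulated randomness of the first $t$ steps) with locality (reducing the integrand to a function of a finite window) must be handled with some care, after which the inductive step is routine bookkeeping.
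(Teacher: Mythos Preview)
Your proposal is correct and follows essentially the same route as the paper: both establish the convolution formula $\bigl(\STOC{F}^{t+1}(c)\bigr)(\cyl{u}{z}) = \sum_{w} \bigl(\STOC{F}(c_w)\bigr)(\cyl{u}{z})\cdot\bigl(\STOC{F}^{t}(c)\bigr)(\cyl{w}{z-k})$ (the paper via a direct decomposition of the event $\anevent^{t+1}_{c,\cyl{u}{z}}$ into a disjoint union of product events, you via Fubini followed by partitioning on the finite window), and then conclude by induction. The only cosmetic difference is the language---measure-of-products versus integration---and your remark about enlarging $k$ to a common radius is exactly what is needed to make the two sums comparable.
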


\begin{proof}
  \extendstoalldimensions{}
  Consider a CA of explicit global function $F$. Consider a word $u$ and a position $z$. Let $\phi:Q^\ZZ\rightarrow \Pof{R^\ZZ}$ be function that associates to a configuration~$c$ the event $\anevent_{c,\cyl uz}$. $\phi(c)$ is entirely determined by the states of the cells from positions $a=z-k$ to $b=z+|u|+k$ in~$c$ (locality of $F$). Therefore $\phi$ is constant over every cylinder $\cyl{v}{a}$ with $v\in Q^{b-a}$. If we distinguish some $c_v\in\cyl va$ for every $v\in Q^{b-a}$, we obtain by definition of $F^t$ and  continuity of $F$:
\[
 \anevent^{t+1}_{c,\cyl{u}{z}} = \bigcup_{v\in Q^{b-a}} \left(\anevent^t_{c,\cyl{v}{a}}\times \anevent_{c_v,\cyl{u}{z}}\right).
 \]
Then, since sets $\bigl(\anevent^t_{c,\cyl{v}{a}}\bigr)_{v\in Q^{b-a}}$ are pairwise disjoint (because $F$ is deterministic and cylinders $\cyl{v}{a}$ are pairwise disjoint), we have
\begin{align*}
  \bigl(\STOC{F}^{t+1}(c)\bigr)(\cyl{u}{z}) =
  \nu_{R^{t+1}}(\anevent^{t+1}_{c,\cyl{u}{z}}) &=
  \sum_{v\in Q^{b-a}} \nu_{R^{t+1}}\bigl(\anevent^t_{c,\cyl{v}{a}}\times
  \anevent_{c_v,\cyl{u}{z}}\bigr)\\ &=
  \sum_{v\in Q^{b-a}}\nu_{R^t}(\anevent^t_{c,\cyl{v}{a}})\cdot\nu_R(\anevent_{c_v,\cyl{u}{z}})\\
  &= \sum_{v\in Q^{b-a}} \bigl(\STOC{F}^{t}(c)\bigr)(\cyl{v}{a})\cdot\bigl(\STOC{F}(c_v)\bigr)(\cyl{u}{z})
\end{align*}
The value of $S_F^t(c)$ over cylinders  can thus be expressed recursively as a function of a finite number of values $S_F$ over a finite number of cylinders. It follows that if for some pair of CA \CAA\/ and \CAB\/ with explicit global functions $F$ and $G$ we have  $\STOC{F}=\STOC{G}$, then $\STOC{F}^{t}=\STOC{G}^{t}$ for all $t$.
\end{proof}

In our setting one can recover the non-deterministic dynamics from the stochastic dynamics of a given stochastic CA. This heavily relies on the continuity of explicit global functions and compacity of symbolic spaces.

\begin{fact}\label{fac:stocndet}
  Given two CA with same set of states and explicit global functions $F_A$ and $F_B$, if $\STOC{F_A}=\STOC{F_B}$ then $\NDET{F_A}=\NDET{F_B}$.
\end{fact}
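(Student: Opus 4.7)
My plan is to deduce the fact from a topological characterisation of $\NDET{F}(c)$: namely, that for every configuration $c$, $\NDET{F}(c)$ coincides with the topological support of the measure $\STOC{F}(c)$ in the Cantor topology on $Q^\ZZ$. Once this is established, the statement follows at once, since two equal measures have the same support: if $\STOC{F_A}(c)=\STOC{F_B}(c)$ for every $c$, then $\NDET{F_A}(c)=\NDET{F_B}(c)$ as well. Recall that the cylinders $\cyl{u}{z}$ form a basis of clopen sets, so a configuration $c'$ lies in the support of $\mu\in\Mes{Q^\ZZ}$ iff every cylinder containing $c'$ has strictly positive $\mu$-mass.

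The inclusion $\NDET{F}(c)\subseteq\mathrm{supp}(\STOC{F}(c))$ will be the easy direction. Given $c'=F(c,s)$ and any cylinder $\cyl{u}{z}$ containing $c'$, the random configuration $s$ lies in the event $\anevent_{c,\cyl{u}{z}}$, which, by locality of $F$, is a finite non-empty union of cylinders in $R^\ZZ$ and therefore has strictly positive $\nu_R$-mass, of the explicit rational form $|J|/|R|^{b-a}$ already displayed in the excerpt. Hence every cylinder neighbourhood of $c'$ carries positive $\STOC{F}(c)$-mass and $c'$ belongs to the support.

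The reverse inclusion is where the real work will be; as hinted in the paragraph preceding the fact, I will invoke continuity of $F$ together with compactness of $R^\ZZ$. Suppose $c'\in\mathrm{supp}(\STOC{F}(c))$. For each $n\geq 0$, the central word $u_n=c'_{-n}\cdots c'_n$ determines a cylinder $\cyl{u_n}{-n}$ containing $c'$, hence of positive $\STOC{F}(c)$-measure, so its event is non-empty and I can pick $s_n\in R^\ZZ$ with $F(c,s_n)\in\cyl{u_n}{-n}$. By compactness of $R^\ZZ$ a subsequence $s_{n_k}$ converges to some $s\in R^\ZZ$. Continuity of the explicit global function $F$ then gives $F(c,s_{n_k})\to F(c,s)$, while by construction the sequence $F(c,s_{n_k})$ converges to $c'$ (the two configurations agree on the window $[-n_k,n_k]$). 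Therefore $c'=F(c,s)\in\NDET{F}(c)$.

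The main obstacle, and indeed the only genuinely non-trivial step, is this passage from approximate witnesses $s_n$ at every finite precision to an exact global witness $s$: this is precisely where the topological structure of the setting is essential, and where an argument modelled on Fact~\ref{fac:iterates} (which works purely by counting cylinders) would not suffice.
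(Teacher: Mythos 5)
Your proof is correct and follows essentially the same route as the paper: identifying $\NDET{F}(c)$ with the support of $\STOC{F}(c)$ is exactly the paper's observation that the closed set $\NDET{F}(c)$ is determined by which cylinders it meets, and that $\cyl{u}{z}\cap\NDET{F}(c)\neq\emptyset$ iff $\bigl(\STOC{F}(c)\bigr)(\cyl{u}{z})>0$. The only difference is that you spell out the compactness-of-$R^\ZZ$ plus continuity-of-$F$ argument (extraction of a convergent subsequence of witnesses) that the paper compresses into the parenthetical remarks ``continuity of $F$'' and ``compacity of the space.''
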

\begin{proof}
  Given some stochastic CA of explicit global function $F$, some configuration $c$ and some cylinder $\cyl{u}{z}$ we have 
\[\NDET{F}(c)\cap\cyl{u}{z}\neq\emptyset\Leftrightarrow \anevent_{c,\cyl{u}{z}}\neq\emptyset\Leftrightarrow\bigl(\STOC{F}(c)\bigr)(\cyl{u}{z})>0\]
by definition of $\NDET{F}$ and $\STOC{F}$. Since $\NDET{F}(c)$ is a closed set (continuity of $F$) it is determined by the set of cylinders intersecting it (compacity of the space). Hence $\NDET{F}(c)$ is determined by $\STOC{F}(c)$. The lemma follows. 
\end{proof}

\section{From Local to Global}
\label{sec:localglobal}

It is well-known in deterministic CA that determining global properties from the local representation is a generally hard problem. The purpose of this section is to show that the situation is even worse for stochastic CA.

\subsection{Equality of random maps: undecidability and explicit tools}
\label{sec:coupling}

\paragraph{An undecidable task for dimension~$2$ and higher.}
In the classical deterministic case, it is easy to determine whether two CA have the same global function. Equivalently determining whether two stochastic CA, as syntactical objects, have the same explicit global functions $F$ and $G$ is easy. However, given two stochastic CA which have possibly different explicit global function $F$ and $G$, it still may happen that $\NDET{F}=\NDET{G}$ or $\STOC{F}=\STOC{G}$, and determining whether this is the case turns out to be a difficult problem. In fact, Theorem~\ref{thm:deciding} states that these two decision problems are at least as difficult as the surjectivity problem of classical CA.

\begin{theorem}\label{thm:deciding}
  Let $\mathcal{P}_N$ (resp. $\mathcal{P}_S$) be the problem of deciding whether two given stochastic CA have the same non-deterministic (resp. stochastic) global function. The surjectivity problem of classical deterministic CA is reducible to both $\mathcal{P}_N$ and $\mathcal{P}_S$.
\end{theorem}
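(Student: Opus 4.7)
\begin{prOOf}{Plan}
The plan is to give a direct reduction by exhibiting, for any deterministic classical CA $F$ on state set $Q$, two stochastic CA $\AUTO{A}$ and $\AUTO{B}$ (sharing the state set $Q$) whose non-deterministic (resp.\ stochastic) global functions coincide if and only if $F$ is surjective. Since surjectivity of deterministic CA is undecidable (in dimension $2$ and higher), this establishes both reductions at once.

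\textbf{Construction.} Choose the random alphabet $R = Q$. Define $\AUTO{A}$ by $F_A(c,s) = s$ (the automaton simply copies the random source and ignores the configuration), using $V = \{0\}$ and $V' = \{0\}$. Define $\AUTO{B}$ by $F_B(c,s) = F(s)$ (the automaton ignores the current configuration and applies $F$ to the random source), using $V = \{0\}$ and $V'$ equal to the neighborhood of $F$. Both are well-defined stochastic CA in the sense of Definition~\ref{def:syntax}.

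\textbf{Case $\mathcal{P}_N$.} Unfolding the definitions, $\NDET{F_A}(c) = \{s : s \in Q^\ZZ\} = Q^\ZZ$ for every $c$, whereas $\NDET{F_B}(c) = \{F(s) : s \in Q^\ZZ\} = F(Q^\ZZ)$ for every~$c$. Thus $\NDET{F_A} = \NDET{F_B}$ iff $F(Q^\ZZ) = Q^\ZZ$, i.e.\ iff $F$ is surjective.

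\textbf{Case $\mathcal{P}_S$.} By construction $\STOC{F_A}(c) = \nu_Q$ (the uniform Bernoulli measure on $Q^\ZZ$) for every $c$, and $\STOC{F_B}(c)$ equals the pushforward $F_*\nu_Q$ for every~$c$. So $\STOC{F_A} = \STOC{F_B}$ iff $F_*\nu_Q = \nu_Q$, i.e.\ iff $F$ preserves the uniform Bernoulli measure. I would invoke the classical balance theorem (Hedlund in $1$D, Maruoka--Kimura more generally): a deterministic CA is surjective if and only if it preserves the uniform Bernoulli measure. This yields equivalence with surjectivity of $F$.

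\textbf{Main obstacle and remarks.} The only non-routine ingredient is the stochastic case, which hinges on the balance theorem linking surjectivity to preservation of the uniform measure; this is the place I expect the reader to need a pointer to the literature. Everything else is a direct rewriting of the definitions of $\NDET{F}$ and $\STOC{F}$ from Section~\ref{sec:SCA} on the two explicit global functions above. The fact that the construction is uniform (and polynomial) in the description of $F$ makes this a genuine many--one reduction.
\end{prOOf}
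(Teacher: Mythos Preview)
Your proposal is correct and follows essentially the same route as the paper: define the stochastic CA $F_B(c,s)=F(s)$ and observe that its non-deterministic image is $F(Q^\ZZ)$ and its stochastic image is the pushforward $F_\ast\nu_Q$, then invoke the balance theorem (surjectivity $\Leftrightarrow$ preservation of the uniform Bernoulli measure) for the stochastic case. The only cosmetic difference is that you make the comparison automaton $F_A(c,s)=s$ explicit, whereas the paper leaves the ``white noise'' reference CA implicit; this is arguably cleaner since $\mathcal{P}_N$ and $\mathcal{P}_S$ are stated as equality problems between \emph{two} given CA.
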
 
\newcommand{\proofthdeciding}{
\begin{proof}
  \extendstoalldimensions{}
  Consider a classical CA $F:Q^{\ZZ}\rightarrow Q^{\ZZ}$ and define $\mu_F$ as the image by $F$ of the uniform measure $\mu_0$ on $Q^{\ZZ}$:
  \[\mu_F(\cyl{u}{z}) = \nu\bigl(F^{-1}(\cyl{u}{z})\bigr)\]
  It is well-known that $F$ is surjective if and only if ${\mu_F=\nu}$ (this result is true in any dimension; the proof for dimension 1 is in \cite{Hedlund} and follows from \cite{maki} for higher dimensions, but we recommend \cite{Pivato09} for a modern exposition in any dimension).
  
  Now let us define the stochastic CA $\CAA = (Q,Q,V,V',g)$ such that ${G(c,s) = F(s)}$. With this definition, \CAA{} is such that, for all $c$, ${\STOC{G}(c) = \mu_F}$. Hence, ${\STOC{G}(c)}$ is the uniform measure for any $c$ if and only if $F$ is surjective. We have also ${\NDET{G}(c)=Q^\ZZ}$ for all $c$ if and only if $G$ is surjective. The theorem follows since \CAA{} is recursively defined from $F$. 
\end{proof}
}\proofthdeciding

Surjectivity of classical CA is an undecidable property in dimension 2 and higher \cite{Kari94}.  As an immediate corollary, we get undecidability of equality of global maps in dimension $2$ and higher.

\begin{corollary}
  Fix ${d\geq 2}$. Problems $\mathcal{P}_N$ and $\mathcal{P}_S$ are undecidable for CA of dimension $d$.
\end{corollary}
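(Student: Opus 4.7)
The plan is simply to combine the many-one reduction of Theorem~\ref{thm:deciding} with Kari's classical undecidability result \cite{Kari94}. First, I would note that the construction in the proof of Theorem~\ref{thm:deciding}, which takes a classical deterministic CA $F$ and outputs a stochastic CA $\CAA$ with ${G(c,s)=F(s)}$, is uniformly computable from the syntactic description of $F$ and works in every dimension; the underlying characterization of surjectivity (the image of the uniform measure is uniform, and dually the image under $\NDET{G}$ is all of $Q^{\ZZ^d}$) is dimension-independent, as flagged in the proof.

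Next, I would exhibit a fixed reference stochastic CA $\CAB$ on the same alphabet $Q$ whose stochastic (resp. non-deterministic) global function is constantly the uniform measure $\nu_Q$ (resp. constantly $Q^{\ZZ^d}$). The simplest choice is $\CAB=(Q,Q,\{0\},\{0\},h)$ with $h(q,r)=r$: independently of the input configuration, its explicit global function $H$ outputs the fresh uniform $R$-configuration verbatim, so $\STOC{H}(c)=\nu_Q$ and $\NDET{H}(c)=Q^{\ZZ^d}$ for every $c$. Combined with Theorem~\ref{thm:deciding}, this yields the equivalences: $F$ is surjective iff $\STOC{G}=\STOC{H}$, iff $\NDET{G}=\NDET{H}$.

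Finally, any decision procedure for $\mathcal{P}_S$ or $\mathcal{P}_N$ would, when applied to the pair $(\CAA,\CAB)$, decide surjectivity of $F$ in the ambient dimension. Since Kari~\cite{Kari94} showed that surjectivity of classical CA is undecidable as soon as $d\geq 2$, both problems inherit this undecidability. The statement is thus an immediate corollary; no real obstacle arises beyond checking that the reduction of Theorem~\ref{thm:deciding} is dimension-uniform and that the trivial CA $\CAB$ above has the claimed constant global behaviors.
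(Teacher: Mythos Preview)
Your proposal is correct and follows essentially the same route as the paper: combine the reduction of Theorem~\ref{thm:deciding} with Kari's undecidability of surjectivity in dimension ${d\geq 2}$. The only difference is cosmetic: the paper leaves the second stochastic CA of the $\mathcal{P}_N$/$\mathcal{P}_S$ instance implicit (it is the white-noise CA), whereas you spell it out explicitly as $\CAB=(Q,Q,\{0\},\{0\},h)$ with $h(q,r)=r$.
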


\paragraph{Explicit tools for (dis)proving equality.}
Even if testing the equality of the non-deterministic or stochastic dynamics of two stochastic CA is undecidable for dimension $2$ and higher, Theorem~\ref{thm:coupling} states that equality, when it holds, can always be certified in terms of a \emph{stochastic coupling}. Indeed the stochastic coupling, by matching their two source of randomness, serves as a witness of the equality of the stochastic CA. This provides us with a very useful technique, because the existence of such a coupling is easy to prove or disprove in many concrete examples. Again the result heavily relies on the continuity of the explicit global function~$F$.

Let us first recall the standard notion of coupling.
\begin{definition}
  Let $\mu_1\in\Mes{Q_1^\ZZ}$ and $\mu_2\in\Mes{Q_2^\ZZ}$. A coupling of $\mu_1$ and $\mu_2$ is a measure $\gamma\in \Mes{Q_1^\ZZ\times Q_2^\ZZ}$ such that for any measurable sets $\anevent_1$ and $\anevent_2$, ${\gamma(\anevent_1\times Q_2^\ZZ)=\mu_1(\anevent_1)}$ and ${\gamma( Q_1^\ZZ\times\anevent_2)=\mu_2(\anevent_2)}$.
\end{definition}
Concretely, a coupling couples two measures so that each is recovered when the other is ignored. The motivation in defining a coupling is to bind the two distributions in order to prove that they induce the same kind of behavior: for instance, one can easily couple the two uniform measures over $\{1,2\}$ and $\{1,2,3,4\}$ so that with probability~$1$, both numbers will have the same parity ($\gamma$ gives a probability $1/4$ to each pair $(1,1)$, $(2,2)$, $(1,3)$ and $(2,4)$ and $0$ to the others). This demonstrates that the parity function is identically distributed in both cases.  

Theorem~\ref{thm:coupling} states that the dynamics of two stochastic CA are identical if and only if there is a  coupling of their random configurations so that their stochastic global functions become almost surely identical. This is one of our main results. 
\begin{definition}
  Two stochastic cellular automata, $\CAA_1=(Q,R_1,V_1,V_1',f_1)$ and $\CAA_2=(Q,R_2,V_2,V_2',f_2)$, with the same set of states~$Q$ are \emph{coupled} on configuration $c\in Q^\ZZ$ by a measure $\gamma\in\Mes{R_1^\ZZ\times R_2^\ZZ}$ if 
  \begin{enumerate}
  \item $\gamma$ is a coupling of the uniform measures on $R_1^\ZZ$ and $R_2^\ZZ$;
  \item $\gamma\bigl(\{(s_1,s_2)\in R_1^\ZZ\times R_2^\ZZ : F_1(c,s_1)=F_2(c,s_2)\}\bigr) = 1$, i.e. $F_1$ and $F_2$ produce almost surely the same image when fed with the $\gamma$-coupled random sources.
  \end{enumerate}
  
\end{definition}
Note that the set of pairs $(s_1,s_2)$ defined above is measurable because it is closed ($F_1$ and $F_2$ are continuous).

\begin{theorem}\label{thm:coupling}
  Two stochastic CA with the same set of states have the same stochastic global function if and only if, on each configuration $c$, they are coupled by some measure $\gamma_c$ (which depends on~$c$).
\end{theorem}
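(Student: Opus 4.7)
The ``if'' direction should be a one-liner: given $\gamma_c$, I would express $\bigl(\STOC{F_j}(c)\bigr)(\cyl{u}{z})$ as the $\gamma_c$-measure of the event ``$F_j(c,s_j)\in\cyl{u}{z}$'' using the marginal condition, and then use the almost-sure equality $F_1(c,s_1)=F_2(c,s_2)$ to identify the two events, giving equality of the two measures on every cylinder.

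For the harder ``only if'' direction, fix $c$ and assume $\STOC{F_1}(c)=\STOC{F_2}(c)$. My plan is a finite-approximation-and-compactness argument. For each $n$, I would look at the finite windows of random symbols of length $2n+2k+1$ that determine the output on $[-n,n]$, partition them according to their output word $u\in Q^{2n+1}$, and observe that the two output-$u$ fibres have equal relative sizes, both equal to the common probability $\bigl(\STOC{F_1}(c)\bigr)(\cyl{u}{-n})$. I would then define a finite coupling $\gamma_n$ by fibrewise tensor product (sample $u$ from this common distribution, then draw an independent uniform preimage on each side), check by elementary counting that both marginals are uniform, and extend $\gamma_n$ to a probability measure $\widetilde\gamma_n$ on $R_1^\ZZ\times R_2^\ZZ$ by filling in independent uniform coin tosses on all remaining cells. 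By construction $\widetilde\gamma_n$ has marginals $\nu_{R_1}$ and $\nu_{R_2}$ and is concentrated on pairs whose outputs agree on $[-n,n]$.

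I would finish by extracting a weakly convergent subsequence $\widetilde\gamma_{n_i}\to\gamma_c$ using compactness of $\Mes{R_1^\ZZ\times R_2^\ZZ}$. Marginals pass to weak limits, so $\gamma_c$ remains a coupling of $\nu_{R_1}$ and $\nu_{R_2}$. For the almost-sure equality of images, the crucial observation is that for each fixed $m$ the set
\[K_m = \bigl\{(s_1,s_2):F_1(c,s_1)_{[-m,m]}=F_2(c,s_2)_{[-m,m]}\bigr\}\]
is clopen (by locality of $F_1$ and $F_2$), so its measure is continuous under weak convergence; since $\widetilde\gamma_n(K_m)=1$ for every $n\geq m$, I obtain $\gamma_c(K_m)=1$, and countable intersection yields $\gamma_c\bigl(\bigcap_m K_m\bigr)=1$, which is precisely the condition that $F_1(c,\cdot)$ and $F_2(c,\cdot)$ agree $\gamma_c$-almost surely.

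The hard part will be the passage to the limit. The finite couplings $\gamma_n$ do not form a consistent projective system (restricting $\gamma_{n+1}$ to the window of $\gamma_n$ need not yield $\gamma_n$), so a Kolmogorov-extension-style gluing is out of reach. Weak compactness of probability measures on the Cantor space $R_1^\ZZ\times R_2^\ZZ$, combined with the fact that locality makes each equality constraint $K_m$ clopen rather than merely closed (so its measure is weak-continuous, not just upper-semicontinuous), is exactly the ingredient that turns the approximate couplings into a genuine one in the limit; the same continuity-and-locality argument underlies Facts~\ref{fac:iterates} and~\ref{fac:stocndet}.
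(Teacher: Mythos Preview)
Your proposal is correct and follows the same overall strategy as the paper: build finite-window couplings by matching the fibres over the common output distribution, extend to measures on $R_1^\ZZ\times R_2^\ZZ$, extract a limit point by compactness of $\Mes{R_1^\ZZ\times R_2^\ZZ}$, and use locality (the sets $K_m$ are clopen) to pass the agreement constraint through the weak limit. The only cosmetic differences are that the paper couples within each fibre via an interval-partition matching rather than your independent-uniform product, and pads outside the window with a fixed symbol $0_i$ rather than fresh uniform noise; your choices are arguably cleaner since they give the exact uniform marginals at every finite stage instead of only in the limit.
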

\noindent {\em Outline of the proof.} We fix a configuration $c$. By continuity of the explicit global functions, we construct a sequence of partial couplings $(\gamma_c^n)$ matching the random configurations of finite support of radius $n$. We then extract the coupling $\gamma_c$ from $(\gamma_c^n)$ by compacity of $\Mes{R_A^\ZZ\times R_B^\ZZ}$.

{
\begin{proof}
\extendstoalldimensions{}
 First, if $\CAA_1=(Q,R_1,V_1,V_1',f_1)$ and $\CAA_2=(Q,R_2,V_2,V_2',f_2)$ are coupled by $\gamma_c$ on configuration $c\in Q^\ZZ$, consider for any cylinder $\cyl{u}{z}$ the sets
  \begin{align*}
    \anevent_1 &= \{s\in R_1^\ZZ : F_1(c,s)\in\cyl{u}{z}\}\\
    \anevent_2 &= \{s\in R_2^\ZZ : F_2(c,s)\in\cyl{u}{z}\}\\
    X &= \{(s_1,s_2) \in R_1^\ZZ\times R_2^\ZZ : F_1(c,s_1)=F_2(c,s_2)\}
  \end{align*}
  Then, by the property of the coupling by $\gamma_c$,  we have 
\[\bigl(\STOC{F_1}(c)\bigr)(\cyl{u}{z}) = \nu_1(\anevent_1) = \gamma_c(\anevent_1\times R_2^\ZZ)\]
 where $\nu_1$ is the uniform measure on $R_1^\ZZ$. But $\gamma_c(\anevent_1\times R_2^\ZZ) = \gamma_c\bigl((\anevent_1\times R_2^\ZZ) \cap X\bigr)$ since $\gamma_c(X)=1$. Symmetrically we have \[\bigl(\STOC{F_2}(c)\bigr)(\cyl{u}{z}) = \gamma_c\bigl((R_1^\ZZ\times \anevent_2) \cap X\bigr).\]
 But, by definition of sets $\anevent_1$, $\anevent_2$ and $X$, we have ${R_1^\ZZ\times \anevent_2 \cap X = \anevent_1\times R_2^\ZZ \cap X}$. We conclude that $\STOC{F_1}=\STOC{F_2}$.
 For the other direction of the theorem, suppose $\STOC{F_1}=\STOC{F_2}$ and fix some configuration $c$. We denote by $\mu$ the measure ${\STOC{F_1}(c)=\STOC{F_2}(c)}$. Without loss of generality we can suppose that $\CAA_1$ and $\CAA_2$ have same radii $r$: $r_1=r_2=r$. We construct a sequence $(\gamma^n)$ of measures from which we can extract a limit point (by compacity of the space of measures) which is a valid coupling of $\CAA_1$ and $\CAA_2$ on configuration $c$.
\newcommand\ccyl[1]{[{#1}]}
To simplify the proof we focus on centered cylinders: for any word $w$ of odd length, we denote by $\ccyl{w}=\cyl{w}{z_w}$ where ${z_w = -\frac{|w|-1}{2}}$.
Let's fix $n$. For any word $u\in Q^{2n+1}$ we define:
\begin{align*}
  S_u^1 &= \{s\in R_1^\ZZ : F_1(c,s)\in\ccyl{u}\}\\
  S_u^2 &= \{s\in R_2^\ZZ : F_2(c,s)\in\ccyl{u}\}
\end{align*}
$F_1$ and $F_2$ being of radius $k$ we can write $S_u^i$ as a finite union of centered cylinders of length ${2(n+k)+1}$:
\[S_u^i = \bigcup_{v\in P_u^i}\ccyl{v}\]
where ${P_u^i\subseteq R_i^{2(n+k)+1}}$. Define the following partition $I_u^i$ of the real interval $[0,1)$ by:
\newcommand\ord[1]{\mathsf{rank}(#1)}
\[I_u^i(v) = \left[\frac{\ord{v}}{\# P_u^i};\frac{\ord{v}+1}{\# P_u^i}\right[\]
where $\ord{v}\in\{0,\ldots,\# P_u ^i - 1\}$ is the rank of $v$ in some arbitrarily chosen total ordering of $P_u^i$ (the lexicographical order for instance).
Since the sets $P_u^i$ form a partition of ${R_i^{2(n+k)+1}}$ when $u$ ranges over all words of $Q^{2n+1}$, we have for any $v\in R_i^{2(n+k)+1}$:
\[|I_u^i(v)| = \frac{1}{\# P_u^i} = \frac{\nu_i(\ccyl{v})}{\mu(\ccyl{u})}\]
(recall that $\nu_i$ stands for the uniform measure over $R_i^\ZZ$).
Now, for every ${v^1\in R_1^{2(n+k)+1}}$ and ${v^2\in R_2^{2(n+k)+1}}$, we construct $\gamma^n$ as:
\[\gamma^n(\ccyl{v^1},\ccyl{v^2}) =
\begin{cases}
  |I_u^1(v^1)\cap I_u^2(v^2)|\cdot\mu_Q(\ccyl{u}) &\text{ if $\exists u$ s.t. ${v^i\in P_u^i}$ for both $i=1,2$}\\
  0 &\text{ otherwise.}
\end{cases}\]
Furthermore, if $0_i$ is a distinguished element of $R_i$, we extend the definition of $\gamma^n$ to any pair $v^1\in R_1^{2m+1}$ and $v^2\in R_2^{2m+1}$ with ${m\geq n+k}$ by:
\[\gamma^n(\ccyl{v^1},\ccyl{v^2}) =
\begin{cases}
  \gamma^n(\ccyl{w^1},\ccyl{w^2}) &\text{ if $v^i = 0_i^{m-n-k}w^i0_i^{m-n-k}$ for $i=1,2$}\\
  0 &\text{ else.}
\end{cases}
\]
By $\sigma$-additivity $\gamma^n$ is thus defined on any cylinder and by extension theorem is a well-defined measure. Now by construction, we have for any $v^1\in R^{2(n+k)+1}$:
\[\gamma^n(\ccyl{v^1},R_2^\ZZ) = |I_u^1(v^1)|\cdot\mu_Q(\ccyl{u})\]
for some $u$ such that ${v^1\in P_u^1}$. Hence, ${\gamma^n(\ccyl{v^1},R_2^\ZZ) =\nu_1(\ccyl{v^1})}$. By $\sigma$-additivity of $\gamma^n$ and $\mu$, this equality holds for any $v^1\in R^{2m+1}$ with $m\leq n+k$. Symmetrically we have ${\gamma^n(R_1^\ZZ,\ccyl{v^2}) =\mu_{U_2}(\ccyl{v^2})}$ for any $v^1\in R^{2m+1}$.
Moreover, by definition, $\gamma^n(\ccyl{v^1},\ccyl{v^2}) = 0$ if there is no $u$ such that ${v^i\in P_u^i}$ for $i=1,2$. We deduce that the set:
\[X_n = \bigcup_{u\in Q^{2n+1}} S_u^1\times S_u^2\]
has measure $1$. More precisely, since $X_{n+1}\subseteq X_n$, for any ${m\leq n}$, ${\gamma^n(X_m)=1}$.\\
To conclude the proof, let $\gamma$ be any limit point of the sequence $(\gamma^n)_n$. By the definition of the distance on the space of measures, we have:
\begin{enumerate}
\item $\forall m,\forall w\in R_1^{2m+1}, \gamma(\ccyl{w},R_2^\ZZ) = \nu_1(\ccyl{w})$ and symmetrically for the $R_2$ component, hence $\gamma$ is a coupling of uniform measure on $R_1^\ZZ$ and $R_2^\ZZ$;
\item $\forall n,\gamma(X_n)=1$ hence $\gamma(\cap_n X_n) = 1$ where
\[\bigcap_n X_n = X = \{(s_1,s_2) \in R_1^\ZZ\times R_2^\ZZ : F_1(c,s_1)=F_2(c,s_2)\}\]
\end{enumerate}
We deduce that $\CAA_1$ and $\CAA_2$ are coupled on $c$ by measure $\gamma$.
\end{proof}
}

Notice that the proof of this theorem is non-constructive (recall that equality of stochastic global maps is undecidable in dimension~$2$ and higher). Moreover, it is easy to get convinced on a simple example that the coupling must depend on the configuration. Consider the two following automata with states $Q=R=\{0,1\}$ and neighborhoods $V=V'=\{0\}$: \CAA~with explicit global function $F(c,s)=s$ and \CAB\/~with explicit global function $G(c',s')=c'+s'\mod 2$. Clearly, both \CAA\/ and \CAB\/ define the same blank noise CA and the coupling proving this fact is defined for all $z\in\ZZ$ and all $a,b\in\{0,1\}$ by $\gamma_c(\cyl az,\cyl bz) = 1/2$ if and only if $a = b+c_z \mod 2$, and $=0$ otherwise. This coupling demonstrates indeed that $\gamma_c\bigl(\{(s,s'):F(c,s) = G(c,s')\}\bigr)=1$ yielding that the dynamics are identical; but note that $\gamma_c$ must depend on~$c$.

\DELETE{\begin{proof}(Sketch)
Given a configuration $c$, the principle is for every $n$ and every word $u$ of length $2n+1$ to build an sequence of increasing finite support couplings $(\gamma_c^n)$ where $\gamma_c^n$ matches the two finite subsets of finite random strings (thanks to the locality of \CAA\/ and \CAB)  that yields the word $u$ centered on $0$ in $\CAA$ and $\CAB$. We then conclude by extracting  from $(\gamma_c^n)$ a subsequence converging to some measure $\gamma_c$ thanks to the compacity of the metric space $\Mes{R_A^\ZZ\times R_B^\ZZ}$. As pointed out earlier, this proof is non-constructive.
\end{proof}}

\subsection{Other Undecidable Properties}
\label{sec:otherundeci}
Of course, stochastic CA inherit many undecidable properties from deterministic CA since they are a generalization of them. However, with the stochastic formalism new global properties can be considered together with their associated decision problem. For instance, we say that a CA $F$ is \textit{noisy} if it may reach any configuration from any configuration, i.e. if ${\NDET{F}(c)=Q^\Z}$ for all $c\in Q^\Z$. 
Surprisingly, this basic property is undecidable in dimension two.
 
\begin{proposition}
 It is undecidable to determine whether a given CA of dimension $d\geq2$ is noisy.
\end{proposition}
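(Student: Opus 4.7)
The plan is to reduce from the surjectivity problem of classical deterministic cellular automata, which is undecidable in dimensions $d \geq 2$ by Kari's theorem~\cite{Kari94}. The reduction will essentially reuse the construction already employed in the proof of Theorem~\ref{thm:deciding}: given a deterministic CA we build a stochastic CA whose explicit global function ignores its configuration input and applies the deterministic rule to the random source.

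More precisely, given a classical deterministic CA $F: Q^{\ZZ^d} \to Q^{\ZZ^d}$ with some neighborhood $V$, I would define a stochastic CA $\CAA = (Q, Q, \{0\}, V, g)$ whose local rule $g$ reads only the random symbols and applies the local rule of $F$ to them. The associated explicit global function is then $G(c, s) = F(s)$, independent of $c$. This construction is clearly computable from the description of $F$.

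Now I would compute the non-deterministic global function of \CAA: for every configuration $c$,
\[\NDET{G}(c) = \{G(c, s) : s \in Q^{\ZZ^d}\} = \{F(s) : s \in Q^{\ZZ^d}\} = F(Q^{\ZZ^d}).\]
Hence \CAA{} is noisy (i.e.\ $\NDET{G}(c) = Q^{\ZZ^d}$ for every $c$) if and only if $F(Q^{\ZZ^d}) = Q^{\ZZ^d}$, that is, if and only if $F$ is surjective. Since surjectivity of classical CA is undecidable in dimension $d \geq 2$, the noisiness problem inherits this undecidability.

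There is no real obstacle here: the construction $G(c,s) = F(s)$ has already been shown in Theorem~\ref{thm:deciding} to transfer surjectivity of $F$ to a property of the derived stochastic CA, and noisiness is precisely the non-deterministic analog of ``the image is everything'', so the reduction is direct. The only point worth double-checking is that $\NDET{G}(c)$ indeed equals $F(Q^{\ZZ^d})$ for every $c$, which follows at once from $G$'s independence on its first argument.
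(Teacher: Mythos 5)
Your proof is correct and follows exactly the paper's route: the paper's own argument simply invokes the construction $G(c,s)=F(s)$ from Theorem~\ref{thm:deciding} and observes that $\NDET{G}(c)=Q^{\ZZ^d}$ for all $c$ if and only if $F$ is surjective, which is precisely your reduction. Nothing is missing.
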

\begin{proof}
 Using the construction of Theorem~\ref{thm:deciding}, we show that the surjectivity problem reduces to testing noisiness. Hence it is undecidable starting from dimension two \cite{Kari94}.
\end{proof}
 
By contrast, testing whether a CA is deterministic or not is decidable in any dimension.
 
\begin{proposition}\label{prop:detdec}
 A CA $F$ of local function ${f:Q^\rho\times R^{\rho'}\rightarrow Q}$ is deterministic if and only if ${f(u,v)=f(u,v')}$ for all $u$,$v$ and $v'$. Hence testing whether a CA is deterministic is decidable in any dimension.
\end{proposition}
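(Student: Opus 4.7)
The statement bundles a characterization of determinism with a decidability claim, and both are essentially unfoldings of the definition given in Section~\ref{sec:SCA}. I would organize the proof in two short steps, handling the equivalence first and reading off decidability as a corollary.

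For the equivalence, recall that by definition $\CAA$ is deterministic exactly when $f$ does not depend on its second argument. The plan is to show that this non-dependence is the same thing as the pointwise condition stated in the proposition. The forward implication is immediate: if $f$ ignores its second argument, then for every fixed $u \in Q^\rho$ and any $v, v' \in R^{\rho'}$, both $f(u,v)$ and $f(u,v')$ equal the common value. For the reverse implication, assuming $f(u,v) = f(u,v')$ for all $u,v,v'$, I would fix any $v_0 \in R^{\rho'}$ and define $g(u) := f(u,v_0)$; the hypothesis then yields $f(u,v) = f(u,v_0) = g(u)$ for every $v$, so $f$ factors through the projection onto its first coordinate, which is exactly non-dependence on the second.

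For decidability, the condition to verify ranges over the finite set $Q^\rho \times R^{\rho'} \times R^{\rho'}$, which is part of the finite syntactic data of $\CAA$. Hence a brute-force enumeration suffices: for each $u \in Q^\rho$ compute $f(u,v)$ as $v$ varies over $R^{\rho'}$ and check that the result is constant. This runs in time bounded by $|Q|^\rho \cdot |R|^{2\rho'}$ table lookups, i.e.\ polynomial in the size of the finite description of~$\CAA$ (and independent of the dimension).

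There is no real obstacle, and that is precisely the point of the proposition: in sharp contrast with the undecidable problems of Section~\ref{sec:coupling} (equality of stochastic global maps, noisiness, surjectivity, etc.), whose difficulty stems from having to reason about the infinite-configuration dynamics, determinism is a purely local property of the finite rule table. The proof therefore never needs to leave the finite setting, and neither the dimension nor any measure-theoretic argument plays a role.
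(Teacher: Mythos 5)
Your argument is correct for the statement as literally defined in Section~\ref{sec:SCA} (``$\CAA$ is deterministic if $f$ does not depend on its second argument''), under which reading the equivalence is indeed a tautology and only the finite brute-force check carries content; your decidability step matches the paper's. However, the paper's own proof takes a different (and slightly more substantive) route on the converse: rather than merely restating non-dependence, it treats determinism as a property of the \emph{global} explicit function --- $F(c,s)$ independent of $s$ --- which is how ``deterministic'' is actually used elsewhere (e.g.\ in the proof of Theorem~\ref{thm:injec}). Under that reading the only non-trivial direction is that a single local disagreement $f(u,v)\neq f(u,v')$ cannot be invisible at the global level, and the paper handles it by constructing configurations $c\in Q^\ZZ$ and $s_1,s_2\in R^\ZZ$ that place $u$ on the $V$-neighborhood and $v$, respectively $v'$, on the $V'$-neighborhood of some cell $z$, so that $F(c,s_1)_z\neq F(c,s_2)_z$. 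Your proof never leaves the rule table and so does not establish this local-to-global step; it is a one-line addition, but it is the only point in the proposition where anything beyond unfolding a definition happens, and you should include it if ``deterministic'' is to mean determinism of the dynamics rather than a syntactic property of $f$.
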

\begin{proof}
 If the condition on $f$ is verified then clearly $F$ is deterministic. Conversely, if there are some $v$ and $v'$ such that ${f(u,v)\not=f(u,v')}$ then it is straightforward to construct infinite configurations $c$ in $Q^\Z$ and $s_1$ and $s_2$ in $R^\Z$ such that ${F(c,s_1)\not=F(c,s_2)}$ contradicting determinism.
\end{proof}
 
The two undecidable problems presented so far (equality of global maps and noisiness) were shown undecidable for general stochastic CA starting from dimension 2. Section \ref{sec:dimensionone} below shows how these problems become decidable when we restrict to one-dimensional CA. We now give an example of a basic problem which is undecidable starting from dimension 1.
 
\paragraph{Pattern Probability Threshold Problem.} Given a SCA $F$, a language $L$ over $Q_F$ and a threshold function $\vartheta:L\rightarrow [0,1]$, the problem is to determine whether there is an initial configuration $c$ and a word $u\in L$ such that the image of $c$ by $F$ matches \emph{in one step} the word $u$ at position $0$ with a probability above $\vartheta(u)$. We will typically ask $\vartheta$ to decrease exponentially with the length of $u$. 
The problem is parametrized by $L$ and $\vartheta$ and is too general without additional restriction. In the sequel we will focus on the following restriction of this problem where the patterns to be matched are of the form $x\cdot y^+\cdot z$ and $\vartheta$ is exponentially decreasing in the length of the pattern, called \PPT{}:
 
\begin{definition}
The problem \PPT{} (Pattern Probability Threshold) is the following:
 \begin{description}
 \item[\quad Input: ] A stochastic CA $F$ over a state set $Q$ with $|Q|\geq4$, three distinguished distinct symbols $x,y,z\in Q$, and a threshold function $\vartheta: \mathbb N \rightarrow [0,1]$
 \item[\quad Question: ] Is there a configuration $c$ and
   $n\geq 1$ such that
   ${\bigl(\STOC{F}(c)\bigr)(\cyl{x\cdot y^n\cdot z}{0})> \vartheta(n)}$?
 \end{description}
\end{definition}
 
A threshold function $\vartheta$ is \emph{computationally superexponential} if:
\begin{enumerate}
\item $\vartheta(n) = \omega(\lambda^n)$ for all $\lambda <1$;
\item there is an algorithm that given any $0<\mu<1$ outputs a $K$ such that for all $n> K$,
  $\vartheta(n) > \mu^n$
\end{enumerate}
\begin{theorem}~%
 \label{thm:pptundeci}
 \begin{enumerate}
 \item The problem \PPT{} is undecidable for a stochastic CA $F$ when
   $\vartheta(n) = \frac12 \frac{2^{n}}{|R_F|^n}$, \emph{even when restricted to
     dimension 1}.
 \item If the threshold $\vartheta$ is non-increasing and
   computationally superexponential and with a computable limit, then
   \PPT{} is decidable for stochastic CA in dimension~$1$.
 \end{enumerate}
\end{theorem}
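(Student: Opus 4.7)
My plan is to reduce from the undecidable threshold-$\tfrac12$ emptiness problem for probabilistic finite automata (PFA), due to Paz. Given a PFA $\mathcal P$ whose transition probabilities have a common denominator~$2^k$, I build a 1D stochastic CA $F$ with binary random alphabet $|R_F|=2$, so that the threshold becomes $\vartheta(n)=\tfrac12\cdot\tfrac{2^n}{|R_F|^n}=\tfrac12$, matching Paz's threshold exactly. The configuration~$c$ encodes a candidate input word~$w$ of $\mathcal P$ between two boundary markers, and a block of~$k$ consecutive cells simulates one time step of $\mathcal P$: their $k$ random bits jointly sample one of the $2^k$ possible transitions uniformly. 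The local rule verifies, on its bounded neighborhood, that the visible block-to-block transition is consistent with $\mathcal P$'s transition on the corresponding letter of~$w$, and that an accepting PFA state is reached at the right boundary; the cell then outputs~$y$ on success, $x$ or $z$ at the extremities of~$w$, and a distinguished failure symbol otherwise. By construction, $\bigl(\STOC{F}(c)\bigr)\bigl(\cyl{x y^{k|w|} z}{0}\bigr)$ equals the PFA's acceptance probability on~$w$, while a blocking scheme ensures that pattern lengths not of the form~$k|w|$ carry probability~$0$. Hence the existence of~$c$ and~$n$ with pattern probability above $\tfrac12$ is equivalent to $\mathcal P$ accepting some word above threshold $\tfrac12$.

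\textbf{Part 2 (Decidability in dimension 1).} Fix a 1D SCA $F$ of radius~$k$ and $\vartheta$ as in the statement. By locality, $\bigl(\STOC{F}(c)\bigr)(\cyl{xy^n z}{0})$ depends only on the window $c_{-k}\cdots c_{n+1+k}$ and is a rational of denominator~$|R_F|^{n+2+2k}$, so the maximum $p_n^*:=\max_c\bigl(\STOC{F}(c)\bigr)(\cyl{xy^n z}{0})$ is a computable rational for each~$n$. To bound the range of~$n$ that needs to be tested, I establish a submultiplicative upper bound: by packing $\lfloor n/(2k+1)\rfloor$ positions into $[1,n]$ at distance strictly greater than~$2k$, the outputs of $F(c)$ at these positions are mutually independent given~$c$ (their random-bit windows are pairwise disjoint), so $p_n^*\leq\beta^{\lfloor n/(2k+1)\rfloor}$ where $\beta:=\max_u\Pr[f(u,\cdot)=y]$ is a computable rational. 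If $\beta<1$, one gets a computable decay rate $\lambda=\beta^{1/(2k+1)}<1$, and the superexponential property of $\vartheta$ then yields a computable $N$ with $\vartheta(n)>\lambda^n\geq p_n^*$ for all $n>N$; the problem then reduces to testing the finitely many queries $n\leq N$, each individually decidable. The degenerate case $\beta=1$ is handled by applying the same argument to larger blocks $\beta_m:=\max_c\Pr\bigl[F(c)_0=\cdots=F(c)_{m-1}=y\bigr]$: either some~$m$ witnesses $\beta_m^{1/m}<1$ (recovering the exponential bound), or one identifies from the finite description of~$f$ a configuration producing arbitrarily long deterministic $y$-runs, and comparison with the computable limit $\vartheta_\infty$ decides the instance.

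\textbf{Main obstacle.} I expect the Part~1 construction to be the most delicate step: one must engineer a local rule that exactly reproduces the PFA's acceptance probability using only binary randomness per cell, while simultaneously killing all probability mass on "spurious" pattern lengths not matching the block size, so that the $\exists n$ quantifier of \PPT{} cannot be satisfied by unintended executions. A careful blocking scheme with dedicated delimiter states is the standard way to avoid this leakage, but the bookkeeping to keep all sub-block patterns strictly at probability~$0$ needs care. In Part~2, the subtle point is the near-deterministic regime $\beta=1$ where the basic submultiplicative bound degenerates: termination must be guaranteed by combining a finite syntactic analysis of the deterministic neighborhoods of~$f$ with the computable limit $\vartheta_\infty$ provided by the hypothesis.
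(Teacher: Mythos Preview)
Your Part~1 has a genuine locality gap. You want the $k$ random bits of a block to directly sample the PFA transition at that step, so that the pattern probability equals $\mathbb{P}_{\mathcal P}(w)$ exactly. But then the PFA state reached after block~$j$ depends on the random bits of \emph{all} blocks $1,\ldots,j$, and no cell of a bounded-radius CA can access that information. In particular, the cell that must output $z$ versus a failure symbol at the right boundary has to decide whether the final state is accepting, which is a function of the entire random sequence --- not locally computable. If instead each block's random bits encode a \emph{guessed} current state and the rule checks consistency between adjacent guesses (the only way to make verification local), you unavoidably pay a factor $1/|Q_{\mathcal P}|$ per block for the guess to match, and the pattern probability becomes $\mathbb{P}_{\mathcal P}(w)/|Q_{\mathcal P}|^{|w|}$, not $\mathbb{P}_{\mathcal P}(w)$. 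This is exactly the paper's route: it takes $R_F=Q_{\mathcal P}\times\{1,\ldots,m\}$ so that each random symbol is a pair (guessed state, transition sample), and the local rule checks that the guessed state at a cell is the image of the guessed state at the previous cell under the sampled transition. The exponential loss $1/|Q_{\mathcal P}|^n$ is then absorbed by the shape of the threshold: with $m=2$ one gets $|R_F|=2|Q_{\mathcal P}|$ and $\vartheta(n)=\tfrac12\cdot 2^n/|R_F|^n=\tfrac{1}{2}|Q_{\mathcal P}|^{-n}$, so the comparison collapses to $\mathbb{P}_{\mathcal P}(w)>\tfrac12$. Your choice $|R_F|=2$ forces $\vartheta\equiv\tfrac12$ and leaves no room for the guessing factor, so the reduction cannot go through as stated.

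Your Part~2 is close in spirit to the paper but the $\beta=1$ branch is underspecified. Rather than iterating over block sizes $\beta_m$ (for which you give no termination bound and no treatment of the $x,z$ boundaries), the paper argues directly on a putative witness word $u$: mark the positions whose $(2k{+}1)$-neighborhood in $u$ is non-deterministic for $f$. Either these are dense enough to force an exponential upper bound on the probability (hence $n\leq K$ by superexponentiality of $\vartheta$), or some maximal deterministic stretch of $u$ is long enough to contain a pigeonhole loop $ala$ that can be pumped without changing the probability. This normalizes any witness to the shape $\gamma\,a(la)^q\,\gamma'$ with $|\gamma|+|\gamma'|\leq K$, $|la|\leq|Q|^{2k+1}$ and $ala$ deterministic; one then enumerates this \emph{finite} family, computes the (loop-independent) probability of each, and compares the maximum once with the computable limit $\vartheta_\infty$. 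Your sketch would need the same pigeonhole/loop argument both to bound the search over $m$ and to handle the $x$ and $z$ endpoints correctly.
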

 
The proof of this theorem relies on the undecidability of the existence of a word recognized with a probability higher than some fixed threshold (namely, $\frac12$ as we will see later on) by a given probabilistic finite automata $\mathcal A$ \cite{GimbertO10}. The key is to encode into a stochastic CA the complete recognition process of the word written in the initial configuration $c$ by $\mathcal A$ in one single step of the SCA, so that the word $u=c_1\ldots c_n$ is recognized by $\mathcal A$ if and only if the pattern $x\cdot y^{n}\cdot z$ appears in the image configuration at position $0$. We will ensure that this happens with probability exactly $\Pr\{\text{$u$ is recognized by $\mathcal A$}\}/|Q|^n$ and thus the pattern $x\cdot y^n \cdot z$ appears in the image configuration of $c$ with probability at least $1/2|Q|^n$ if and only if $u$ is recognized by $\mathcal A$ with probability at least $\frac12$, proving the undecidability of \PPT{}.

\newcommand\fsta{\mathcal F}
\paragraph{Encoding probabilistic finite automata into CA.} A \emph{probabilistic finite automaton} $\mathcal{A}$ consists in a quintuple ${(Q,A,(M_a)_{a\in A},I,\fsta)}$ where:
\begin{itemize}
\item $A$ is the finite alphabet and $Q$ the finite set of states;
\item $I\in Q$ is the initial state and $\fsta\subset Q$ the set of final states;
\item for each $a\in A$, ${M_a\in [0,1]^{Q\times Q}}$ is a stochastic matrix giving the transition probabilities:
 \begin{itemize}
 \item for each $a$ and each $q_1$, ${\displaystyle\sum_{q_2}M_a(q_1,q_2) = 1}$.
 \item $M_a(q_1,q_2)$ is the probability to go from state $q_1$ to state $q_2$ when reading letter $a$;
 \end{itemize}
\end{itemize}
 
An \emph{accepting path} in the automaton for a word $u\in A^n$ is a finite sequence ${(q_i)_{0\leq i\leq n}}$ of states such that:
\begin{itemize}
\item the first state $q_0=I$ is the initial state of the automaton,
\item $q_n\in \fsta$.
\end{itemize}
The \textit{weight} of a path for a word $u$, is the product of weights of the transitions of the path labeled by the succesive letters of $u$:
\[\prod_{1\leq i\leq n}M_{u_i}(q_{i-1},q_i).\]
The \textit{acceptance probability} of a word $u\in Q^n$ for $\mathcal{A}$, $\probfa{u}$, is the sum of the weights of all accepting paths for $u$:
\[\probfa{u} = \sum_{
 \begin{matrix}
   q\in Q^{n+1}\\
   q_0= I,\, q_n\in\fsta
 \end{matrix}
}\prod_{1\leq i\leq n}M_{u_i}(q_{i-1},q_i). \]

\begin{proposition}
 \label{prop:encodePFA}
 For any probabilistic finite automaton $\mathcal A$ with rational transition probabilities, state set $Q$ and alphabet $A$, there exists a stochastic CA $F$ of dimension $1$ with states ${Q_F = A\sqcup\{\Qi,\Qo,\Qf,\Qe\}}$ and random states $R_F=Q\times \{1,\ldots,m\}$ (where $m$ is the least comon multiple of the denominators of all transition probabilities) such that for all ${n\geq 1}$:
 \begin{enumerate}
 \item ${\bigl(\STOC{F}(c)\bigr)(\cyl{\Qi\cdot \Qo^n\cdot \Qf}{0}) = 0}$ if ${c\not\in\cyl{\Qi\cdot A^n\cdot \Qf}{0}}$,
 \item ${\bigl(\STOC{F}(c)\bigr)(\cyl{\Qi\cdot \Qo^n\cdot \Qf}{0}) = \probfa{u}/|Q|^n}$ if ${c\in\cyl{\Qi\cdot u\cdot \Qf}{0}}$ with $u\in A^n$
 \end{enumerate}
Moreover, $F$ can be obtained algorithmically from $\mathcal A$.
\end{proposition}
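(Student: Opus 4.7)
The plan is to have each application of $F$ simulate, in one single step, the entire run of $\mathcal{A}$ on the word sitting between the markers $\Qi$ and $\Qf$ of the input. I take $R_F = Q\times \{1,\ldots,m\}$: the first component of the random symbol at position~$i$ is a guess~$q_i$ for the state of $\mathcal{A}$ after reading letter~$u_i$, and the second component is a uniform integer $k_i$ used to discretize transition probabilities (each $M_a(q,q')$ writes as $p/m$ with $p\in\NN$, so $\PR{k_i\leq m\cdot M_a(q,q')} = M_a(q,q')$).

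Concretely I would take $V=\{-1,0,1\}$ and $V'=\{-1,0\}$. On a state-neighborhood $(c_{-1},c_0,c_1)$ and random-neighborhood $\bigl((p_{-1},l_{-1}),(p_0,l_0)\bigr)$ the rule $f$ returns: $\Qi$ if $c_0=\Qi$; $\Qf$ if $c_0=\Qf$; and when $c_0\in A$, it returns $\Qo$ iff (i)~$c_{-1}\in A\cup\{\Qi\}$ and $c_1\in A\cup\{\Qf\}$, (ii)~setting $q_{\text{prev}}=I$ if $c_{-1}=\Qi$ and $q_{\text{prev}}=p_{-1}$ otherwise, we have $l_0\leq m\cdot M_{c_0}(q_{\text{prev}},p_0)$, and (iii)~if $c_1=\Qf$ then additionally $p_0\in\fsta$. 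All remaining cases are set to $\Qe$.

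For condition~2, take $c\in\cyl{\Qi\cdot u\cdot \Qf}{0}$ with $u=u_1\cdots u_n\in A^n$. Positions $0$ and $n+1$ deterministically output $\Qi$ and $\Qf$ respectively. For positions $1,\ldots,n$, write the random symbol at position~$i$ as $(q_i,k_i)$ and set $q_0:=I$. Then by our local rule, cell~$i$ outputs $\Qo$ iff $k_i\leq m\cdot M_{u_i}(q_{i-1},q_i)$, with the extra constraint $q_n\in\fsta$ at $i=n$. Since each $k_i$ is consulted by exactly one cell, conditioning on the tuple $(q_1,\ldots,q_n)$ makes the $n$ acceptance events independent with respective probabilities $M_{u_i}(q_{i-1},q_i)$; summing over the guessed path would then yield
\[
\bigl(\STOC{F}(c)\bigr)(\cyl{\Qi\cdot\Qo^n\cdot\Qf}{0}) \;=\; \frac{1}{|Q|^n}\sum_{\substack{q_1,\ldots,q_n\in Q\\ q_n\in\fsta}}\prod_{i=1}^n M_{u_i}(q_{i-1},q_i) \;=\; \frac{\probfa{u}}{|Q|^n}.
\]

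Condition~1 is a matter of case analysis on how $c$ departs from $\cyl{\Qi\cdot A^n\cdot \Qf}{0}$: our $f$ outputs $\Qi$ only when its central input is $\Qi$, $\Qf$ only when it is $\Qf$, and $\Qo$ only when it is in~$A$; hence any deviation at one of the positions $0,\ldots,n+1$ immediately kills the probability of landing in $\cyl{\Qi\cdot\Qo^n\cdot\Qf}{0}$. The construction is effective given $\mathcal{A}$, proving the last clause of the proposition. The main subtlety I expect is the pairing of guessed states and discretizers across neighboring cells so that the per-cell acceptance probabilities multiply to the desired $\prod_i M_{u_i}(q_{i-1},q_i)$: this is why the random symbol at position~$i$ encodes the state of $\mathcal{A}$ \emph{after} reading $u_i$, making each $q_i$ shared by exactly two neighboring transition tests while each $k_i$ belongs to a single one.
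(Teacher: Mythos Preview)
Your proof is correct and follows essentially the same approach as the paper: guess a path of $\mathcal A$ in the random component and check it locally, so that the pattern $\Qi\Qo^n\Qf$ appears with probability $\probfa{u}/|Q|^n$. The only differences are cosmetic: the paper discretizes transitions via an auxiliary map $\tau:A\times Q\times\{1,\ldots,m\}\to Q$ (counting preimages) rather than your threshold test $l_0\leq m\cdot M_{c_0}(q_{\text{prev}},p_0)$, and it performs the final-state check $q_n\in\fsta$ at the $\Qf$-cell (so $\Qf$ is output only when $c_0=\Qf$ and $q_{-1}\in\fsta$) rather than at the last $\Qo$-cell as you do; correspondingly it uses $V=\{-1,0\}$ instead of your $\{-1,0,1\}$.
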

 
\begin{proof}
Informally, the random symbols will try to guess an accepting path for the word written in the initial configuration. Each random symbol will be our guess of the transition taking place in the automaton while reading the letter at that position in the initial configuration. Checking if this guess is correct can be done locally. It the path is indeed correct and accepting, then the SCA prints the configuration $\Qi \Qo^n\Qf$; otherwise, it prints a $\Qe$ at each place an error occurs. The exponential decay comes from the fact that the SCA needs to guess what is the state just before reading each letter, which implies a multiplicative shift of the success probability by $1/|Q|^n$.
 
Formally, let $m$ be the least common multiple of the denominators of all transition probabilities and $R=Q\times\{1,\ldots,m\}$. Let $\tau:A\times Q\times\{1,\ldots,m\} \rightarrow Q$ an arbitrary function such that for all $(q,q')\in Q^2$, ${\#\{1\leq i\leq m: \tau(a,q,i) = q'\} = m \cdot M_a(q,q')}$ --- it exists since all $M_a(q,q')$ are integer multiples of $1/m$ and $\sum_{q'\in Q} M_a(q,q')=1$ for all $(a,q)\in A\times Q$. Intuitively, when a random symbol $(q,i)$ is uniformly selected in $R$ at a cell with symbol $a$ in the configuration, the SCA reads it as the guess that the transition that probabilistic automata follows when reading this letter $a$ is $q\xrightarrow{a} \tau(a,q,i)=q'$, where $q$ is chosen uniformly at random and $q'$ is chosen with probability $M_a(q,q')$. Every given accepting path $q_0=I,\ldots,q_n\in\fsta$ for a given word $u\in A^n$ written on the initial configuration is thus correctly guessed by the SCA with probability $1/|Q|^n$ (for guessing the right sequence of states $q_1,\ldots,q_n$) multiplied by $\prod_{i=1}^{n} M_{u_i}(q_{i-1},q_i) = \probfa{u}$ for guessing the right transition at each letter. Let us now describe precisely the SCA.
 
\newcommand\good[1]{\operatorname{\mathsf{good}}(#1)}
\newcommand\inifin[1]{\operatorname{\mathsf{valid}}(#1)}
Next, define the local function ${f: Q_F^{\{-1,0\}}\times R^{\{-1,0\}}\rightarrow Q_F}$ of $F$ as follows:
\[f(c_{-1}c_0,\underbrace{(q_{-1},i_{-1})}_{=s_{-1}}\underbrace{(q_0,i_0)}_{=s_0}) = 
 \begin{cases}
 \Qi & \text{if $c_0 = \Qi$},\\
  \Qo & \text{if $c_0\in A$ and $q_0 = \tau(c_0,\gamma,i_{-1})$ with $\gamma = \begin{cases} I & \text{if $c_{-1} = \Qi$,}\\ q_{-1} & \text{otherwise}\end{cases}$}\\
   \Qf &\text{if $c_0=\Qf$ and $q_{-1}\in\fsta$},\\
   \Qe &\text{in every other case.}
 \end{cases}
\]
By construction, $F(c,s)\in\cyl{\Qi\cdot \Qo^n\cdot \Qf}{0}$ if and only if:
\begin{enumerate}
\item $c\in\cyl{\Qi\cdot A^n\cdot \Qf}{0}$,  and
\item $q_1 = \tau(c_1,I,i_0)$ and $q_k = \tau(c_k, q_{k-1}, i_{k-1})$ for $2\leq k \leq n$, and
\item $q_n\in\fsta$,
\end{enumerate}
where $s_k = (q_k,i_k)$ for $k\in\mathbb N$. Fix a configuration $c\in\cyl{\Qi\cdot A^n\cdot \Qf}{0}$, and let $u=c_1\ldots c_n$. Then,
\begin{align*}
\bigl(\STOC{F}(c)\bigr)&(\cyl{\Qi\cdot \Qo^n\cdot \Qf}{0}) 
	 = \frac{
		\#\left\{ ((q_0,i_0),\ldots,(q_n,i_n)) \in R^{n+1}  \left| \begin{array}{l}  q_1 = \tau(u_1,I,i_0) \text{ and }  q_n\in\fsta \text{, and}\\[1mm]  q_k = \tau(u_k, q_{k-1}, i_{k-1}) \text{ for } 2\leq k \leq n\end{array}\right.\right\}}{|R|^{n+1}}\\
	& = \frac{1}{|R|^n} 
		\sum_{(q_1,\ldots,q_n)\in Q^{n-1}\times \fsta} 
			\#\{ i_0 : \tau(u_1,I,i_0) = q_1\} \times 
			\prod_{k=1}^{n-1} \#\{ i_k : \tau(u_{k+1},q_k,i_k) = q_{k+1}\}\\
	& = \frac{1}{|R|^n} 
		\sum_{(q_0,q_1,\ldots,q_n)\in \{I\}\times Q^{n-1}\times \fsta}
			\prod_{k=1}^n m\cdot M_{u_k}(q_{k-1},q_k)\\
	& = \frac{1}{|Q|^n} \probfa{u},	
\end{align*}
by definition of $\tau$, which concludes the proof.
\end{proof}
 
Using classical undecidability results concerning acceptance threshold in probabilistic finite automata we can now prove the Theorem stated earlier.
 
\begin{prOOf}{of the undecidability result in Theorem \ref{thm:pptundeci}}
 In \cite{GimbertO10}, the following problem is shown undecidable:
 \begin{description}
 \item[input:] a probabilistic finite automaton with all transition probability in ${\{0,\frac{1}{2},1\}}$
 \item[question:] is there a word accepted with probability at least $\frac{1}{2}$?
 \end{description}
 Using Proposition~\ref{prop:encodePFA} it is straightforward to check that this problem reduces to problem \PPT{}.
\end{prOOf}
 
Let us now show that if the threshold is superexponential in $n$, then one can decide \PPT{} in dimension~$1$. Assume that the threshold function $\vartheta$ is non-increasing and superexponential (i.e. $\vartheta(n) = \omega(\lambda^n)$ for all $\lambda <1$). We will show the following structural lemma. Let us say that a word $w\in Q^{n+\ell-1}$ is \emph{$\ell$-looping} if $w = a l a$ for some $a\in Q^k$ and $l\in Q^n$ (recall that $\ell = 2k+1$ where $k$ is radius of the considered SCA). We say that a word $w\in Q^\ell$ is \emph{non-deterministic} for a SCA $F$ if there are two random words $s,s'\in R^\ell$ such that $f(w,s) \neq f(w,s')$, and \emph{deterministic} otherwise. By extension, we say a word $w\in Q^{n+\ell-1}$ is deterministic for $F$ if all the subwords of length $\ell$ it contains are deterministic for $F$.

\newcommand\probdonne[2]{\Pr\{#1\overset{F}{\rightarrow} #2\}}
To simplify notations, we denote by $\probdonne{v}{w}$ the probability ${\STOC F(\cyl{v}{-k})(\cyl{w}{0})}$ when $v$ and $w$ have appropriate lengths (${|v|=|w|+2k}$).
 
\begin{lemma} \label{lem:ppt:superexp}
If there are $n\in\NN$ and a word $u\in Q^{n+\ell+1}$ such that ${\probdonne{u}{xy^nz} > \vartheta(n)}$, then there are $n'\in \NN$ and a word $u'\in Q^{n'+\ell+1}$ such that $\probdonne{u'}{xy^{n'}\!z}>\vartheta(n')$ with either
\begin{itemize}
\item $n'\leq K$, or
\item $u' = \gamma a(la)^q \gamma'$ where $|\gamma|+|\gamma'| \leq K$, $a\in Q^k$, $|l|\leq |Q|^\ell$, $q\in\NN^*$,  and the word $ala$ is deterministic for $F$. 
\end{itemize}
where $K$ is a constant that can be algorithmically computed from $\ell$, $|Q|$, $|R|$ and $\vartheta$.
\end{lemma}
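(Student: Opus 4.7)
The plan is to reduce any witness $(u,n)$ to a canonical form by iteratively deleting deterministic loops, showing that the two terminal regimes of this pump-down correspond exactly to clauses~(a) and~(b) of the lemma. Two tools drive the argument: a probabilistic counting bound that controls how many length-$\ell$ windows of $u$ can be non-deterministic for $F$, and a pigeonhole/pumping invariance that allows loops to be removed from any long deterministic chunk without changing the probability.

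For the counting bound, I would argue as follows: at each position $p$ whose length-$\ell$ window is non-deterministic, the local rule $f$ can output at least two distinct symbols, each with probability at least $|R|^{-\rho'}$, so the probability of matching the prescribed target symbol at $p$ is at most $1-|R|^{-\rho'}$. Greedily selecting a subfamily $I$ of such non-deterministic positions with pairwise disjoint $V'$-supports yields $|I|\geq\lceil |N|/\ell\rceil$ pairwise independent output events, whence $\probdonne{u}{xy^nz}\leq (1-|R|^{-\rho'})^{|I|}$; combined with $\probdonne{u}{xy^nz}>\vartheta(n)$ this gives $|N|\leq C_0\cdot|\log\vartheta(n)|$ for an explicit constant $C_0$ depending only on $|R|$, $\rho'$ and $\ell$. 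For the pumping invariance, I would show that whenever $u$ contains a subword $ala$ (with $|a|=k$) in which every length-$\ell$ factor is deterministic for $F$, inserting or deleting the factor $la$ shifts $n$ by $|l|$ while leaving $\probdonne{\cdot}{xy^{n'}z}$ unchanged, because the deterministic block outputs $y^{|la|}$ with probability $1$ and the random symbols governing the rest of the output are untouched by the local modification.

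With these ingredients I would execute the following pump-down: while some deterministic chunk of $u$ has length exceeding $|Q|^\ell+\ell-1$, pigeonhole yields a repeated length-$\ell$ window at positions $i<j$ with $j-i\leq |Q|^\ell$, and the corresponding loop is excised, provided the resulting $(u',n')$ still satisfies $\probdonne{u'}{xy^{n'}\!z}>\vartheta(n')$. In case~(a) the procedure terminates with every chunk of length at most $|Q|^\ell+\ell-1$, forcing $n^\star\leq C_1|N|\leq C_0C_1\,|\log\vartheta(n^\star)|$ for an explicit $C_1$; the superexponential property of $\vartheta$, applied constructively via the algorithm producing $K(\mu)$ for $\mu=e^{-1/(C_0C_1)}$, then forces $n^\star\leq K$ for a computable $K$. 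In case~(b) some chunk still carries a loop whose deletion would violate the threshold, and that chunk exhibits the desired decomposition $u'=\gamma a(la)^q\gamma'$ with $|a|=k$, $|l|\leq |Q|^\ell$ and $q\geq 1$. The hard step will be bounding $|\gamma|+|\gamma'|$ by a computable $K$ in case~(b): since every other chunk has already been pumped to its minimum length, one gets $|\gamma|+|\gamma'|\leq C_1|N|$, and any violation of the target bound would provide enough removable loop length across the remaining chunks to offer a redistribution of the reduction budget that collapses the full structure back into case~(a), contradicting the fact that the procedure terminated in case~(b). Formalizing this swap requires care about the granularity of loop lengths, but it yields in both cases a single computable $K$ depending on $|Q|$, $|R|$, $\ell$ and $\vartheta$.
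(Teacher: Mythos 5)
Your two ingredients --- the counting bound on non-deterministic windows and the invariance of $\probdonne{\cdot}{xy^\ast z}$ under insertion or deletion of a deterministic loop $la$ inside an $\ell$-looping factor $ala$ --- are exactly the ones the paper uses, and your case (a) (all deterministic chunks short, hence $n$ bounded via the computationally superexponential threshold) matches the paper's first case. The gap is in case (b). Your pump-down only \emph{deletes} loops, and each deletion is conditioned on the threshold surviving; but deleting a loop keeps the probability fixed while decreasing $n$ to $n-|l|$, and since $\vartheta$ is non-increasing this can turn $\probdonne{u}{xy^nz}>\vartheta(n)$ into $\probdonne{u'}{xy^{n'}z}\le\vartheta(n')$. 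So the procedure can get stuck on a word that still has \emph{several} long deterministic chunks, none of which is a power of a single loop (a long deterministic chunk is only guaranteed to \emph{contain} some $ala$; it can be, say, $al_1al_2a$ with $l_1\neq l_2$), and such a word is not of the form $\gamma a(la)^q\gamma'$ with $|\gamma|+|\gamma'|\le K$. Your proposed repair (``redistribution of the reduction budget'') is a contradiction sketch that you yourself flag as delicate; it does not construct the required witness, and this is precisely the hard half of the lemma.

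The paper sidesteps all of this with one observation you do not exploit: you may also pump a deterministic loop \emph{up}. Strip every loop from every deterministic chunk except one chosen loop $la$, then duplicate that loop enough times that the resulting word $u'$ is at least as long as $u$, i.e.\ $n'\ge n$. The probability is unchanged throughout, so $\probdonne{u'}{xy^{n'}z}=\probdonne{u}{xy^nz}>\vartheta(n)\ge\vartheta(n')$ by monotonicity of $\vartheta$ --- no conditioning is needed, the procedure cannot get stuck, and the outcome is by construction of the canonical form $\gamma a(la)^q\gamma'$ in which every deterministic subword of $\gamma$ and $\gamma'$ has length at most $\ell+|Q|^{\ell}$. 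Your own counting bound (the same one used in case (a)) then bounds $|\gamma|+|\gamma'|$ by the computable constant $K$. Rewriting your case (b) along these lines repairs the argument; the rest of your outline is sound.
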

 
\begin{proof}

First note that by the pigeonhole principle, any word in $Q^*$ of length at least $\ell+|Q|^\ell$ contains a $\ell$-looping subword $ala$ with $a\in Q^k$ and $|l|\leq|Q|^\ell$. Let us isolate the centerpart of $u$ by writing $u=aba'$ where $a,a'\in Q^k$. Let us mark in $b$ all the letters which are at the center of non-deterministic neighbourhoods. The marked letters split $b$ into subwords where each letter belongs to a deterministic neighbourhood. 
 
	Let us first assume that all of this deterministic subwords have length  at most $\ell+|Q|^\ell$, then there are at least $n/(\ell+|Q|^\ell)$ distinct letters at the center of non-deterministic neighbourhoods in $b$. Since neighbourhoods at distance at least $\ell$ from each other evolve independently, and since every non-deterministic neighbourhood produces an error in the pattern with probability at least $1/|R|^\ell$,  it follows that the image of $u$ by $F$ will be $xy^nz$ with probability at most $(1-1/|R|^\ell)^{n/\ell(\ell+|Q|^\ell)} = \mu^n$ where $\mu = (1-1/|R|^\ell)^{1/\ell(\ell+|Q|^\ell)}$. As $\probdonne{u}{xy^nz} > \vartheta(n)$, it follows that $\mu^n < \vartheta(n)$ but since $\vartheta$ is superexponential and non-increasing, there is a constant $K$ that can be algorithmically computed from $|Q|, |R|, \ell$ and $\vartheta$ (assuming an appropriate oracle for the superexponentiallity of $\vartheta$)
such that for all $m>K$, $\vartheta(m) > \mu^m$. It follows that $n \leq K$ and case 1 is verified for~${u'=u}$.
	
	Let us now assume that one of the deterministic (unmarked) subwords of $b$ has length at least $\ell+|Q|^\ell$, it follows that it contains a $\ell$-looping subword of length at most $|Q|^\ell$. Note that one can strip or duplicate the loop $la$ in any $\ell$-looping subword $ala$ in $u$: this will only affect the image (by adding or deleting some $y$s in the image) but will not change the probability of obtaining the pattern $xy^*z$. We then strip in $u$ all the loops in the $\ell$-looping deterministic subwords but one and duplicate the remaining one as many times as necessary to obtain a word $u'$ at least as long as $u$, i.e. so that $u'$ has length $\ell+n'+1$ with $n'\geq n$. Since $\vartheta$ is non-increasing and since the probability to obtain $xy^{n'}\!z$ from $c'$ is identical to the probability to obtain $xy^nz$ from $c$, we have $\probdonne{u'}{xy^{n'}\!z} >\vartheta(n)\geq\vartheta(n')$. Now, $u'$ has the form $\gamma a(la)^q\gamma'$ where $ala$ is deterministic and all the deterministic subwords in $\gamma$ and $\gamma'$ have length at most $\ell+|Q|^\ell$. Using the same argument as before, the sum of the lengths of the two words $\gamma$ and $\gamma'$ is bounded by the constant $K$ and therefore $u'$ has the desired properties for case 2.
\end{proof}
 
\begin{prOOf}{of the decidability part of Theorem~\ref{thm:pptundeci}}
Let $\theta=\lim_{n\rightarrow\infty} \vartheta(n)$.
Consider a SCA $F$ with state set $Q$, random symbol set $R$, radius $k$ and neighborhood width $\ell=2k+1$. Let $\mu = (1-1/|R|^\ell)^{1/\ell(\ell+|Q|^\ell)}$ and compute $K$ such that $\vartheta(m)>\mu^m$ for all $m\geq K$. According to Lemma~\ref{lem:ppt:superexp},  there are an $n\in\NN$ and a word $u\in Q^{n+\ell+1}$ such that $\probdonne{u}{xy^nz}>\vartheta(n)$ if and only if there is such a pair with $n\leq K$ or there is a deterministic $\ell$-looping word $ala$ of length at most $\ell+|Q|^\ell$ and two words $\gamma, \gamma'$ of total length at most $K$ and a $q\in\NN$ such that $u=\gamma a (la)^q\gamma'$ verifies the condition. One can easily check these conditions by enumerating all the subwords of length at most $K$ and loops of length $|Q|^\ell$, the only difficulty consists in computing the right value for $q$ in the second case. 
This is achieved as follows: first compute the word $u=\gamma ala\gamma'$ with $|\gamma|+|\gamma'|\leq K$ and $|la|\leq |Q|^\ell$ such that $ala$ is deterministic for $F$ and for which $\probdonne{u}{xy^{|u|-2k-2}z}$ is maximum. If $\probdonne{u}{xy^{|u|-2k-2}z}\leq \theta$, then we conclude that the second possibility is not possible because pumping in the loop $la$ does not change the probability to obtain the pattern $xy^\ast z$ since $(la)$ is deterministic. 
If $\probdonne{u}{xy^{|u|-2k-2}z} > \theta$, then using the monotonicity of $\vartheta$, there must exist some $q$ such that $\probdonne{\gamma a(la)^q\gamma'}{xy^{n_q}z} > \vartheta(n_q)$ where $n_q = |\gamma|+|a|+q\cdot|la|+|\gamma'|-1-\ell$ (again because pumping in the deterministic loop $la$ does not change probabilities). Then the second case is verified.
\end{prOOf}

\subsection{About the Garden of Eden Theorem}

One of the most celebrated theorems in the setting of deterministic CA states that surjectivity is equivalent to pre-injectivity (Garden-of-Eden Theorem, \cite{Coornaert10}).
In the case of stochastic CA the situation is different as we will show in this section.

\begin{definition}
  Let $F : Q^\Z\times R^\Z\rightarrow Q^\Z$ be a stochastic CA.
  \begin{itemize}
  \item $F$ is \emph{surjective} if for any $c\in Q^\Z$ there is some $c'\in Q^\Z$ and some $s\in R^\Z$ such that ${F(c',s)=c}$.
  \item $F$ is \emph{injective} if for any $c_1,c_2\in Q^\Z$ and $s_1,s_2\in R^\Z$ we have
    \[F(c_1,s_1) = F(c_2,s_2) \Rightarrow c_1=c_2\]
  \item $F$ is \emph{pre-injective} if for any $c_1,c_2\in Q^\Z$ with finitely many differences and $s_1,s_2\in R^\Z$ we have
    \[F(c_1,s_1) = F(c_2,s_2) \Rightarrow c_1=c_2\]
  \end{itemize}
\end{definition}

Some remarks:
\begin{enumerate}
\item the definitions above agree with the classical deterministic setting when the stochastic CA considered turns out to be deterministic.
\item in general a stochastic system can be injective yet non-deterministic, for instance consider the following stochastic map ${f:[0,1]\rightarrow [0,1]}$ s.t.
  \[f(x) =
  \begin{cases}
    x/3 &\text{ with probability $1/2$}\\
    1/2+x/3 &\text{ with probability $1/2$}
  \end{cases}\]
\item it is easy to define some stochastic CA which is surjective but not injective (and not pre-injective), for instance ${F:Q^\Z\times Q^\Z\rightarrow Q^\Z}$ s.t.
  \[F(c,s) = s\]
\end{enumerate}

Contrary to the deterministic setting, there is no Garden-of-Eden Theorem for stochastic CA. However, there are still strong relationships between the notion defined above.

\begin{theorem}\label{thm:injec}
  Let $F$ be a stochastic CA, then we have:
  \begin{itemize}
  \item if $F$ is pre-injective then $F$ is surjective;
  \item if $F$ is injective then $F$ is deterministic.
  \end{itemize}
\end{theorem}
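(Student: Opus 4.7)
The plan is to reduce both claims to the classical Moore--Myhill Garden-of-Eden theorem for deterministic CA by freezing the random source to the constant configuration $0^\Z$ over the distinguished element $0\in R$. With this choice the explicit global function specializes to the deterministic CA $\DET F:c\mapsto F(c,0^\Z)$ already introduced in Section~\ref{sec:SCA}.

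For the first claim, taking $s_1=s_2=0^\Z$ in the stochastic pre-injectivity hypothesis immediately gives $\DET F(c_1)\ne \DET F(c_2)$ for every finitely differing $c_1\ne c_2$, so $\DET F$ is pre-injective in the classical sense. Moore--Myhill then yields that $\DET F$ is surjective (this holds in any dimension since $\Z^d$ is amenable), so every $d\in Q^\Z$ equals $F(c,0^\Z)$ for some $c$ and hence belongs to the image of $F$. Thus $F$ is surjective.

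For the second claim the same reduction shows that injectivity of $F$ implies classical injectivity of $\DET F$, which combined with Garden-of-Eden makes $\DET F$ a bijection on $Q^\Z$. Assume for contradiction that $F$ is non-deterministic. By Proposition~\ref{prop:detdec} there exist $u\in Q^\rho$ and $v\in R^{\rho'}$ with $f(u,v)\ne f(u,0^{\rho'})$ (among any witnessing pair $(v_1,v_2)$ for non-determinism, at least one entry must give an output distinct from $f(u,0^{\rho'})$, and we name it $v$). Pick $c_0\in Q^\Z$ whose $V$-neighborhood around $0$ reads $u$, and $s_1\in R^\Z$ whose $V'$-neighborhood around $0$ reads $v$ (both extended arbitrarily elsewhere). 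Then
\[F(c_0,s_1)_0 = f(u,v)\ne f(u,0^{\rho'}) = F(c_0,0^\Z)_0 = \DET F(c_0)_0,\]
so $d_1:=F(c_0,s_1)\ne \DET F(c_0)$. By bijectivity of $\DET F$ there is a unique $c_1$ with $\DET F(c_1)=d_1$, and $c_1\ne c_0$ precisely because $\DET F(c_0)\ne d_1$. Therefore $F(c_0,s_1)=d_1=F(c_1,0^\Z)$ with $c_0\ne c_1$, contradicting injectivity of $F$. Hence $F$ must be deterministic.

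The only creative step is noticing that freezing the randomness to a constant reduces both halves to classical Garden-of-Eden; the rest is routine bookkeeping. The proof crucially exploits the CA structure (shift commutation and locality, which is what Moore--Myhill requires), as arbitrary stochastic systems can be injective yet non-deterministic (cf.\ the one-dimensional example given just after the statement), so no purely measure-theoretic or cardinality argument could suffice.
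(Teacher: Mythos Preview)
Your proof is correct and shares the paper's core idea: freeze the random source to obtain a deterministic CA, invoke the Garden-of-Eden theorem to get surjectivity, and then use that surjectivity to manufacture a collision contradicting injectivity. The difference is in the implementation of the second claim. The paper works with the whole family of deterministic CA $F_w$ obtained from \emph{periodic} $R$-configurations $\overline{w}$ (which requires a block-grouping to restore shift-invariance), shows that injectivity of $F$ forces $F_{w_1}=F_{w_2}$ for all $w_1,w_2$ of the same period, and then lets the period exceed the neighbourhood width to conclude that $f$ ignores its $R$-argument. You instead stay with the single constant configuration $0^\Z$, use bijectivity of $\DET F$ to pull back an explicit non-deterministic image, and derive the contradiction directly from a local witness supplied by Proposition~\ref{prop:detdec}. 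Your route is slightly more economical (no grouping, no family of slices), while the paper's argument has the mild advantage of being phrased so that the final step reads off determinism from an equality of global maps rather than from a pointwise contradiction; both are equally valid and equally dependent on Moore--Myhill.
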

\begin{proof}
\extendstoalldimensions{}  For any $n>0$ and any word $w\in R^n$ we define the \textbf{deterministic} CA $F_w : (Q^n)^\Z\rightarrow (Q^n)^\Z$ as the grouping by groups of size $n$ of the map:
  \[c\in Q^\Z\mapsto F(c,\overline{w})\]
  where $\overline{w}$ is the periodic configuration of period $w$ (with $w_0$ on cell $0$).

  First, since $F$ is pre-injective, we have that, for any $w$, $F_w$ is pre-injective (straightforward). Hence, by choosing some ${r\in R}$, we deduce that $F_{r}$ is surjective (by the Garden-of-Eden theorem). Therefore $F$ is also surjective, which proves the first assertion of the theorem.

  Now suppose in addition that $F$ is injective. Then for any $n>0$ and any $w_1,w_2\in R^n$ we have $F_{w_1} = F_{w_2}$. Indeed, if it was not the case we would have some $c\in (Q^n)^\Z$ such that ${F_{w_1}(c)\neq F_{w_2}(c)}$. But since $F_{w_2}$ is surjective (shown above) there would exist some $c'$ such that $F_{w_2}(c') = F_{w_1}(c)$. Since $c'$ must be different from $c$ (because ${F_{w_1}(c)\neq F_{w_2}(c)}$) this contradicts the injectivity of $F$ (grouping is a bijective operation and does not affect injectivity).

To conclude the proof it is sufficient to take $n$ large enough (larger than $2k+1$ where $k$ is the radius of $F$): in this case $F_{w_1} = F_{w_2}$ for any $w_1,w_2$ means that $F$ does not depend on its $R$-component, hence it is deterministic.
\end{proof}

\section{Correlation-free local rules are simpler}
\label{sec:correlationfree}

\newcommand\cfl{\mathbb{P}_f}

A stochastic CA ${\mathcal A=(Q,R,V,V',f)}$ is \textit{correltation-free} if its neighborhood associated to the random component is trivial: ${V'=\{0\}}$ (see Definition~\ref{def:syntax}). Letting $\rho=|V|$, its local function $f$ is then of the form ${f:Q^k\times R\rightarrow Q}$ and it can be seen has a map $\cfl$ from $Q^\rho$ to probability distributions over $Q$ (maps from $Q$ to $[0,1]$ summing to $1$) as follows:
\[\cfl(q_1,\ldots,q_\rho): \gamma \mapsto \frac{\#\bigl\{\alpha\in R : f(q_1,\ldots,q_\rho,\alpha)=\gamma\bigr\}}{|R|}\]
Note that most of the literature concerning stochastic CA is restricted to local Correlation-Free CA and use map $\cfl$ to define them \cite{Toom,Gacs,Fates,Mairesse,RST,FatesRST06}.

As an immediate consequence of this form of local function, one can compute probabilities involved in the global function as a product of 'local probabilities' as shown by the following lemma. To simplify notations, it is stated in dimension 1 but extends without difficulty to higher dimensions.

\begin{lemma}
  \label{lem:cf}
  If $F$ is a local Correlation-Free stochastic CA of local function $f$ and radius $k$, we have for all configuration $c$ and all finite words $u$:
  \[\bigl(\STOC{F}(c)\bigr)(\cyl{u}{0}) = \prod_{0\leq z<|u|-1}\bigl(\cfl(c_{z-k},\ldots,c_{z+k})\bigr)(u_{z+1})\]
\end{lemma}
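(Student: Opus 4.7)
The plan is to exploit the defining feature of a Correlation-Free CA, namely $V'=\{0\}$, which ensures that the value of the image configuration $F(c,s)$ at cell $z$ depends on only a single random coordinate, $s_z$. Since $\nu_R$ is the uniform product measure on $R^\ZZ$, the family $(s_z)_{z\in\ZZ}$ is mutually independent, and this independence transfers directly to the family of images $(F(c,s)_z)_{z\in\ZZ}$ once $c$ is fixed.

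First, I would fix $c$ and $u$ and rewrite the event $\anevent_{c,\cyl{u}{0}}$ as the finite intersection
\[\anevent_{c,\cyl{u}{0}} = \bigcap_{z}\bigl\{s\in R^\ZZ : F(c,s)_z = u_z\bigr\},\]
where $z$ ranges over the positions prescribed by $u$. The correlation-free hypothesis $V'=\{0\}$ means that for each such $z$ the set $\{s : F(c,s)_z = u_z\}$ is a cylinder fixing only the coordinate $s_z$, so these cylinders live on pairwise disjoint coordinates.

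Second, I would invoke the product structure of $\nu_R$ on $R^\ZZ$ to split the measure of the intersection as a product of the measures of the individual cellular events, and then identify each factor with the announced expression by plugging in the definition of $\cfl$:
\[\nu_R\bigl(\{s : F(c,s)_z = u_z\}\bigr) = \frac{\#\bigl\{\alpha\in R : f(c_{z-k},\ldots,c_{z+k},\alpha)=u_z\bigr\}}{|R|} = \bigl(\cfl(c_{z-k},\ldots,c_{z+k})\bigr)(u_z).\]
Taking the product over the positions of $u$ then yields the claimed formula.

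The argument is essentially an unfolding of definitions and there is no substantive obstacle; the only point that deserves care is to verify that the correlation-free assumption genuinely forces the disjointness of the $s$-coordinates on which the cellular events depend, which is exactly where $V'=\{0\}$ (as opposed to a larger $V'$) is used. Indexing and off-by-one bookkeeping between the position $0$ of the cylinder, the radius $k$, and the range of $z$ in the product is the only bookkeeping task, and aligning it with the statement is routine.
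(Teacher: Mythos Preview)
Your proposal is correct and is essentially the same argument as the paper's: the paper phrases it as ``$\anevent_{c,\cyl{u}{0}}$ is a union of cylinders in one-to-one correspondence with the product $\prod_{z}\{s\in R:f(c_{z-k},\ldots,c_{z+k},s)=u_{z+1}\}$ and then apply the uniform measure'', which is just the combinatorial restatement of your independence-under-product-measure computation. Both arguments hinge on exactly the observation you isolate, namely that $V'=\{0\}$ makes each cellular event depend on a single, distinct coordinate of $s$.
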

\begin{proof}
  It is sufficient to check that the set $\anevent_{c,\cyl{u}{0}}$ (see section~\ref{sec:stochdyn}) is a union of cylinders which is in one-to-one correspondence with the set
  \[\prod_{0\leq z<|u|-1}\bigl\{s:f(c_{z-k},\ldots,c_{z+k},s)=u_{z+1}\bigr\}.\]
  The lemma follows by application of the uniform measure on both sets.
\end{proof}

\subsection{Impossible behaviors}

We now present behaviors than can be realized by general stochastic CA but not by CFCA.

\subsubsection{Number-conserving CA}

Number-conserving CA are regularly used to model interacting particles (see \cite{DurandFR03,FormentiG03,FormentiKT08} for the case of deterministic CA). A classic example of interacting particles model is the usual random walk, which is number-conserving because the number of walkers is conserved. Again, we restrict to dimension 1 to simplify notations but the extension to higher dimension is straightforward.

\begin{definition}
A SCA $F$ is number-conserving if $Q=\{0\ldots q\}$ for some $q\in\mathbb{N}$ and for any finite configuration $c\in Q^{\mathbb{Z}}$ (i.e. a configuration with finitely many cells in a state other than $0$), we have 
\[(S_F(c))(\{c'\,:\,\sum_i c'_i=\sum_i c_i\})=1\]
where the infinite sums are well-defined because only finite configurations are considered.
\end{definition}

Note in particular that the definition implies ${\STOC{F}({}^\omega0^\omega)\bigl(\{{}^\omega0^\omega\}\big) =1}$ and more generally, when $c$ is a finite configuration and $c'$ is reachable from $c$ then ${\STOC{F}(c)(\{c'\})>0}$ (because there are only finitely many configurations reachable from $c$).

Remark that our definition requires  the number to be conserved almost surely and is thus more restrictive than the definition in \cite{Fuks2004}, which requires only the number to be conserved in expectation. The conclusion of \cite{Fuks2004}, leaves open the question of strictly conservative particle system in \CFCA{}. We settle this question by showing that there is \emph{no} \CFCA{} (\emph{nor powers} of \CFCA{}) that can simulate (surjective) conservative particle system.

First, remark that it is easy to design a SCA that simulates a conservative particle system. For instance, consider the following SCA $F$ with states $\{0,1\}$, random symbols $\{\leftarrow,\cdot,\rightarrow\}$, and radius~${k=2}$.  The $1$s represent the particles and the $0$s the empty cells. The random symbol represents the movement each particle is trying to make: stay for $\cdot$; move right for $\rightarrow$ to be performed if the right cell is $0$ and if this move does not induce any conflict with another particle;  move left for $\leftarrow$ to be performed if the left cell is $0$ and if this move does not induce any conflict with another particle.  Here is its local rule (we only give the neighbourhoods whose image are $1$, the others have image $0$):

\medskip

\newcommand{\RS}[1]{\makebox[1em][c]{$#1$}}
\newcommand{\RR}{{\RS{\scriptsize\rightarrow}}}
\newcommand{\RNR}{{\RS{\scriptsize\not\rightarrow}}}
\newcommand{\RL}{{\RS{\scriptsize\leftarrow}}}
\newcommand{\RNL}{{\RS{\scriptsize\,\not\!\leftarrow}}}
\newcommand{\RE}{{\RS*}}
\newcommand{\RD}{{\RS\cdot}}
\newcommand{\RA}{\RS{*}}
\newcommand{\RO}{\RS{0}}
\newcommand{\RI}{\RS{1}}
\newcommand{\FF}[2]{f\!\left(\!\!\begin{array}{cc}c:#1\\s:#2\end{array}\!\!\!\!\right)}

\centerline{\begin{minipage}{.8\textwidth}
\begin{multicols}{3}
\noindent $\FF{\RA\RA\RI\RA\RA}{\RE\RE\RD\RE\RE} = 1$\\[1mm] 
$\FF{\RA\RA\RI\RO\RI}{\RE\RE\RR\RE\RL} = 1$\\[1mm]
$\FF{\RI\RO\RI\RO\RA}{\RR\RE\RL\RE\RE} = 1$\\[1mm]
$\FF{\RA\RI\RI\RA\RA}{\RE\RE\RL\RE\RE} = 1$\\[1mm]
$\FF{\RA\RA\RI\RI\RA}{\RE\RE\RR\RE\RE} = 1$\\[1mm]
$\FF{\RA\RI\RO\RO\RA}{\RE\RR\RE\RE\RE} = 1$\\[1mm]
$\FF{\RA\RO\RO\RI\RA}{\RE\RE\RE\RL\RE} = 1$\\[1mm]
$\FF{\RA\RI\RO\RI\RA}{\RE\RD\RE\RL\RE} = 1$\\[1mm]
$\FF{\RA\RI\RO\RI\RA}{\RE\RR\RE\RD\RE} = 1$
\end{multicols}
The images of all other patterns are $0$ ($*$ stands for an arbitrary symbol).
\end{minipage}}
\medskip

This SCA is clearly non-deterministic, number-conserving and surjective (each cell remains unchanged when its random symbol is $\cdot$).

Note that most interacting particle systems are not only number-conserving but also surjective. It turns out that no CFCA nor iterates of CFCA can express such systems.

\begin{lemma} \label{lem:CFCA:num:det}
If a CFCA is number conserving, then it is a deterministic map.
\end{lemma}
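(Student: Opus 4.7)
The plan is to argue by contradiction: assume that $F$ is a CFCA that is number-conserving yet not deterministic, and construct a finite initial configuration on which the output sum takes two distinct values with positive probability, violating number conservation.

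First, I would observe that on the all-zero configuration ${}^\omega0^\omega$, number-conservation forces $\STOC{F}({}^\omega0^\omega)\bigl(\{{}^\omega0^\omega\}\bigr) = 1$; combined with Lemma~\ref{lem:cf} (which expresses this probability as a product of local probabilities $\cfl(0^\rho)(0)$), this gives $\cfl(0^\rho)(0)=1$, i.e. $f(0^\rho, s) = 0$ for all $s\in R$. In particular, any cell whose neighborhood is entirely $0$ is mapped to $0$ deterministically.

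Next, since by Proposition~\ref{prop:detdec} (and the assumption of non-determinism) there is some pattern $u = (u_{-k},\ldots,u_k) \in Q^\rho$ and random symbols $s,s' \in R$ with $f(u,s) \neq f(u,s')$, the distribution $\cfl(u)$ has support containing two distinct values $a \neq b$ with $\cfl(u)(a) > 0$ and $\cfl(u)(b) > 0$. Since $\cfl(0^\rho)$ is a Dirac at $0$, the pattern $u$ is necessarily different from $0^\rho$, so $u$ has at least one nonzero entry. I would then define the finite configuration $c \in Q^\ZZ$ by $c_i = u_{i+k}$ for $-k \leq i \leq k$ and $c_i = 0$ elsewhere; this has finite support, so $\sum_i c_i$ is a well-defined finite integer.

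The key step is then to exploit the correlation-free structure. For $|z| > 2k$ the neighborhood of cell $z$ in $c$ is $0^\rho$, so $c'_z = 0$ almost surely, making $T := \sum_i c'_i$ an almost surely finite sum over the window $[-2k,2k]$. By Lemma~\ref{lem:cf} the random variables $\{c'_z\}_{z \in [-2k,2k]}$ are \emph{independent} (each cell uses its own random symbol, since $V' = \{0\}$), and in particular $c'_0$ (which takes values $a$ and $b$ each with positive probability) is independent of $T - c'_0$. Conditioning on any single outcome $\omega_0$ of $T-c'_0$ that has positive probability, we get that $T$ attains both values $a + \omega_0$ and $b + \omega_0$ with positive probability; since $a \neq b$, $T$ is not almost surely constant. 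This contradicts number-conservation, which requires $T = \sum_i c_i$ almost surely. The main subtle point — and the only place one must be careful — is ensuring that the window of non-trivially distributed cells is truly finite and that the independence across that window lets us decouple $c'_0$ from the remainder; both follow directly from step one and from the CFCA hypothesis $V'=\{0\}$.
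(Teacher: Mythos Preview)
Your argument is correct and follows essentially the same approach as the paper's proof: both exploit the independence of per-cell updates in a CFCA to produce, from a single finite configuration, two images with different total sums that each occur with positive probability. Two small remarks: with $u=(u_{-k},\ldots,u_k)$ you want $c_i=u_i$ rather than $c_i=u_{i+k}$ so that cell~$0$ has neighborhood exactly~$u$; and your explicit verification that $0^\rho$ is quiescent is in fact more careful than the paper, which tacitly relies on finite configurations having finite images of positive point mass.
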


\begin{proof}
Assume by contradiction that $F$ is a number conserving CFCA which is not deterministic. Let $c$ be a finite configuration such that there are $c'\neq c''$ such that $\STOC{F}(c)(\{c'\})>0$ and $\STOC{F}(c)(\{c''\})>0$. Let $i$ be such that $c'_i\neq c''_i$. Then, since the updates are independent in CFCA: with positive probability, $c$ is mapped to $c'$ and with positive probability, $c$ is mapped to $\hat c'$ where $\hat c'_i = c''_i$ and $\hat c'_{j\neq i} = c'_j$. This is a contradiction since the total weight of $c'$ and $\hat c'$ differ by $c'_i-c''_i\neq 0$.
\end{proof}

\begin{theorem}\label{thm:CFCA:num:det}

Let $F$ be a CFCA. If $\STOC{F}^t$ is number-conserving and surjective for some $t>0$, then $F$ is deterministic.

\end{theorem}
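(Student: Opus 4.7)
My plan is to reduce the theorem to the setting of Lemma~\ref{lem:CFCA:num:det} by extending its argument to $t$-fold iterates with the help of surjectivity: assuming $F$ is not deterministic, I will construct a finite configuration $c^{(0)}$ such that $F^t(c^{(0)})$ takes two values with distinct total sums with positive probability each, contradicting number-conservation.

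The first step is to establish that $\cfl(0^\rho)=\delta_0$, so that $F$ maps finite configurations to finite configurations almost surely. Since $0^\ZZ$ is the only finite configuration of sum zero in $\{0,\ldots,q\}^\ZZ$, number-conservation of $F^t$ forces $\STOC{F}^t(0^\ZZ)=\delta_{0^\ZZ}$. Using the shift-invariance of $\STOC{F}^i(0^\ZZ)$ for each $i$ together with the light-cone structure of $F^t$ (cells of $F^t(0^\ZZ)$ farther apart than $2kt$ depend on disjoint random inputs, hence are independent given~$0^\ZZ$), one back-propagates this constraint to conclude $\cfl(0^\rho)=\delta_0$: indeed, if $\cfl(0^\rho)$ were non-Dirac, then the i.i.d.\ non-trivial structure of $F(0^\ZZ)$'s cells would, from a finite configuration with a single non-zero cell, percolate through $F^t$ to produce an image whose sum depends on the random inputs and is thus not deterministic, contradicting number-conservation on that configuration. (The corner case $\cfl(0^\rho)=\delta_q$ with $q\neq 0$ forces $F$ to be deterministic on the orbit of $0^\ZZ$ and is handled by a shift of the background, reducing to the main setting.)

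Now suppose for contradiction that $F$ is not deterministic: some $\cfl(\vec q)$ satisfies $\cfl(\vec q)(a),\cfl(\vec q)(b)>0$ for two distinct states $a\neq b$. Take $c$ to be the finite configuration with $c_{-k},\ldots,c_k$ equal to $\vec q$ and zeros elsewhere. Since $\cfl(0^\rho)=\delta_0$, $F(c)$ is finite almost surely, and the CFCA-independence of cell updates (exactly as in the proof of Lemma~\ref{lem:CFCA:num:det}) yields two finite configurations $c',c''$ in the support of $\STOC{F}(c)$ that agree everywhere except at cell~$0$, where $c'_0=a$ and $c''_0=b$, so $\sum c'-\sum c''=a-b\neq 0$. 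Surjectivity of $F^t$ entails surjectivity of $F^{t-1}$, since any $c=F^t(c^{(0)},\vec s)$ may be rewritten as $c=F^{t-1}(F(c^{(0)},s^1),\vec s|_{[2,t]})$. Hence there exists $\hat c^{(0)}$ with $c\in\NDET{F^{t-1}}(\hat c^{(0)})$; truncating $\hat c^{(0)}$ to zero outside a sufficiently large window around the support of $c$ yields a finite $c^{(0)}$ with $\Pr[F^{t-1}(c^{(0)})=c]>0$ (valid because $\cfl(0^\rho)=\delta_0$ propagates zeros through every iteration, keeping far-away cells deterministically zero and using the discreteness of the random alphabet). Then by independence of the random inputs across distinct time steps,
\[
  \Pr[F^t(c^{(0)})=c']\;\geq\;\Pr[F^{t-1}(c^{(0)})=c]\cdot\Pr[F(c)=c']\;>\;0,
\]
and analogously for $c''$. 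Number-conservation of $F^t$ on the finite $c^{(0)}$ would then force $\sum c'=\sum c^{(0)}=\sum c''$, contradicting $\sum c'\neq\sum c''$.

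The main technical obstacle is the first step: the back-propagation of $\STOC{F}^t(0^\ZZ)=\delta_{0^\ZZ}$ to $\cfl(0^\rho)=\delta_0$ (and the matching truncation in the last step) requires carefully managing the dependencies between cells via the light-cone structure of CFCA iterates and the shift-invariance of the intermediate distributions.
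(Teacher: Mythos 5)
Your overall strategy is the same as the paper's: derive from non-determinism a finite configuration $c$ with two finite images $c'$, $c''$ of different total sums, each of positive probability, then use surjectivity of $F^{t-1}$ to pull $c$ back to a finite configuration and contradict number-conservation of $F^t$ there (at most one of $c'$, $c''$ can have the right sum). The paper does exactly this in a few lines, invoking Lemma~\ref{lem:CFCA:num:det} for the first part, and is admittedly silent on the finiteness issues you try to address.

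The genuine weak point is your preliminary step, the claim that $\cfl(0^\rho)=\delta_0$. First, it is not a consequence of the hypotheses: a deterministic $F$ whose background cycles $0^\ZZ\mapsto a^\ZZ\mapsto\cdots\mapsto 0^\ZZ$ with period $t$ satisfies all assumptions yet has $\cfl(0^\rho)=\delta_a$ with $a\neq 0$, and the parenthetical ``shift of the background'' does not obviously repair this, since the background changes from step to step while both the notion of finite configuration and the number-conservation hypothesis are tied to the single state $0$ and to the single map $F^t$. Second, back-propagation from $\STOC{F}^t(0^\ZZ)=\delta_{0^\ZZ}$ alone cannot rule out a non-Dirac $\cfl(0^\rho)$: take $\cfl(0^\rho)$ uniform on $\{1,2\}$ and $f(w,\cdot)=\delta_0$ for every $w\in\{1,2\}^\rho$; then $F^t(0^\ZZ)=0^\ZZ$ almost surely for every even $t$ although $\cfl(0^\rho)$ is non-Dirac. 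Excluding such rules requires number-conservation at other finite configurations, and your ``percolation'' sentence does not supply that argument: the random background cells created at time $1$ may all be mapped back to $0$ by time $t$, and the sum of $F^t(e)$ is then carried by the finitely many cells in the light cone of the perturbation, whose behavior you have not controlled. Since your later steps rely on $\cfl(0^\rho)=\delta_0$ both for the finiteness of $c'$, $c''$ and for the truncation producing a finite pre-image (where the cells near the boundary of the truncation window, which see a mixture of the original configuration and the inserted zeros, are likewise not controlled), this is a real gap rather than a routine omission.
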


\begin{proof}
Assume by contradiction that $F$ is non-deterministic. By Lemma~\ref{lem:CFCA:num:det}, $F$ is not number conserving. Therefore there are finite $c,c',c''$  such that $\STOC{F}(c)(\{c'\})>0$, $\STOC{F}(c)(\{c''\})>0$, and the weight of $c'$ and $c''$ differ. As $F^t$ is surjective, so is $F^{t-1}$ and let $d$ be a finite configuration such that $S^{t-1}_F(d)(\{c\})>0$. It follows that $S^t_F(d)(\{c'\})>0$ and $S^t_F(d)(\{c''\})>0$ which contradicts the fact that $F^t$ is number conserving.     
\end{proof}

Whereas the surjectivity constraint was not needed in the lemma, it is required for the theorem to hold. Indeed, the square of the CFCA illustrated in Fig.~\ref{fig:sq:number} is non-deterministic and number-conserving. This CFCA~$F$ has neighbourhood $\{-5,\ldots,6\}$ (a neighbourhood large enough to prevent unstable patterns from propagating). The only patterns yielding to a state change are:
$$
\begin{array}{ll}
f(00000\mathbf11000{*}{*}) = 0
&	f({*}{*}000\mathbf0001000) = 1\\[1mm] 
f({*}0000\mathbf011000{*}) = 0 \text{ or } 1  \text{ with probability } \frac12 
&	f({*}{*}000\mathbf1001000) = 0 
\end{array}
$$
A cell matching any other pattern remains unchanged ($*$ stands for an arbitrary symbol). Note that this CFCA is not surjective since no image configuration contains the pattern $0000011000$. 

\begin{figure}
\centerline{
\includegraphics[scale=0.4]{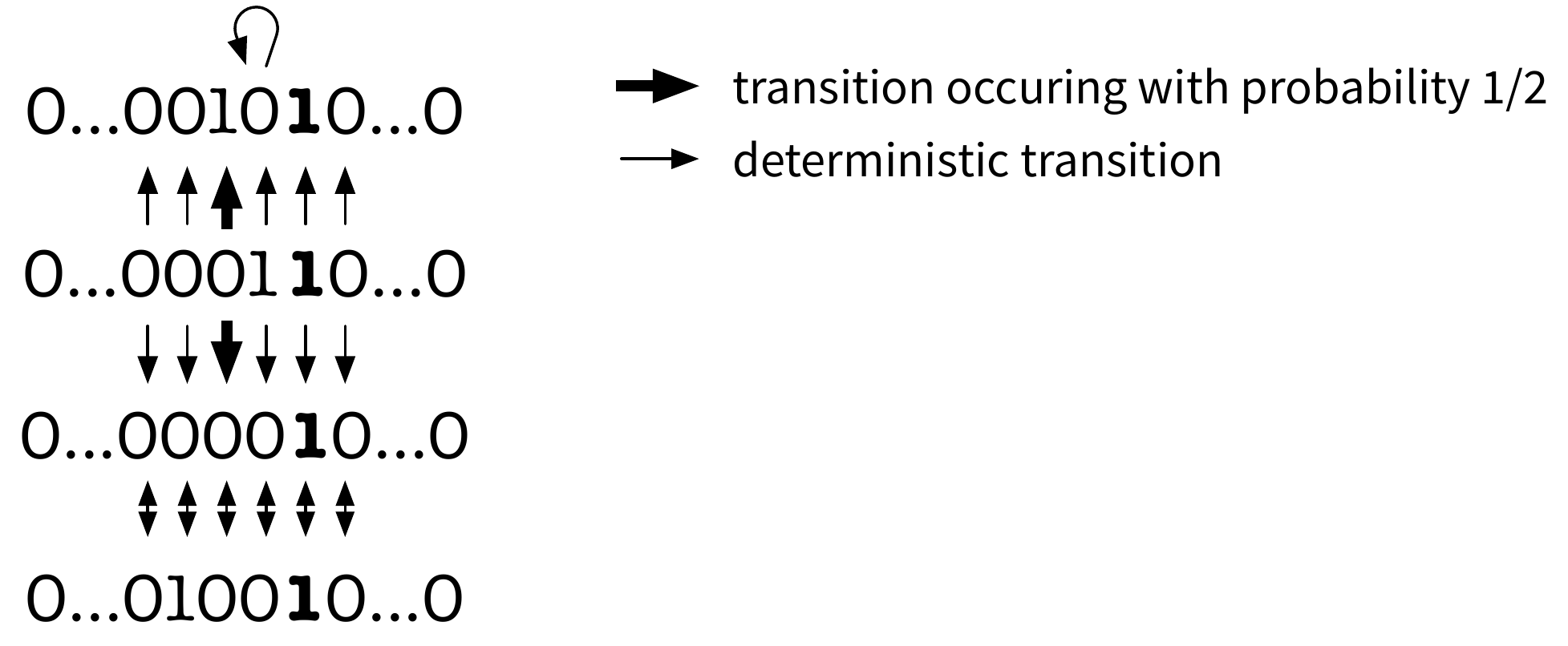}
}
\caption{An example of non-surjective non-deterministic CFCA whose square is number conserving.} \label{fig:sq:number}
\end{figure}

\subsubsection{Generation of random words of fixed parity}

Let us consider again the SCA \textsc{Parity} which was used as an example in Section~\ref{sec:stochdyn}.
It turns out that even iterates of \CFCA{} having $Q=\{\#,0,1\}$ cannot reproduce this behavior.

\begin{theorem}\label{th:reven}
For all $F$ a \CFCA{} with $Q_F=\{\#,0,1\}$ and for all $t$, we have $(\STOC{F})^t\neq \STOC{Parity}$.
\end{theorem}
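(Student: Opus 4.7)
The plan is to argue by contradiction. Suppose $\STOC{F}^t = \STOC{Parity}$ for some CFCA $F$ on $Q = \{\#, 0, 1\}$ and some $t \geq 1$. I will probe $F^t$ on the configurations $c^{(n)} = \cdots \#\#0^n\#\#\cdots$ consisting of a block of $n$ zeros delimited by $\#$s, and derive mutually inconsistent constraints on the local rule. To simplify exposition I write the argument for radius $k=1$; the general case is identical modulo notation.

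The first step is to show that at the last iteration of $F$ the rule is necessarily deterministic on the cells of any block of length $n\geq 2$. Under $\STOC{Parity}$ the block cells $(c^t_0, \ldots, c^t_{n-1})$ are uniformly distributed over the $2^{n-1}$ even-parity words, in particular their support is the codimension-$1$ affine subspace $\{w\in\{0,1\}^n : \sum_i w_i \equiv 0 \pmod 2\}$. Since $F$ is correlation-free, conditionally on $c^{t-1}$ the block cells $(c^t_i)_i$ are produced by independent random choices; their conditional joint support is a product $\prod_{i=0}^{n-1} S_i$ with $S_i\subseteq Q$, which cannot lie inside the even-parity subspace unless every $S_i$ is a singleton. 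Therefore for every $c^{t-1}$ in the support of $\STOC F^{t-1}(c^{(n)})$ the distribution $\mathbb{P}_f$ is Dirac on every relevant neighbourhood, giving a deterministic rule $g: Q^3\rightarrow Q$ with $c^t_i = g(c^{t-1}_{i-1}, c^{t-1}_i, c^{t-1}_{i+1})$.

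The second step is to extract the algebraic form of $g$ modulo $2$ from the parity constraint $\sum_{i=0}^{n-1} g(v_{i-1}, v_i, v_{i+1}) \equiv 0 \pmod 2$ (writing $v_j=c^{t-1}_j$). By CFCA, conditional on $c^{t-2}$ the cells $v_{-1}, v_0, \ldots, v_n$ are independent, so the equation must hold on a product support. Whenever a coordinate has support of size $\geq 2$, varying it while fixing the others shows that $g\pmod 2$ cannot depend on the arguments where the variation takes place. Carrying out this analysis for the boundary arguments $v_{-1}$ and $v_n$ and then for each interior coordinate successively yields, on appropriate supports, a decomposition $g(a,b,c) \equiv H(a,b) + H(b,c) \pmod 2$ for some auxiliary $H: Q^2 \rightarrow \{0,1\}$ (the boundary cells becoming $c^t_0 = H(v_0, v_1)$ and $c^t_{n-1} = H(v_{n-2}, v_{n-1})$). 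A telescoping check confirms that the parity of any block is then automatically zero and that the output of $F^t$ on $c^{(n)}$ is the image of the bit sequence $\eta_j := H(v_j, v_{j+1})$ for $j=0,\ldots,n-2$ under an affine bijection; uniformity on even-parity words is therefore equivalent to $(\eta_j)_{j}$ being i.i.d.\ uniform bits in $\{0,1\}$.

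The third step is to confront this additive structure with the block of length one. Under $\STOC{Parity}$ applied to $c^{(1)} = \cdots\#0\#\cdots$ the single block cell deterministically equals $0$, which forces $\mathbb{P}_f$ to be Dirac on $0$ at each neighbourhood relevant to the block-of-$1$ support of $c^{t-1}$. Translating this back to $g$ via the decomposition, we obtain that $H$ must vanish on the support it takes in the block-of-$1$ dynamics. The conflict is then that the block-of-$2$ configuration imposes $\eta_0 = H(v_0, v_1)$ to be Bernoulli$(1/2)$, while the block-of-$1$ constraint forces this same $H$ to be identically $0$ on the (non-empty) intersection of the two supports, making $\eta_0$ constant and hence $c^t_0$ deterministic; contradicting Bernoulli$(1/2)$.

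The main obstacle will be the degenerate cases in which some support along the way is a singleton, which prevents the ``varying argument'' from firing and thus endangers the derivation of the additive form of $g$. The essential use of the three-state alphabet is here: since $\mathbb{P}_f(\#\cdots\#) = \delta_\#$ (from $\STOC F^t(\#^{\mathbb Z}) = \delta_{\#^{\mathbb Z}}$), the $\#$ symbols propagate deterministically and limit the supports that can appear at cells close to the boundary. To cover singleton cases one chooses $n$ sufficiently large compared to $t$ so that interior cells at step $t-1$ acquire non-singleton support (forced by the fact that $F^t$ on $0^{\mathbb Z}$ must be Bernoulli$(1/2)$ i.i.d.), allowing the additive decomposition to be established; the block-of-$1$ conflict is then imported by a shift argument, since overlapping supports between the two block dynamics become non-empty for large enough $n$.
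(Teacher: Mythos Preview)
Your Step 1 is correct and nicely argued: conditional on $c^{t-1}$, the block cells $(c^t_0,\ldots,c^t_{n-1})$ are independent (correlation-freeness), and a product set cannot lie in the even-parity hyperplane unless every factor is a singleton. So $\mathbb{P}_f$ is Dirac on each neighbourhood that actually occurs at step $t-1$ at a block position.

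Steps 2 and 3, however, do not close the argument. The decomposition $g(a,b,c)\equiv H(a,b)+H(b,c)$ is asserted, not proved: you only know that $\sum_i g(v_{i-1},v_i,v_{i+1})\equiv 0$ on the product supports arising conditionally on $c^{t-2}$, and these supports may be tiny (many singleton factors), so there is no reason the additive form should hold on the whole union of such supports. You flag this (``degenerate cases'') but never resolve it. Step 3 is worse: the contradiction relies on the step-$(t-1)$ supports arising from $c^{(1)}$ and from $c^{(n)}$ having a non-trivial overlap of pairs $(v_0,v_1)$, yet these are trajectories from different initial configurations and nothing you have shown forces any overlap. The closing ``shift argument'' is a hope, not a proof.

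The paper's route is much shorter and bypasses all of this. It observes that $F(0^{2k+1})$ (or $F(1^{2k+1})$) must be genuinely random --- otherwise the central cell of $F^t$ applied to a long $0$-block would be deterministic, contradicting the Bernoulli$(\tfrac12)$ marginal imposed by $\STOC{Parity}$. Writing $p=\Pr\{F(0^{2k+1})=0\}\in(0,1)$, one then notes that with positive probability $q$ the pattern $0^{2k+1}$ reoccurs at the centre at step $t-1$ (just iterate ``every cell in the shrinking cone outputs $0$'' along the $t-1$ steps; each such step has positive probability by correlation-freeness). On that event, by correlation-freeness again, the central cell at step $t$ is $0$ or $1$ independently of all other block cells, so the block has odd parity with probability at least $\min(qp,q(1-p))>0$. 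This is precisely the negation of your Step~1: rather than proving the last step is deterministic on the support and then searching for an algebraic obstruction, one exhibits a non-deterministic neighbourhood and forces it into the support, yielding the contradiction in one stroke.
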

\begin{proof}
  Assume by contradiction that there is a \CFCA{} $F$ and
  $t$ such that $(\STOC{F})^t= \STOC{Parity}$.  Let $k$ be the radius
  of $F$.  Both images of $0^{2k+1}$ and $1^{2k+1}$ cannot be
  deterministic. Otherwise, $F^t([0^{2kt+1}]_{-kt})_0$ would be a
  deterministic value which contradicts the random generation of words of even parity starting from the pattern ${\#0^{2kt+1}\#}$. Therefore $F(0^{2k+1})$ is $0$ with
  probability $p>0$, and $1$ with probability $1-p>0$.  Then, with
  positive probability $q$, $F^{t-1}([\#0^{2kt+1}\#]_{-rt})\subseteq
  [0^{2k+1}]_{-k}$ and then the central cell of $0^{2k+1}$ is mapped
  independently to $0$ or $1$ with probability $p$ and $1-p$. It
  follows that there is a probability at least $\min(qp,q(1-p))>0$
  that the word $\#0^{2kt+1}\#$ is mapped by $F^t$ to an odd parity
  word.
\end{proof}
\subsection{Decidability}

\begin{theorem}
Let $F$ be a \CFCA{}. It is decidable to test whether it is noisy.
\end{theorem}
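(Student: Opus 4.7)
The plan is to show that for a CFCA the noisiness property collapses to a purely local, finitely-checkable condition on the local rule, and is therefore trivially decidable. The key is to exploit the product structure given by Lemma~\ref{lem:cf}.

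First, I would reformulate noisiness in terms of cylinders. Since $R^\ZZ$ is compact and $F$ is continuous, $\NDET{F}(c)$ is a closed subset of $Q^\ZZ$ for every $c$. Hence $\NDET{F}(c)=Q^\ZZ$ if and only if $\NDET{F}(c)$ intersects every nonempty cylinder, which by definition is equivalent to $\bigl(\STOC{F}(c)\bigr)(\cyl{u}{z})>0$ for every finite word $u$ and every position $z$. Therefore $F$ is noisy if and only if
\[\forall c\in Q^\ZZ,\forall u\in Q^*,\forall z\in\ZZ,\quad \bigl(\STOC{F}(c)\bigr)(\cyl{u}{z})>0.\]

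Second, I would apply Lemma~\ref{lem:cf}. Because $F$ is correlation-free, the probability of a cylinder factorizes as a product over the positions covered by $u$ of terms of the form $\bigl(\cfl(c_{z'-k},\ldots,c_{z'+k})\bigr)(u_{z'+1})$. A product of nonnegative reals is positive iff each factor is positive, so the displayed condition above is equivalent to
\[\forall(q_{-k},\ldots,q_k)\in Q^{2k+1},\forall q\in Q,\quad \bigl(\cfl(q_{-k},\ldots,q_k)\bigr)(q)>0,\]
that is, for every possible local neighborhood $\vec q$ and every target state $q$, there exists a random symbol $\alpha\in R$ with $f(\vec q,\alpha)=q$. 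The ``only if'' direction uses that any neighborhood pattern $\vec q$ occurs in some configuration $c$ at some position $z'$, so failure of the local condition at $(\vec q,q)$ produces a cylinder with probability zero; the ``if'' direction is immediate from the product formula.

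Finally, the reformulated condition is a finite conjunction of queries ``$\exists\alpha\in R:f(\vec q,\alpha)=q$?'' over $|Q|^{2k+1}\cdot|Q|$ pairs $(\vec q,q)$, each of which requires at most $|R|$ evaluations of $f$. The whole test can therefore be carried out by brute-force enumeration in time polynomial in $|Q|^{2k+1}\cdot|R|$, and noisiness is decidable. There is no real obstacle here; the content is entirely in recognizing that correlation-freeness collapses a global quantifier over configurations into a finite quantifier over local neighborhoods, exactly the phenomenon that fails for general stochastic CA (contrast with the undecidability result of Proposition preceding in Section~\ref{sec:otherundeci}).
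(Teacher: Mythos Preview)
Your proof is correct and follows essentially the same approach as the paper: the paper's proof is a one-line assertion that a \CFCA{} is noisy if and only if every neighbourhood has positive probability to be mapped to every letter, and you supply exactly this equivalence with full justification via Lemma~\ref{lem:cf}. The only difference is level of detail, not method.
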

\begin{proof}
  It is enough to check the local rule. Indeed, as opposed to SCA, a
  \CFCA{} is noisy if and only if every neighbourhood has a
  positive probability to be mapped to every letter.
\end{proof}

\CFCA{} do not have the same expressiveness as SCA. However, their squares can introduce correlations, and can in fact simulate any SCA (see Theorem \ref{thm:CFCAsimu}). In particular, deciding the equality of the stochastic functions of squares of CFCA is undecidable, as proven bellow:
\begin{theorem}
\label{thm:undecnoisecf}
Fix any dimension ${d\geq 2}$. It is undecidable whether a given CFCA $F$ of dimension $d$ is such that $F^2$ is a
white noise CA (i.e. ${\STOC{F}^2(c)}$ is the uniform measure
for all $c$). It is also undecidable whether $F^2$ is noisy.
\end{theorem}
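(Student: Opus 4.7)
The plan is to reduce the surjectivity problem of classical deterministic CA in dimension $d$, which is undecidable for $d\geq 2$ by \cite{Kari94}, simultaneously to both decision problems in the statement. The intuition mirrors Theorem~\ref{thm:deciding}: given a classical CA $H$, we want to produce a stochastic CA whose output distribution on any input is ``$H$ applied to a uniformly random configuration''. This manoeuvre needs the local rule to read the random symbol over a non-trivial neighbourhood, which is precisely what the CFCA constraint $V'=\{0\}$ forbids in a single step. The key observation is that squaring a CFCA does buy a non-trivial random neighbourhood: the random bits drawn at step~$1$ can be consulted over a spatial window at step~$2$ through the state, so the forbidden behaviour is realisable by $F^2$ even though $F$ itself is CFCA.

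Concretely, starting from a classical CA $H:Q^{\ZZ^d}\to Q^{\ZZ^d}$ of radius $r$ with local rule $h$, I would define a CFCA $F$ with state set $Q_F=Q\times Q$, random alphabet $R_F=Q$, spatial neighbourhood $V=\{-r,\ldots,r\}^d$ and $V'=\{0\}$, and local rule
\[f\bigl(((a_v,b_v))_{v\in V},\,s_0\bigr) \;=\; \bigl(h((b_v)_{v\in V}),\,s_0\bigr).\]
The second coordinate is just overwritten by the centre random symbol (legitimate since $V'=\{0\}$), while the first coordinate applies $h$ to the second coordinates of the neighbours. A direct computation of the iterate gives, for every configuration $c$ and all $s_1,s_2$,
\[F^2(c,s_1,s_2)_z \;=\; \bigl(h((s_{1,z+v})_{v\in V}),\,s_{2,z}\bigr),\]
which is independent of $c$: the $a$-component depends only on the step-$1$ random source, the $b$-component only on the step-$2$ one.

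Feeding uniform random $s_1,s_2$, the $b$-component of $\STOC{F}^2(c)$ is thus $\nu_Q$ and the $a$-component is the pushforward $\mu_H$ of the uniform measure by $H$; moreover they are independent, so $\STOC{F}^2(c)=\mu_H\otimes\nu_Q$ for every $c$. Using the same characterisation already invoked in the proof of Theorem~\ref{thm:deciding} (namely $\mu_H=\nu_Q$ iff $H$ is surjective, a classical fact \cite{Hedlund,maki}), $F^2$ is white-noise iff $H$ is surjective. Turning to the non-deterministic side, the support of $\STOC{F}^2(c)$ equals $H(Q^{\ZZ^d})\times Q^{\ZZ^d}$ as a subset of $(Q\times Q)^{\ZZ^d}$, which is all of $Q_F^{\ZZ^d}$ iff $H$ is surjective, so $F^2$ is noisy iff $H$ is surjective. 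Both reductions are computable from $H$, hence both problems are undecidable for $d\geq 2$.

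The construction is fully explicit and the verifications are short, so I would not expect a genuine obstacle; the one point to stay vigilant about is maintaining $V'=\{0\}$, which is why the random symbol is only ever written into the second coordinate at step~$1$ and only ever re-read over a spatial neighbourhood at step~$2$ through the $b$-coordinate of the state (rather than through $V'$). This two-step funnel is precisely what lets a CFCA simulate, after one composition, the non-CF construction of Theorem~\ref{thm:deciding}.
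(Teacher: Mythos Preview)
Your proof is correct and is essentially identical to the paper's: the paper defines the same CFCA $G\bigl((c,c'),s\bigr)=(H(c'),s)$ on state set $Q\times Q$ with random alphabet $Q$, computes $G^2\bigl((c,c'),s,s'\bigr)=(H(s),s')$, and concludes both reductions exactly as you do. Your write-up is slightly more explicit about the product measure $\mu_H\otimes\nu_Q$ and the support computation for noisiness, but the construction and argument are the same.
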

\begin{proof}
  As in Theorem \ref{thm:deciding}, we reduce from undecidability of
  determining the surjectivity of CA in dimension $2$ and higher. Let
  $F$ be an automaton over ${\mathbb{Z}^d}$ with states $Q$. We define
  the \CFCA{} $G$ over ${\mathbb{Z}^d}$ with states
  $Q\times Q$ and random states $Q$ as follows:
  $G((c,c'),s)=(F(c'),s)$. Then,
  $G^2((c,c'),s,s')=G((F(c'),s),s')=(F(s),s')$. It follows that, as in
  Theorem \ref{thm:deciding}, $\STOC{G}^2$ is the uniform measure over
  $(Q\times Q)^{\mathbb{Z}^d}$, i.e. it is white noise, if and only if
  $F$ is surjective.

  The same encoding shows undecidability of deciding if the square of CFCA is noisy
  because $G^2$ is either white noise or non-noisy, whether $F$ is surjective or not.
\end{proof}

\begin{corollary}\label{cor:undeciteratedequalityforCFCA}
  It is undecidable to determine, given $F$ and $G$ two CFCA, whether ${\STOC{F}^2=\STOC{G}^2}$.
\end{corollary}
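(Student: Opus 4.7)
The plan is a direct reduction from Theorem~\ref{thm:undecnoisecf}. That theorem provides, in dimension $d\geq 2$, the undecidability of testing whether $\STOC{F}^2$ is the white-noise (uniform) measure on every configuration, for a given CFCA $F$. To derive the corollary it therefore suffices to exhibit a single, computable CFCA $G$ whose square is exactly the white noise: equality $\STOC{F}^2=\STOC{G}^2$ then amounts to the undecidable property of $F$.

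Concretely, given an instance $F$ of the problem of Theorem~\ref{thm:undecnoisecf} with state set $Q$, I take $G=(Q,Q,\{0\},\{0\},g)$ with local rule $g(c_0,s_0)=s_0$. This is a CFCA according to Definition~\ref{def:syntax}, since $V'=\{0\}$ (each cell's output depends only on its own fresh random symbol). Its explicit global function is $G(c,s)=s$ on every pair $(c,s)$, so $\STOC{G}(c)$ is the uniform measure $\nu_Q$ on $Q^{\ZZ^d}$ for every $c$. Iterating, $G^2(c,s_1,s_2)=s_2$, hence $\STOC{G}^2(c)=\nu_Q$ for every $c$; one could equally invoke Fact~\ref{fac:iterates} after comparing the one-step $\STOC{G}$ with the trivial white-noise SCA.

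Combining, $\STOC{F}^2=\STOC{G}^2$ holds iff $\STOC{F}^2(c)=\nu_Q$ for every configuration $c$, which is precisely the condition that $F^2$ be white noise. By Theorem~\ref{thm:undecnoisecf} this is undecidable for dimension $d\geq 2$, so equality of the squared stochastic global maps of two CFCA is undecidable in the same range of dimensions. The construction $F\mapsto G$ is effective (one only needs to read off $Q$ from $F$), completing the reduction. The only point requiring care is to check that the trivial "output the random symbol" map qualifies as a CFCA, so that the reduction stays inside the CFCA class as demanded by the corollary; this is immediate from the definition and constitutes no real obstacle.
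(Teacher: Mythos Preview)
Your proposal is correct and follows essentially the same approach as the paper: both reduce from Theorem~\ref{thm:undecnoisecf} by fixing $G$ to be a white-noise CFCA on the same state set as $F$, so that $\STOC{F}^2=\STOC{G}^2$ holds precisely when $F^2$ is white noise. Your write-up is in fact more explicit than the paper's one-line proof, spelling out the concrete choice $g(c_0,s_0)=s_0$ and verifying it is a CFCA.
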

\begin{proof}
  From Theorem~\ref{thm:undecnoisecf} is is sufficient to consider particular instances of the problem where $G$ a fixed CFCA which is noisy.
\end{proof}

\section{Dimension 1 is simpler}
\label{sec:dimensionone}

In this section, we restrict to dimension $1$.

\subsection{Weighted De Bruijn Automata}
\newcommand\tauto[1]{\mathcal{A}_{#1}}

To any SCA $F$, syntactically given by $(Q,R,V,V,f)$, we associate a weighted finite automaton $\tauto{F}=(\Sigma,A,\delta,w,i_0,\fsta)$ whose weight are in $\QQ$ (see \cite{handbookWFA} for an introduction to weighted automata). Intuitively, $\mathcal{A}_F$ is a De Bruijn automaton recognizing pairs of configurations of the form $(c,\sigma^{k-1}\circ F(c,\cdot))$ and the weights correspond to probability distributions given by the stochastic global function $\STOC{F}$. More precisely, the shift of ${k-1}$ cells between the two components of the recognized pairs comes from the internal memory of the automaton which needs to be initialized as detailed below.

First, we suppose without loss of generality that the neighbourhoods are of the form $V=V'=\{-k,\ldots,k\}$. We let $\ell=2k+1$. $\tauto{F}$ works on alphabet $\Sigma=Q\cup Q\times Q$ and its set of states is
\[A = \bigcup_{0\leq j\leq \ell-1}Q^{j}\times R^{j}\]
By convention, we denote $i_0=(\epsilon,\epsilon)$ the only element of $Q^0\times R^0$.
The transition relation $\delta\subseteq A\times\Sigma\times A$ is given by:
\begin{itemize}
\item\textbf{(initialization)} for any $j<\ell-1$, any state $(\overline{a},\overline{b})\in Q^j\times R^j$, any $q\in Q$, and any $r\in R$ we have \[\bigl((\overline{a},\overline{b}),q,(\overline{a}q,\overline{b}r)\bigr)\in\delta\]
\item\textbf{(main component)} for any $\ell$-uples $q_1,\ldots, q_\ell\in Q$ and ${r_1,\ldots,r_\ell\in R}$, we let ${\alpha=\bigl(q_\ell,f(\overline{q},\overline{r})\bigr)}$ and we have
\[\bigl((q_1\cdots q_{\ell-1},r_1\cdots r_{\ell-1}), \alpha, (q_2\cdots q_{\ell},r_2\cdots r_{\ell})\bigr)\in\delta\]
\end{itemize}
Finally, we let $\fsta=A\setminus\{i_0\}$ and weights are $\frac{1}{|R|}$ for all transitions. 

A \emph{path} in $\tauto{F}$ is a sequence of transitions starting from $i_0$ and ending in $\fsta$, such that any transition starts from the state where the previous one arrives. The word recognized by a path is the sequence of labels of transitions. The weight of a path is the product of the weights of transitions. The weight of a word $u\in\Sigma^\ast$, denoted $w_{\tauto{F}}(u)$, is $0$ if $u$ is not recognized by any path, and the sum of weights of paths which recognize $u$ otherwise.

By construction, the automaton $\tauto{F}$ recognizes only words of the form ${Q^{\ell-1}\bigl(Q\times Q\bigr)^\ast}$. All transitions have the same weight. But, two recognized words of same length do not have the same weight in general.

\begin{lemma}
  \label{lem:autoca}
  Let $u\in Q^p$ and $v\in Q^{p+\ell-1}$ for some $p\geq 1$. Define $m = v_1\cdots v_{\ell-1}\cdot (v_\ell,u_1)\cdots (v_{p+\ell-1},u_p)$. Then, for any configuration $c\in\cyl{v}{0}$, we have ${w_{\tauto{F}}(m) = \bigl(\STOC{F}(c)\bigr)(\cyl{u}{k})}$.
\end{lemma}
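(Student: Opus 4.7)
The plan is to unroll any accepting path of $\tauto{F}$ on $m$ and put such paths in explicit bijection with the finite random windows that make $F(c,s)\in\cyl{u}{k}$. First, I would observe that because initialization transitions are labeled by a single symbol of $Q$ and main-component transitions by a pair in $Q\times Q$, any path recognizing $m$ must consist of exactly $\ell-1$ initialization transitions consuming $v_1,\ldots,v_{\ell-1}$ followed by exactly $p$ main-component transitions consuming the $p$ pairs, in that order. The $j$-th initialization transition (for $1\leq j\leq \ell-1$) reads $v_j$ and freely picks some $r_j\in R$, so after initialization the automaton is in state $(v_1\cdots v_{\ell-1},\,r_1\cdots r_{\ell-1})$ with $(r_1,\ldots,r_{\ell-1})\in R^{\ell-1}$ arbitrary.

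Next, by induction on $j\in\{1,\ldots,p\}$, just before reading the $j$-th pair the automaton is in state $(v_j\cdots v_{j+\ell-2},\,r_j\cdots r_{j+\ell-2})$. Reading $(v_{j+\ell-1},u_j)$ via the main-component rule then requires choosing $r_{j+\ell-1}\in R$ such that $u_j=f\bigl((v_j,\ldots,v_{j+\ell-1}),(r_j,\ldots,r_{j+\ell-1})\bigr)$. Thus the accepting paths on $m$ are in bijection with the tuples $(r_1,\ldots,r_{p+\ell-1})\in R^{p+\ell-1}$ satisfying this equation for every $j\in\{1,\ldots,p\}$; let $N$ denote the number of such tuples. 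Each accepting path uses $p+\ell-1$ transitions, each of weight $1/|R|$, so $w_{\tauto{F}}(m)=N/|R|^{p+\ell-1}$.

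Finally, I would compute the other side directly. For $c\in\cyl{v}{0}$, the event $\anevent_{c,\cyl{u}{k}}$ demands $F(c,s)_{k+j-1}=u_j$ for each $j=1,\ldots,p$, and by locality of $F$ this unrolls to $f\bigl((v_j,\ldots,v_{j+\ell-1}),(s_{j-1},\ldots,s_{j+\ell-2})\bigr)=u_j$. Setting $r_i:=s_{i-1}$ yields exactly the same system of $p$ constraints on $(r_1,\ldots,r_{p+\ell-1})$ as above, with the remaining components of $s$ unconstrained; the uniform measure of the cylinder fixing $(s_0,\ldots,s_{p+\ell-2})$ is $1/|R|^{p+\ell-1}$, so $\bigl(\STOC{F}(c)\bigr)(\cyl{u}{k}) = N/|R|^{p+\ell-1} = w_{\tauto{F}}(m)$. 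The only delicate point is bookkeeping: the $\ell-1$ initialization transitions have to be recognized as filling in precisely the leftmost $\ell-1$ random symbols of the window that determines the central image cell $F(c,s)_k$, which is exactly what absorbs the $k$-shift between $\cyl{v}{0}$ and $\cyl{u}{k}$.
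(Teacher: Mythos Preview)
Your proof is correct and follows essentially the same approach as the paper: both establish a bijection between accepting paths of $\tauto{F}$ on $m$ and the random words $\nu\in R^{p+\ell-1}$ that place $F(c,\cdot)$ in $\cyl{u}{k}$, then identify the total path weight with the uniform measure of the corresponding cylinder set. The paper merely asserts this bijection as ``straightforward to check'', whereas you actually carry out the bookkeeping (initialization vs.\ main-component transitions, the induction on $j$, and the index shift $r_i=s_{i-1}$), so your argument is a fleshed-out version of theirs rather than a different route.
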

\begin{proof}
  It is straightforward to check from the defintion of $\tauto{F}$ that the set of paths recognizing $m$ is in one-to-one correspondance with the set of word $\nu\in R^{p+k-1}$ such that ${F(c,\cyl{\nu}{0})\subseteq\cyl{u}{k}}$. Moreover, this set is exactly $\anevent_{c,\cyl{u}{k}}$ (see section~\ref{sec:stochdyn}) and the measure of this set for the uniform probability measure is by definition ${\bigl(\STOC{F}(c)\bigr)(\cyl{u}{k})}$. Finally, since each path of length $l$ has weight $\frac{1}{|R|^l}$, the weight $w_{\tauto{F}}(m)$ of $m$ is equal to the measure of $\anevent_{c,\cyl{u}{k}}$ for the uniform probability measure. The lemma follows.
\end{proof}

\begin{corollary}\label{cor:dec1Dequality}
  Equality of stochastic global function of 1D CA is decidable.
\end{corollary}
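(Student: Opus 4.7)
The plan is to reduce the problem to the classical decidable equivalence problem for weighted finite automata over the field $\mathbb{Q}$. The key technical ingredient has essentially been established in Lemma~\ref{lem:autoca}: the weighted De Bruijn automaton $\tauto{F}$ faithfully encodes the values of the stochastic global function $\STOC{F}$ on every cylinder, configuration by configuration.

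First, given two 1D SCA $F = (Q,R_F,V_F,V'_F,f)$ and $G = (Q,R_G,V_G,V'_G,g)$, I would assume without loss of generality that they share the same state set (otherwise $\STOC{F}$ and $\STOC{G}$ have incomparable types and are trivially different) and the same radius $k$; the latter is harmless, since adding dummy neighbours to the one with smaller radius does not change its stochastic global function. Then I would build the two weighted De Bruijn automata $\tauto{F}$ and $\tauto{G}$ over the common alphabet $\Sigma=Q\cup Q\times Q$, using the construction preceding Lemma~\ref{lem:autoca}; the weights lie in $\mathbb{Q}$ (the transition weight of $\tauto{F}$ is $1/|R_F|$, and similarly for $G$).

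Next, I would unpack what equality $\STOC{F}=\STOC{G}$ really means. By Carath\'eodory's extension theorem, two Borel probability measures on $Q^{\ZZ}$ coincide iff they coincide on all cylinders; so $\STOC{F}=\STOC{G}$ iff $(\STOC{F}(c))(\cyl{u}{z}) = (\STOC{G}(c))(\cyl{u}{z})$ for every $c\in Q^{\ZZ}$, every finite word $u$, and every position $z$. By translation invariance, it suffices to fix $z=k$. By Lemma~\ref{lem:autoca}, each such value is the weight $w_{\tauto{F}}(m)$ (resp.\ $w_{\tauto{G}}(m)$) of the word $m = v_1\cdots v_{\ell-1}(v_\ell,u_1)\cdots(v_{p+\ell-1},u_p)$ for any $v\in Q^{p+\ell-1}$ with $c\in\cyl{v}{0}$. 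So $\STOC{F}=\STOC{G}$ if and only if $w_{\tauto{F}}(m)=w_{\tauto{G}}(m)$ for every word of the form $Q^{\ell-1}(Q\times Q)^+$. Since both automata assign weight $0$ to every word outside this set, this is exactly full equivalence of $\tauto{F}$ and $\tauto{G}$ as weighted automata on $\Sigma^*$.

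To conclude, I would invoke the classical theorem of Sch\"utzenberger (see \cite{handbookWFA}) that equivalence of two $\mathbb{Q}$-weighted finite automata is decidable (and in fact in polynomial time, via computation of the minimal linear representation or, equivalently, by a reachability computation in the difference module). The main subtlety, rather than an obstacle, is the correct bookkeeping of the initial $Q^{\ell-1}$ prefix that encodes the left boundary of the relevant window in the configuration; this is already built into the automaton construction and ensures that the pair of configurations under comparison is $(c,\shift{k-1}\circ F(c,\cdot))$, so no further care is needed. The decidability of weighted-automaton equivalence is then applied as a black box to finish the proof.
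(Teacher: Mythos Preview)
Your proof is correct and follows the same route as the paper's: normalise the radii, use Lemma~\ref{lem:autoca} to reduce $\STOC{F}=\STOC{G}$ to equivalence of the weighted automata $\tauto{F}$ and $\tauto{G}$, and invoke decidability of $\mathbb{Q}$-weighted automaton equivalence. One harmless slip: words $q_1\cdots q_j\in Q^{j}$ with $1\leq j\leq\ell-1$ do carry non-zero weight (namely~$1$, since there are $|R|^j$ initialisation paths each of weight $|R|^{-j}$), but both automata agree on these words regardless of $R$, so full equivalence is still equivalent to agreement on $Q^{\ell-1}(Q\times Q)^+$ and your conclusion stands.
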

\begin{proof}
 Consider two CA $F$ and $G$. We can suppose that they have the same centered neighbourhood $V=\{-k,\cdots,k\}$ (if not simply increase syntactically neighbourhood apropriately). From lemma~\ref{lem:autoca} it follows that equality of $\STOC{F}$ and $\STOC{G}$ is equivalent to the equality of $w_{\tauto{F}}$ and $w_{\tauto{G}}$, i.e. equality of the weighted languages of $\tauto{F}$ and $\tauto{G}$. The problem of equivalence of weighted finite automata is decidable for weights in $\QQ$ \cite{handbookWFA}. Since $\tauto{F}$ and $\tauto{G}$ are computable from $F$ and $G$, the corollary follows.
\end{proof}

\subsection{Simplified automaton for Correlation-Free CA}
\newcommand\tautobis[1]{\mathcal{B}_{#1}}

Consider a \CFCA{} $F$ , with ${\mathcal A=(Q,R,V,\{0\},f)}$. The construction detailed in the above subsection gives a weighted automaton ${\tauto{F}=(\Sigma,A,\delta,w,i_0,\fsta)}$ which has some additional regularities due to correlation-freeness. Intuitively, memorizing the $R$-component in states of the automaton is useless. We now construct a \textit{deterministic} weighted finite automaton ${\tautobis{F}=(\Sigma,B,\delta',w',i_0',\fsta')}$ which is equivalent to $\tauto{F}$. $\tautobis{F}$ is essentially a De Bruijn graph with an initialisation part. The weights are given by the map $\cfl$ associated to the local function $f$ of $F$. Formally, $\tautobis{F}$ is defined as follows (again with $\ell=2k+1$ and $\Sigma=Q\cup Q\times Q$):
\begin{itemize}
\item $\displaystyle B = \bigcup_{0\leq i<k} Q^i$;
\item $i_0'=\epsilon$ (the single element of $Q^0$) and $F = Q^{\ell-1}$;
\item ${w': B\times\Sigma\times B\rightarrow [0,1]}$ gives implicitly $\delta'$ (all non-zero weight transitions) and is defined by:
  \begin{itemize}
  \item for any ${i<\ell-1}$, any ${\overline{a}\in Q^i}$, any $q\in Q$ we have:
    \[w'\bigl(\overline{a},q,(\overline{a}q)\bigr) = 1;\]
  \item for any ${q_1,\ldots, q_\ell\in Q}$ we have:
    \[w'\bigl((q_1,\ldots,q_{\ell-1}),(q_\ell,q),(q_2,\ldots,q_\ell)\bigr) = \bigl(\cfl(q_1,\ldots,q_\ell)\bigr)(q)\]
  \item any transition not mentioned above has weight $0$.
  \end{itemize}
\end{itemize}

We define ${w_{\tautobis{F}}:\Sigma^\ast\rightarrow [0,1]}$ as just the product of weights of the transitions of the unique path labelled by $u$. Like $\tauto{F}$, $\tautobis{F}$ can be used to compute the probabilities involved in $\STOC{F}$.

\begin{lemma}
  \label{lem:autobisca}
  Let $u\in Q^p$ and $v\in Q^{p+\ell-1}$ for some $p\geq 1$. Define ${m = v_1\cdots v_{\ell-1}\cdot (v_\ell,u_1)\cdots (v_{p+\ell-1},u_p)}$. Then, for any configuration $c\in\cyl{v}{0}$, we have ${w_{\tautobis{F}}(m) = \bigl(\STOC{F}(c)\bigr)(\cyl{u}{k})}$.
\end{lemma}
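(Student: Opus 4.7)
The plan is to exploit the fact that $\tautobis{F}$ is deterministic, so there is a unique accepting path for any recognized word. Hence $w_{\tautobis{F}}(m)$ is simply the product of the weights along that path, and the goal reduces to identifying this product and matching it with the correlation-free product formula of Lemma~\ref{lem:cf}.

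First I would trace the unique path in $\tautobis{F}$ induced by reading
\[m = v_1\cdots v_{\ell-1}\cdot (v_\ell,u_1)\cdots (v_{p+\ell-1},u_p).\]
The initialization phase reads the letters $v_1,\ldots,v_{\ell-1}\in Q$ one by one, each transition having weight $1$, and leaves the automaton in the de Bruijn state $(v_1,\ldots,v_{\ell-1})\in Q^{\ell-1}$. Then, for $i=1,\ldots,p$, by the second family of transitions of $\tautobis{F}$, from state $(v_i,\ldots,v_{i+\ell-2})$ and reading the letter $(v_{i+\ell-1},u_i)\in Q\times Q$, the automaton moves to state $(v_{i+1},\ldots,v_{i+\ell-1})$ with weight $\bigl(\cfl(v_i,\ldots,v_{i+\ell-1})\bigr)(u_i)$.

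Multiplying the weights along this unique path (the initialization part contributes a factor $1$) yields
\[w_{\tautobis{F}}(m) = \prod_{i=1}^{p}\bigl(\cfl(v_i,\ldots,v_{i+\ell-1})\bigr)(u_i).\]
To conclude, I apply Lemma~\ref{lem:cf} to any configuration $c\in\cyl{v}{0}$, after shifting it so that the target cylinder $\cyl{u}{k}$ becomes centered at $0$: the cell at position $k+i-1$ in the image of $c$ by $F$ is governed by the neighbourhood at positions $i-1,\ldots,i+\ell-2$, which in $c$ carries exactly the letters $v_i,\ldots,v_{i+\ell-1}$. The product formula of Lemma~\ref{lem:cf} then gives precisely the right-hand side above, so $\bigl(\STOC{F}(c)\bigr)(\cyl{u}{k}) = w_{\tautobis{F}}(m)$.

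I do not expect any real obstacle here; the only mildly delicate point is bookkeeping the indexing, namely that the $\ell-1$ initialization letters of $m$ correctly load the de Bruijn state before the product starts to accumulate, and that the shift by $k$ in the target cylinder position matches the pairing between $v$ and $u$ encoded in the labels of $m$.
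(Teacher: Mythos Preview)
Your proposal is correct and follows exactly the route the paper takes: the paper's proof merely says that there is a unique path in $\tautobis{F}$ recognizing $m$, that its initialization part has weight $1$, and that the remaining product of $\cfl(\cdot)$ values is precisely the expression in Lemma~\ref{lem:cf}. You have simply written out this argument in full detail, including the index tracking; nothing is missing and nothing differs from the paper's approach.
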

\begin{proof}
Straightforward from the construction of $\tautobis{F}$ and lemma~\ref{lem:cf}: there is only one path in $\tautobis{F}$ recognizing $m$ and, after the initialization part ($v_1\cdots v_{\ell-1}$) of weight $1$, the weight of the path is given exactly by the product of $\cfl(\cdot)$ appearing in lemma~\ref{lem:cf}.
\end{proof}

Armed with this construction, we will now reevalute the \PPT{} problem in the special case of \CFCA{}. Unlike in the general case of Theorem~\ref{thm:pptundeci}, the problem will turn out to be decidable. First, let us introduce some vocabulary and a structural lemma.\\
We say that a word $m=v_1\cdots v_{\ell-1}\cdot (v_\ell,u_\ell)\cdots (v_{\ell+p-1},u_{\ell+p-1})$ of $\Sigma^*$ is \emph{valid} if $u_\ell=\Qi$, ${u_{\ell+1}=\cdots=u_{\ell+p-2}={\Qo}}$, $u_{\ell+p-1}=\Qf$, and  ${w_{\tautobis{F}}(m)>0}$. We say that a word is \emph{short} if its length is at most $\ell-1+|Q|^{\ell-1}+3$, i.e.  the number of states of the automaton plus three.
We say that a word $m=v_1\cdots v_{\ell-1}(v_\ell,u_1)\cdots(v_{\ell+q-1},u_q)\in\Sigma^*$ \emph{contains a loop} of length $\kappa$ at position~$i$, with $\ell\leq i \leq |m|-\kappa$, if $v_{i-\ell+j} = v_{i+\kappa-\ell+j}$ for all $1\leq j \leq \ell-1$; in this case, we say that the word $l=(v_{i},u_{i-\ell+1})\cdots(v_{i+\kappa-1},u_{i-\ell+\kappa})$ is a \emph{loop} of $m$ at position~$i$. The quantity $\bigl(w_{\tautobis{F}}(v_{i+\kappa-\ell+1}\cdots v_{i+\kappa-1}(v_{i},u_{i-\ell+1})\cdots(v_{i+\kappa-1},u_{i-\ell+\kappa}))\bigr)^{1/\kappa}$ is conveniently called the \emph{linear weight} $lw_{\tautobis{F}}(l)$  of the loop $l$. Lastly, we say that a loop $l = (v_1,u_1)\cdots(v_{\kappa},u_{\kappa})$ is a \emph{valid loop} if $u_1=\cdots=u_{\kappa} = {\Qo}$. Note that if a valid word contains a loop $l$ at position~$i$ with $\ell+1< i< |m|-|l|$, then for all $q\in \mathbb N$, the word $m_{1..i-1}l^q m_{i+|l|..|m|}$ is valid and has weight $w_{\tautobis{F}}(m)\times lw_{\tautobis{F}}(l)^{(q-1)|l|}$.
 
\begin{lemma} \label{lem:dec:CFCA:1D}
Assume there is a valid word $m'$, of length $\ell+p'-1$, having weight ${w_{\tautobis{F}}(m')> \alpha \lambda^{p'}}$ for some $\alpha>0$ and $\lambda <1$. Then there must also be a valid word ${m = v_1\cdots v_{\ell-1}\cdot (v_\ell,u_\ell)\cdots (v_{\ell+p+1},u_{\ell+p-1})}$, of length $\ell+p+1$, with weight ${w_{\tautobis{F}}(m)> \alpha \lambda^{p}}$ and such that at least one of the two following properties holds:
\begin{itemize}
\item
$m$ is short. 
\item
$m$ consists in three parts: $m = m_1 l^q m_2$ where
	\begin{itemize}
	\item 
	${m_1 = v_1\cdots v_{\ell-1} \cdot (v_\ell,u_\ell)\cdots(v_{\ell+a-1},u_{\ell+a-1})}$ with $2\leq a \leq |Q|^{\ell-1}+2$ 
	\item 
	$q\in\mathbb N$ and $l$ is a valid loop of length at most $|Q|^{\ell-1}$ and with linear weight $> \lambda$, and 
	\item
	$m_2 = {(v_{\ell+p-b},u_{\ell+p-b})\cdots(v_{\ell+p-1},u_{\ell+p-1})}$ with $1\leq b\leq|Q|^{\ell-1}$.
	\end{itemize}
\end{itemize}
\end{lemma}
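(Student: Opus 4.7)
My plan is to use a pumping-style argument on the weighted De~Bruijn automaton $\tautobis{F}$, the main engine being a loop-removal step. If $m'$ is already short, we take $m=m'$ and case~1 holds. Otherwise, I would proceed iteratively: as long as the current word $m_i$ contains some loop $l$ of linear weight at most $\lambda$, remove one occurrence of $l$ to obtain $m_{i+1}$. Removing a loop of length $\kappa$ and linear weight $\mu\leq\lambda$ divides the weight by $\mu^\kappa$ and decreases the length parameter by $\kappa$, so the inequality $w_{\tautobis{F}}(m_i)>\alpha\lambda^{p_i}$ is preserved: the new bound rewrites as $w_i>\alpha\lambda^{p_i}\cdot(\mu/\lambda)^\kappa$, which is weaker than the assumed inequality since $\mu\leq\lambda$. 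The process terminates (lengths strictly decrease), yielding a word $m^*$ which either is short (case~1, with $m=m^*$) or has the crucial property that every loop it contains has linear weight strictly greater than $\lambda$.

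In the second situation, I would build $m$ in the form $m_1\,l^q\,m_2$ as follows. Since the main component of $\tautobis{F}$ has only $|Q|^{\ell-1}$ states, the pigeonhole principle applied to the state trajectory of $m^*$ starting just after the $\Qi$-emission produces a repeated state within $|Q|^{\ell-1}+1$ consecutive steps. This yields a loop $l$ of length $\kappa\leq|Q|^{\ell-1}$ starting at some position $j$ with $j-\ell\leq|Q|^{\ell-1}$. The loop $l$ lies entirely inside the $\Qo$-region (and is thus valid) because $m^*$ is long enough that the final $\Qf$-emission is still far ahead, and $lw_{\tautobis{F}}(l)>\lambda$ by construction. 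Take $m_1$ to be the prefix of $m^*$ up to position $j$, so that $a=|m_1|-(\ell-1)\leq|Q|^{\ell-1}+1$. For the suffix, I would pick \emph{any} shortest positive-weight continuation in $\tautobis{F}$ starting from the state reached after $l$, emitting a word of the form $\Qo^{+}\Qf$ and ending at a final state. Such a continuation exists (the tail of $m^*$ beyond $l$ provides one) and a standard reachability argument in the $|Q|^{\ell-1}$-state automaton bounds its length $b$ by $|Q|^{\ell-1}$.

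Setting $m=m_1\,l^q\,m_2$ with $q\geq 1$, the resulting path in $\tautobis{F}$ is valid (because $l$ is a loop at the state where $m_1$ ends and $m_2$ starts) and its weight factorizes as $w_{\tautobis{F}}(m_1)\cdot lw_{\tautobis{F}}(l)^{q\kappa}\cdot w_{\tautobis{F}}(m_2)$, all three factors being strictly positive. With $p=a+q\kappa+b$, the target inequality $w_{\tautobis{F}}(m)>\alpha\lambda^{p}$ rewrites as $w_{\tautobis{F}}(m_1)\,w_{\tautobis{F}}(m_2)>\alpha\lambda^{a+b}\bigl(\lambda/lw_{\tautobis{F}}(l)\bigr)^{q\kappa}$; since $\lambda/lw_{\tautobis{F}}(l)<1$, the right-hand side tends to $0$ as $q\to\infty$, so any sufficiently large $q$ does the job, placing $m$ in case~2. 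The main technical obstacle is bookkeeping: matching the precise constants in the statement ($a\leq|Q|^{\ell-1}+2$, $b\leq|Q|^{\ell-1}$, and the short-word threshold $\ell-1+|Q|^{\ell-1}+3$) and, in particular, verifying that the pigeonhole-produced loop never collides with the $\Qf$-emission — this last point is precisely why the short-word threshold is slightly larger than $|Q|^{\ell-1}$.
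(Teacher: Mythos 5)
Your proof is correct and follows essentially the same route as the paper's: strip all loops of linear weight at most $\lambda$ (which preserves the inequality $w>\alpha\lambda^{p}$), then either the result is short or the pigeonhole principle yields a valid loop of linear weight greater than $\lambda$ near the start, which is pumped to recover the weight bound because $\lambda/lw_{\tautobis{F}}(l)<1$. The only deviations are cosmetic bookkeeping, which you yourself flag: you take $m_2$ to be an arbitrary shortest positive-weight continuation rather than the loop-stripped tail of the original word (both work, since the pumping argument only needs $m_2$ to have positive weight and length at most $|Q|^{\ell-1}$), and to secure the lower bound $a\geq 2$ your pigeonhole window should start one pair after the $\Qi$-emission rather than immediately after it, exactly as the paper does by taking its subword $n'$ to begin at position $\ell+2$.
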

\begin{prOOf}{of Lemma~\ref{lem:dec:CFCA:1D}}
Consider a valid word $m' = v_1\cdots v_{\ell-1}\cdot (v_\ell,u_\ell)\cdots (v_{p'+\ell-1},u_{p'+\ell-1})$  with weight ${w_{\tautobis{F}}(m')> \alpha \lambda^{p'}}$.
If $m'$ is short, take $m=m'$.\\
Otherwise, consider the subword $n'=(v_{2+\ell},u_{2+\ell})\cdots (v_{p'+\ell-2},u_{p'+\ell-2})$, of length $p'-3$. Since $m'$ is not short, $p'+\ell-1\geq \ell-1+|Q|^{\ell-1}+3$, hence $p'-3\geq|Q|^{\ell-1}$, therefore $n'$ must contain some loop $l'$ of length $\kappa'$. Since $m'$ is valid we have $u_\ell=\Qi$, $u_{1+\ell}=\Qo$, $u_{p'+\ell-1}=\Qf$, and the rest of the $u$'s are equal to $\Qo$, i.e. those of $n'$ are equal to $\Qo$ and so the loop is valid. We construct $m''$ by recursively removing all loops $l'$ that have linear weight $\leq \lambda$. The weight of ${m'' = v_1\cdots v_{\ell-1}\cdot (v_\ell,u_\ell)\cdots (v_{p''+\ell-1},u_{p''+\ell-1})}$ verifies ${w_{\tautobis{F}}(m'')> \lambda^{p''}}$ as well, and $m''$ is valid. If $m''$ is short, take $m=m''$.\\
Otherwise, consider the subword $n''=(v_{2+\ell},u_{2+\ell})\cdots (v_{p''+\ell-2},u_{p''+\ell-2})$ of $m''$, with length $p''-3$. Again $n''$ must contain some valid loop, and all its loops have linear weight $>\lambda$.
We construct $\tilde m$ from $m''$ by recursively removing all but one  loop, which we call $l$. This $\tilde m$ is again valid, and indeed of the form $m_1lm_2$ with $m_1$ and $m_2$ as specified in the lemma. Finally, since the linear weight of $l$ is $>\lambda$, one can choose $q\in\mathbb N$ so that the weight of the valid word $m=m_1l^qm_2$, of length $\ell-1+p$, verifies ${w_{\tautobis{F}}(m)> \alpha \lambda^p}$.
\end{prOOf}

The decidability result follows directly from this lemma.

\begin{theorem}\label{th:pptdeccfca}
  The problem \PPT{} is decidable for \CFCA{} of dimension 1 when the threshold function $\vartheta$ verifies $\vartheta(n) = \alpha \lambda^n$ for some $\alpha>0$ and $\lambda <1$ which may depends on the \CFCA.
\end{theorem}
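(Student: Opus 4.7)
The plan is to rephrase PPT as an existence question about the weighted De Bruijn automaton $\tautobis{F}$: by Lemma~\ref{lem:autobisca}, having a configuration $c$ and $n\geq 1$ with $\bigl(\STOC{F}(c)\bigr)(\cyl{\Qi\cdot\Qo^n\cdot\Qf}{0})>\alpha\lambda^n$ is equivalent to having a valid word $m$ of $\Sigma^*$ (in the sense defined just before Lemma~\ref{lem:dec:CFCA:1D}) with $w_{\tautobis{F}}(m)>\alpha\lambda^{|m|-\ell+1}$. So PPT reduces to deciding the existence of such a valid word.

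From here, Lemma~\ref{lem:dec:CFCA:1D} turns this existence question into a finite search. The algorithm has two parts. First, enumerate all short valid words (there are finitely many, since their length is at most $\ell-1+|Q|^{\ell-1}+3$). For each, compute $w_{\tautobis{F}}(m)$ as the product of rational weights along its unique path in the deterministic automaton $\tautobis{F}$, and test the threshold inequality. Second, enumerate all factorizations $m_1\cdot l\cdot m_2$ satisfying the shape constraints of the lemma: $|m_1|\leq\ell-1+|Q|^{\ell-1}+1$, $|l|\leq|Q|^{\ell-1}$, $|m_2|\leq|Q|^{\ell-1}$, the word $m_1 l m_2$ is valid in $\tautobis{F}$, and $l$ is a valid loop. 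For each such triple, check whether $lw_{\tautobis{F}}(l)>\lambda$ (equivalently, compare the rational $lw_{\tautobis{F}}(l)^{|l|}$ with $\lambda^{|l|}$). Accept if and only if either a short witness is found in the first part, or a triple with $lw_{\tautobis{F}}(l)>\lambda$ is found in the second part.

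Completeness follows directly from Lemma~\ref{lem:dec:CFCA:1D}: any yes-instance of PPT produces a valid word meeting the threshold, hence either a short witness or a structured one $m_1 l^q m_2$ with $lw_{\tautobis{F}}(l)>\lambda$, both of which are detected by the algorithm. For soundness, short witnesses obviously certify PPT. In the structured case, using the identity $w_{\tautobis{F}}(m_1 l^q m_2)=w_{\tautobis{F}}(m_1 l m_2)\cdot lw_{\tautobis{F}}(l)^{(q-1)|l|}$ recorded just before Lemma~\ref{lem:dec:CFCA:1D} and comparing with the threshold $\alpha\lambda^{|m_1|-\ell+1+q|l|+|m_2|}$, the ratio equals $\frac{w_{\tautobis{F}}(m_1 l m_2)}{\alpha\lambda^{|m_1|-\ell+1+|l|+|m_2|}}\cdot\bigl(lw_{\tautobis{F}}(l)/\lambda\bigr)^{q|l|}$, which, since $lw_{\tautobis{F}}(l)>\lambda$ and $w_{\tautobis{F}}(m_1 l m_2)>0$ by validity, is unbounded in $q$ and eventually exceeds $1$. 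Any such $q$ then yields the required configuration (read off $m_1 l^q m_2$) and pattern length, certifying the PPT instance.

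The main subtlety is the pumping step: one must ensure that the existence of a single loop with linear weight \emph{strictly} larger than $\lambda$ suffices to beat the exponential threshold once pumped. This is precisely what Lemma~\ref{lem:dec:CFCA:1D} provides and is the reason for the strict inequality $lw_{\tautobis{F}}(l)>\lambda$ baked into its statement.
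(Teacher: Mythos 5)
Your proposal is correct and follows essentially the same route as the paper: the paper's proof is exactly this two-phase algorithm (search for a short valid witness, then for a valid word $m_1 l m_2$ with the size constraints and a valid loop of linear weight $>\lambda$), with correctness delegated to Lemma~\ref{lem:dec:CFCA:1D}. You additionally make explicit the soundness direction via the pumping identity $w_{\tautobis{F}}(m_1 l^q m_2)=w_{\tautobis{F}}(m_1 l m_2)\cdot lw_{\tautobis{F}}(l)^{(q-1)|l|}$, which the paper leaves implicit; the small off-by-constant discrepancies in your exponents are harmless.
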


\begin{proof}
We can decide \PPT{}
on input $F$ by the following algorithm:
  \begin{enumerate}
  \item Check if there is a short valid word $m$ verifying ${w_{\tautobis{F}}(m)> \alpha \lambda^{|m|-\ell-1}}$; if one is found answer \textbf{YES};
  \item If none is found, check if there is a valid word $m = m_1 l m_2$ with  $2\leq |m_1| \leq |Q|^{\ell-1}+2$, $1\leq |m_2| \leq|Q|^{\ell-1}$, and where $l$ is a valid loop of length at most $|Q|^{\ell-1}$ and with linear weight $>\lambda$; if one is found answer \textbf{YES} else answer \textbf{NO}.
  \end{enumerate}
\end{proof}

\subsection{Non-deterministic global function and model checking}

In this section we will briefly show how model checking methods based on B\"uchi automata and already developped for 1D deterministic CA \cite{Sutner09} can be extended to non-deterministic automata. It is not the purpose of our paper to give a self-contained exposition of this topic and we refer the reader to \cite{Sutner10,Finkel11} for more details and extensions of this approach.

The central concept here is that of \textit{$\omega$-automatic structure}. Intuitively it is a structure where objects are (semi-)infinite words over a finite alphabet $A$ and relations, seen as languages of (semi-)infinite words over $A\times A$, are all B\"uchi-recognizable. More generally, one can consider structures which are isomorphic to one of this form in order to allow other types of objects. In our case, objects are (bi-)infinite words and we consider the bijection ${\phi : A^\Z\rightarrow (A\times A)^\N}$ (for any finite set $A$) defined by:
\[\phi(c) = n\mapsto \bigl(c(n),c(-n)\bigr).\]

\begin{definition}
  A structure ${\bigl(A^\Z,(R_i)_{1\leq i\leq n}\bigr)}$, where each $R_i$ is a relation of finite arity $k_i$ over $A^\Z$, is \textit{$\omega$-automatic} if for each $i$ the following relation is B\"uchi-recognizable:
\[\bigl\{\bigl(\phi(c_1),\ldots,\phi(c_{k_i})\bigr) : (c_1,\ldots,c_{k_i})\in R_i\bigr\}.\]
Such a structure is finitely representable by the list of B\"uchi automata recognizing the relations.
\end{definition}

This definition is a particular case used in \cite{Finkel11} of a more general notion introduced in \cite{Hodgson} and further developped since then \cite{Kuske09}. The interest of $\omega$-automatic structure relies in their decidability. Strangely enough, this decidability result is not always stated in its uniform version which is the most useful in the context of cellular automata (while \cite{Finkel11} is not interested in uniformity, \cite{Sutner09} does not state uniform results but all the ingredients of uniformity are present in the proofs). 

\begin{theorem}
  \label{thm:deciomegaauto}
  The first-order theory of $\omega$-automatic structures is uniformly decidable, i.e. there is an algorithm that take has input an $\omega$-automatic structure, a first-order formula and decides whether the structure satisfies the formula.
\end{theorem}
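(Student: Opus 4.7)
The plan is to proceed by structural induction on the first-order formula, maintaining the invariant that for every subformula $\varphi(x_1,\ldots,x_k)$ with free variables among $x_1,\ldots,x_k$ one can effectively construct a Büchi automaton over the alphabet $(A\times A)^k$ which recognizes the set $\{(\phi(c_1),\ldots,\phi(c_k)) : \varphi(c_1,\ldots,c_k) \text{ holds}\}$. The base cases are the atomic formulas: for a relation $R_i$ the required automaton is provided in the input representation of the structure; for an equality $x=y$ the automaton is trivial to construct directly. Once this invariant is established, a sentence (no free variables) corresponds to a Büchi automaton over a one-letter alphabet, and the satisfaction question reduces to checking non-emptiness of that automaton, which is decidable.

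The inductive step rests on three classical effective closure properties of the class of Büchi-recognizable $\omega$-languages. First, disjunction and conjunction are handled by the standard product constructions (with appropriate handling of acceptance in the intersection case via a flag component tracking which acceptance set has been visited last). Second, existential quantification $\exists x_i\, \varphi$ is handled by projection: from the automaton for $\varphi$ one obtains an automaton for $\exists x_i\,\varphi$ by erasing the $i$-th component of each letter, which yields a nondeterministic Büchi automaton on the remaining components; closure of Büchi automata under nondeterministic simulation of projection is immediate. Third, negation is handled by complementation of Büchi automata, which is effective.

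Before invoking complementation, I would make a preliminary book-keeping step to ensure compatibility of encodings: if $\varphi$ has free variables $x_1,\ldots,x_k$ and its automaton reads tuples of length $k$, then in order to combine it with the automaton of $\psi$ whose free variables are an overlapping but different set, one needs a cylindrification construction that extends the alphabet to include the missing components and accepts any letter in those positions. This is routine but must be stated explicitly so that all Boolean operations take place over a common alphabet. I would also note that the bijection $\phi : A^\Z \to (A\times A)^\N$ used to encode bi-infinite configurations does not cause any difficulty: every construction above is carried out directly on the encoded alphabet $A\times A$.

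The main obstacle is effective complementation of Büchi automata: the original argument of Büchi via congruences, or the more modern constructions of McNaughton, Safra, or the Kupferman--Vardi rank-based approach, give an algorithm that from a Büchi automaton with $n$ states produces a Büchi automaton for the complement with $2^{O(n\log n)}$ states. The existence of such an effective construction is what one needs, not any sharp bound; I would cite this classical result rather than reproving it. Uniformity of the whole decision procedure follows because every step in the induction transforms automata algorithmically from the automata representing the subformulas, and at the end emptiness of a Büchi automaton is decidable in polynomial time by reduction to detection of a reachable nontrivial strongly connected component intersecting the acceptance set.
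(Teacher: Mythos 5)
Your proposal is correct and follows exactly the route the paper has in mind: the paper's own ``proof'' simply defers to Hodgson and to B\"uchi's classical toolkit (effective closure of B\"uchi-recognizable languages under Boolean operations and projection, plus decidability of emptiness), which is precisely the induction on formulas you carry out in detail. The only difference is that you spell out the cylindrification and complementation steps that the paper leaves implicit by citation.
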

\begin{proof}
  Although not stated exactly in this way, all details for the proof of this result are in \cite{Hodgson}, which is to our knowledge the first paper to introduce explicitely the notion of $\omega$-automatic structure. In fact, all the tools needed to prove this result (essentially closure properties of B\"uchi automata and decidability of the emptiness problem for recognized language) were already present in the seminal work of B\"uchi in the 60s.
\end{proof}

We can now state a number of decidability results. Some of them can be compared with undecidability results in dimension 2 and the last two are 'unquantified' variants of the \PPT{} problem which is also undecidable (see section~\ref{sec:otherundeci}).

\begin{corollary}\label{cor:buchi}
  The following problems are decidable for 1D stochastic CA:
  \begin{enumerate}
  \item given $F$, is it noisy?
  \item given $F$, is it surjective?
  \item given $F$, is it injective?
  \item given $F$, is it pre-injective?
  \item given $F$ and $G$, do we have ${\NDET{F}^t =\NDET{G}^t}$? (where $t$ is any fixed positive integer)
  \item given $F$, is there $u\in L$ and $c$ such that ${\NDET{F}(c)\subseteq\cyl{u}{0}}$? (where $L$ is any fixed rational language)
  \item given $F$, is there $u\in L$ and $c$ such that ${\NDET{F}(c)\cap\cyl{u}{0}\not=\emptyset}$? (where $L$ is any fixed rational language)
  \end{enumerate}
\end{corollary}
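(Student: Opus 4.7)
The plan is to reduce each item to the uniform decidability of the first-order theory of $\omega$-automatic structures (Theorem~\ref{thm:deciomegaauto}). The cornerstone will be a construction, uniform in $F$, of a B\"uchi automaton recognizing the graph $R_F=\{(c,c'):c'\in\NDET{F}(c)\}$ under the encoding $\phi$. I would build this automaton by exploiting the locality of the explicit global function: as $\phi(c)$ and $\phi(c')$ are read in parallel, the automaton maintains a sliding window of constant size around the current positions $n$ and $-n$, guesses the corresponding random symbols $s(n),s(-n)$ on the fly, and checks that they would indeed produce the observed $c'(n)$ and $c'(-n)$. Since only local verifications are needed, all runs that never fail locally can be accepted, and a trivial B\"uchi condition suffices. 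Verifying that this construction is effective and uniform in $F$ is where most of the care goes; the rest of the argument is essentially bookkeeping.

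Alongside $R_F$ (and $R_G$ for item 5), I would enrich the structure with a few other B\"uchi-recognizable predicates: the binary relation $\mathrm{Diff}$ expressing that two configurations differ in only finitely many positions (accept iff, from some step on, the two $\phi$-coordinates always agree); for any rational $L\subseteq Q^*$, the unary predicate $D_L$ expressing that some finite prefix of the argument lies in $L$ (run a DFA for $L$ on the positive-coordinate half of $\phi$); and, for item 6, the unary predicate $E_{F,L}(c)\iff\exists u\in L,\ \NDET{F}(c)\subseteq\cyl{u}{0}$. The last one is the least immediate: it amounts to running a DFA for $L$ in parallel on the sequence $v_0,v_1,\ldots$ of values forced at positions $0,1,\ldots$, where $v_i$ is defined (and fed to the DFA) precisely when the local window of $F$ around position $i$ admits a single image, rejecting as soon as this fails and accepting whenever the DFA passes a final state. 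For item 5 and fixed $t$, no new base relation is needed: $R_F^t$ is FO-definable from $R_F$ via $t-1$ existentially quantified intermediate configurations.

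Granted these ingredients, each item becomes routine to phrase in FO over the structure $(Q^\Z, R_F, R_G, \mathrm{Diff}, D_L, E_{F,L})$: (1) noisy $\equiv \forall c\,\forall c'\,R_F(c,c')$; (2) surjective $\equiv \forall c\,\exists c'\,R_F(c',c)$; (3) injective $\equiv \forall c_1\,\forall c_2\,\forall c'\,(R_F(c_1,c')\wedge R_F(c_2,c')\Rightarrow c_1=c_2)$; (4) pre-injective, same as (3) with the additional hypothesis $\mathrm{Diff}(c_1,c_2)$; (5) $\forall c\,\forall c'\,(R_F^t(c,c')\Leftrightarrow R_G^t(c,c'))$; (6) $\exists c\,E_{F,L}(c)$; (7) $\exists c\,\exists c'\,(R_F(c,c')\wedge D_L(c'))$. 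The main obstacle throughout is really the construction and correctness of the B\"uchi automaton for $R_F$; all the other pieces and their uniform computability from the inputs $F$, $G$, $L$ then follow from standard closure properties of $\omega$-regular languages, after which Theorem~\ref{thm:deciomegaauto} concludes.
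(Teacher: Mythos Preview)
Your proposal is correct and follows the same overall strategy as the paper: encode configurations via $\phi$, show that the relevant relations ($R_F$, $\mathrm{Diff}$, $D_L$, which are the paper's $\rightarrow_F$, $\approx$, $\leadsto L$) are B\"uchi-recognizable, express each property as a first-order sentence, and invoke Theorem~\ref{thm:deciomegaauto}. Items (1)--(5) and (7) are phrased exactly as in the paper.

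The only genuine divergence is item (6). The paper handles it with the first-order formula $\exists x,\forall y:\ x\rightarrow_F y \Rightarrow y\leadsto L$ over the base relations, while you introduce a dedicated unary predicate $E_{F,L}$ and build its B\"uchi automaton directly. Your route costs one extra construction, but it is not gratuitous: taken literally, the paper's formula only asserts that every image of some $c$ lies in $\bigcup_{u\in L}\cyl{u}{0}$, i.e.\ each image has \emph{some} $L$-prefix, whereas the statement asks for a \emph{single} $u\in L$ that works for all images simultaneously. Your automaton for $E_{F,L}$ (feed the locally forced values $v_0,v_1,\ldots$ to a DFA for $L$; succeed once it hits a final state; fail if a non-forced position is encountered before that) captures exactly the intended uniform condition and is clearly B\"uchi-recognizable. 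So your extra care on item (6) is justified; apart from this point the two arguments coincide.
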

\begin{prOOf}{sketch}
  First, the following relations are B\"uchi-recognizable through encoding by $\phi$ (see \cite{Sutner09,Sutner10} for detailed proofs about all of them except the last one):
  \begin{itemize}
  \item ${x = y}$,
  \item ${x \approx y}$, i.e. ${x=y}$ up to a finite number of differences,
  \item ${x\rightarrow_F y}$, i.e. ${y\in\NDET{F}(x)}$,
  \item ${x\leadsto L}$, i.e. ${x\in\cyl{u}{0}}$ for some $u\in L$.
  \end{itemize}
  Then all problems above can be expressed as a first-order formula in the apropriate $\omega$-automatic structure:
  \begin{enumerate}
  \item ${\forall y,\forall x: x\rightarrow_F y}$
  \item ${\forall y,\exists x: x\rightarrow_F y}$
  \item ${\forall y, \forall x_1, \forall x_2: (x_1\rightarrow_F y \wedge x_2\rightarrow_F y)\Rightarrow x_1=x_2}$
  \item ${\forall y, \forall x_1, \forall x_2: (x_1\rightarrow_F y \wedge x_2\rightarrow_F y)\Rightarrow x_1\approx x_2}$
  \item ${\forall x, \forall y: \bigl(\exists z_1,\ldots, \exists z_{k-1}, x\rightarrow_F z_1\wedge\cdots\wedge z_{k-1}\rightarrow_F y\bigr)\Leftrightarrow \bigl(\exists z_1,\ldots, \exists z_{k-1}, x\rightarrow_G z_1\wedge\cdots\wedge z_{k-1}\rightarrow_G y\bigr)}$
  \item ${\exists x,\forall y : x\rightarrow_F y\Rightarrow y\leadsto L}$
  \item ${\exists x,\exists y : x\rightarrow_F y\wedge y\leadsto L}$
  \end{enumerate}
  Therefore, deciding one of the problem boils down to building the apropriate $\omega$-automatic structure and applying the algorithm of Theorem~\ref{thm:deciomegaauto}.
\end{prOOf}

\section{Intrinsic Simulation}
\label{sec:simul}

The purpose of this section is to give a precise meaning to the sentence ``$\AUTO{A}$ is able to simulate $\AUTO{B}$'' or equivalently ``$\AUTO{A}$ contains  the behavior of $\AUTO{B}$''. 

Our approach follows a series of works on simulations between classical deterministic CA \cite{phdrapaport,phdollinger,phdtheyssier,bulking1,bulking2}.
We are going to define simulation pre-orders on stochastic CA which extend the simulation pre-orders defined over classical deterministic CA in \cite{bulking2}. More precisely, we want the new pre-order to be exactly the classical pre-order when restricted to deterministic CA. For general background and motivation behind this simulation pre-order approach we refer to \cite{bulking1,bulking2}. Intrinsic simulation has also been brought to deterministic quantum CA in \cite{ArrighiNUQCA}.

In each case (the deterministic, the non-deterministic, and the stochastic global functions), we will define simulation as an equality of dynamics up to some \emph{local} transformations.

\textit{In this section and the next one, all CA considered are one-dimensional. There is no difficulty to extend all definitions and results to higher dimensions.}

\subsection{Transformations}

The transformations we consider are natural stochastic extensions of the transformation defined in \cite{bulking1,bulking2} for the  classical deterministic CA. These transformations can be divided into two categories: \emph{trimming operations} which allow to trim unwanted parts off the dynamics, and \emph{rescaling transformations} which augment the set of states and/or the neighborhoods.

\subsubsection{Trimming operations} 

They are based on three ingredients: 1) renaming states; 2) restricting to a stable subset of states; and 3) merging  compatible states. These ingredients are synthetized into two definitions (state renaming is implicit in both definitions).

\begin{definition}
Let $\CAA=(Q,R,V,V',f)$ be a stochastic CA. 
\begin{itemize}
\item if $i:Q'\rightarrow Q$ is an injective function such that $Y=(i(Q'))^\ZZ$ is $F$-stable (i.e. $F(Y,R^\ZZ)\subseteq Y$) then the \emph{$i$-restriction} of $\CAA$ is the stochastic CA:
  \[\rest{i}{\CAA}=(Q',R,V,V',\rest{i}{f})\]
  where $\rest{i}{f}$ is the local function associated with the explicit global function $\rest{i}{F}$ such that, $\forall c\in I$, $\forall s\in R^\ZZ$, $\rest{i}{F}(c,s) = I^{-1}\circ F(I(c),s)$ where $I:Q'^\ZZ\rightarrow Q^\ZZ$ denotes the cell-by-cell extension of $i$;
\item if $\pi:Q\rightarrow Q'$ is surjective and $F$-compatible (\textit{s.t.} ${\Pi\circ F(c,s) = \Pi\circ F(c',s)}$ for all $s$ and all $c,c'$ such that $\Pi(c)=\Pi(c')$, where $\Pi:Q^\ZZ\rightarrow Q'^\ZZ$ is the cell-by-cell extension of $\pi$), then the \emph{$\pi$-projection} of $\CAA$ is the stochastic~CA:
  \[\proj{\pi}{\CAA}=(Q',R,V,V',\proj{\pi}{f})\]
  where $\proj{\pi}{f}$ is the local function associated with the explicit global function $\proj{\pi}{F}$ such that $\proj{\pi}{F}(c',s) = \Pi\circ F(c,s)$ where $c$ is any configuration in $\Pi^{-1}(c')$.
\end{itemize}
\end{definition}

If $i:Q'\rightarrow Q$ and $\pi:Q'\rightarrow Q''$ verify the required  stability and compatibility conditions, we denote by $\mix{\pi}{i}{\CAA}$ the $\pi$-projection of the $i$-restriction of \CAA{}.

\begin{definition}\label{def:locrel}
  Let $\CAA_1=(Q_1,R_1,V_1,V_1',f_1)$ and $\CAA_2=(Q_2,R_2,V_2,V_2',f_2)$ be two arbitrary stochastic CA. We define the following relations:
  \begin{itemize}
  \item $\CAA_1\SSUB\CAA_2$, $\CAA_1$ is a \emph{stochastic subautomaton} of $\CAA_2$, if there is some \mbox{$i$-restriction} of $\CAA_2$ such that $\STOC{F_1}=\STOC{\rest{i}{F_2}}$;
  \item $\CAA_1\SPROJ\CAA_2$, $\CAA_1$ is a \emph{stochastic factor} of $\CAA_2$, if there is some $\pi$-projection of $\CAA_2$ such that $\STOC{F_1}=\STOC{\proj{\pi}{F_2}}$;
    \DELETE{ \item $\CAA_1\NSUB\CAA_2$, $\CAA_1$ is a \emph{non-deterministic subautomaton} of $\CAA_2$, if there is some $i$-restriction of $\CAA_2$ such that $\NDET{F_1}=\NDET{\rest{i}{F_2}}$;
    \item $\CAA_1\NPROJ\CAA_2$, $\CAA_1$ is a \emph{non-deterministic factor} of $\CAA_2$, if there is some $\pi$-projection of $\CAA_2$ such that $\NDET{F_1}=\NDET{\proj{\pi}{F_2}}$.
    \end{itemize}
    Moreover, if $\CAA_1$ and $\CAA_2$ are deterministic, we also define
    \begin{itemize}
    \item $\CAA_1\DSUB\CAA_2$, $\CAA_1$ is a \emph{deterministic subautomaton} of $\CAA_2$, if there is some \mbox{$i$-restriction} of $\CAA_2$ such that $\DET{F_1}=\DET{\rest{i}{F_2}}$;
    \item $\CAA_1\DPROJ\CAA_2$, $\CAA_1$ is a \emph{deterministic factor} of $\CAA_2$, if there is some $\pi$-projection of $\CAA_2$ such that $\DET{F_1}=\DET{\proj{\pi}{F_2}}$.}
  \end{itemize}
  Similarly, we define $\NSUB$ and $\NPROJ$ (for non-deterministic global maps) and $\DSUB$ and $\DPROJ$ (for deterministic global maps).
  We also define the three relations $\DMIX$, $\NMIX$ and $\SMIX$ using projections of restrictions. For instance: $\CAA_1\SMIX\CAA_2$ if there are $i$ and $\pi$ such that $\STOC{F_1}=\STOC{\mix{\pi}{i}{F_2}}$.
\end{definition}

\subsubsection{Rescaling transformations.}

The transformations defined so far only allow to derive a finite number of CA from a given CA (up to renaming of the states) and thus induce only a finite number of dynamics. In particular, the size of the set of states, and the size of the neighborhood, can only decrease. Following the approach taken for classical deterministic CA, we now consider \emph{rescaling transformations}, which allow to increase the set of states, the neighborhood, etc\DELETE{ while preserving the dynamics, in fact preserving the complete information in the space-time diagram}. Rescaling transformations consist in: composing with a fixed translation, packing cells into fixed-size blocks, and iterating the rule a fixed number of times. Notice that since stochastic CA are composable, they are stable under rescaling operations, whereas \CFCA{} are not.

The \emph{translation} $\shift{z}$ (for $z\in\ZZ$) is the deterministic CA whose deterministic global function verifies: ${\forall c,\forall z', \DET{\shift{z}}(c)_{z'} = c_{z'+z}}$.

Given any finite set $S$ and any $m\geq 1$, we define the bijective \emph{packing map} ${\bloc{m}: S^\ZZ\rightarrow
  \bigl(S^m\bigr)^\ZZ}$ by ${\bloc{m}(c)_z = (c_{mz},c_{mz+1},\ldots,c_{mz+m-1})}$
for all $c$ and $z$.

\begin{definition}
  Let $\CAA=(Q,R,V,V',f)$ be any stochastic CA. Let $m,t\geq 1$ and $z\in\ZZ$. The \emph{rescaling} of \CAA{} with parameters $(m,t,z)$ is the stochastic CA $\grp{\CAA}{m,t,z} = \bigl(Q^m, (R^m)^t, V_+, V_+',\grp{f}{m,t,z}\bigr)$ whose explicit global function $\grp{F}{m,t,z}$ is defined by:
  \[\grp{F}{m,t,z}(c,s) = \bloc{m}\circ \shift{z}\circ F^t(\debloc{m}(c),\debloc{m}(s^1),\ldots,\debloc{m}(s^t))\]
  where $s^1,\ldots,s^t\in (R^m)^\ZZ$ are the $t$ components of $s$ (s.t. $s^i_j = (s_j)_i$), and $V_+, V_+'$ the modified neighbourhoods following $b_m$.
\end{definition}

\subsection{Simulation Pre-Orders}

We can now define the general simulation relations.

\begin{definition}
  For each local relation $\somerel$ among the nine relations of Definition~\ref{def:locrel}, we define the associated simulation relation $\simu$ by
  \[\CAA_1\simu\CAA_2 \Leftrightarrow \exists m_1,m_2,t_1,t_2,z_1,z_2, \grp{\CAA_1}{m_1,t_1,z_1}\,\somerel\,\grp{\CAA_2}{m_2,t_2,z_2}\]
We therefore define nine simulation relations $\ssimui$, $\ssimus$, $\ssimum$, $\nsimui$, $\nsimus$, $\nsimum$, $\dsimui$, $\dsimus$ and $\dsimum$, where the subscript denotes the kind of local relation used (\textbf{i}njection, \textbf{$\pmb\pi$\hspace{-.2mm}}rojection or \textbf{m}ixed) and the superscript denotes the kind of global functions which are compared (\textbf{S}tochastic, \textbf{N}on-deterministic or \textbf{D}eterministic).

\DELETE{  To simplify notations and help the reader, we denote by 'i' (like injection) the local relations using restriction, by '$\pi$'  those using projection, and by 'm' (like mixed) those using restriction of projection. Hence, we define the nine following simulation relations:
  \begin{center}
    \small
    \begin{tabular}{r|c|c|c}
      & injection & surjection & mixed\\
      \hline
      &&&\\
      deterministic & $\dsimui$ & $\dsimus$ & $\dsimum$\\
      &&&\\
      non-deterministic & $\nsimui$ & $\nsimus$ & $\nsimum$\\
      &&&\\
      stochastic & $\ssimui$ & $\ssimus$ & $\ssimum$\\
    \end{tabular}
  \end{center}}
\end{definition}

\begin{lemma}\label{lem:restproj}
A restriction (resp. projection) of a restriction (resp. projection) of some stochastic CA \CAA{} is a restriction (resp. projection) of \CAA{}. 
Moreover, any restriction of a projection of \CAA{} is the projection of some restriction of \CAA{}.
\end{lemma}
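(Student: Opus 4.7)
The plan is to verify each of the three claims by producing an explicit set-theoretic construction and checking that the required stability/compatibility conditions transfer, working purely at the syntactic level of explicit global functions since that is how $\rest{\cdot}{\cdot}$ and $\proj{\cdot}{\cdot}$ are defined.

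For the composition of two restrictions, given $i_1:Q'\to Q$ stabilizing $\CAA$ and $i_2:Q''\to Q'$ stabilizing $\rest{i_1}{\CAA}$, I take $i = i_1\circ i_2:Q''\to Q$, which is injective. The stability of $(i(Q''))^\ZZ$ under $F$ follows immediately from the two given stabilities, since $I\circ F^{''} = F\circ I$ on $Q''^\ZZ$ where $I$ is the cell-by-cell extension of $i$, and a direct unwrapping of the definitions yields $\rest{i_2}{\rest{i_1}{\CAA}}=\rest{i_1\circ i_2}{\CAA}$. The dual statement for two projections $\pi_1:Q\to Q'$ and $\pi_2:Q'\to Q''$ is handled symmetrically with $\pi = \pi_2\circ\pi_1$: surjectivity is obvious, and $F$-compatibility of $\pi$ follows by applying the compatibility of $\pi_1$ first and then that of $\pi_2$ to its image.

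The substantive part is the last claim: given a projection $\proj{\pi}{\CAA}$ with $\pi:Q\to Q'$ and a restriction $\rest{i}{\proj{\pi}{\CAA}}$ with $i:Q''\to Q'$, I exhibit a restriction-then-projection of $\CAA$ realizing the same explicit global function. The natural candidate is to set $Q_1=\pi^{-1}(i(Q''))$ with the inclusion $i':Q_1\hookrightarrow Q$, and define $\pi':Q_1\to Q''$ by $\pi'=i^{-1}\circ\pi|_{Q_1}$, which is well-defined because $\pi(Q_1)\subseteq i(Q'')$ and $i$ is injective, and is surjective since $\pi$ itself is. The three verifications I then carry out are: (a) $Q_1^\ZZ$ is $F$-stable, obtained by unpacking $\Pi\circ F(c,s)=\proj{\pi}{F}(\Pi(c),s)$ and using that $(i(Q''))^\ZZ$ is $\proj{\pi}{F}$-stable; (b) $\pi'$ is $F|_{Q_1^\ZZ}$-compatible, which follows from the compatibility of $\pi$ because $\Pi=I\circ\Pi'$ on $Q_1^\ZZ$ and $I$ is injective; and (c) the identity of explicit global functions $\proj{\pi'}{\rest{i'}{F}}=\rest{i}{\proj{\pi}{F}}$, which reduces to the equation $i\circ\pi'=\pi\circ i'$ applied cell by cell.

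The main obstacle is purely bookkeeping, namely keeping track of which alphabet each configuration lives in and making sure the projection $\pi'$ factors correctly through $i$; no measure-theoretic or dynamical argument is required because restrictions and projections are defined structurally on the explicit global function, so once the diagram $i\circ\pi'=\pi\circ i'$ commutes and the stability/compatibility conditions are checked, all three claims are immediate consequences of the definitions.
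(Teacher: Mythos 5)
Your proposal is correct, and it carries out in detail exactly the argument the paper only alludes to: the paper's proof of this lemma simply defers to the deterministic case (Theorem~2.1 of the ``bulking'' reference) and claims the arguments adapt, while your composed injections, composed surjections, and the pullback construction $Q_1=\pi^{-1}(i(Q''))$ with $\pi'=i^{-1}\circ\pi|_{Q_1}$ are precisely that adaptation. All three verifications (stability of $Q_1^\ZZ$, compatibility of $\pi'$ via injectivity of $I$, and the commuting square $i\circ\pi'=\pi\circ i'$) are sound, so this is a correct and more explicit rendering of the paper's intended proof.
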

\begin{proof}
  This is a straightforward generalization of the corresponding result in the classical deterministic settings. A detailed proof for the deterministic case appears in Theorem~2.1 of \cite{bulking2}. All arguments given in the proof are easily adaptable to our setting.
\end{proof}

The lemma above implies that any sequence of admissible restrictions and projections can be expressed as the projection of some restriction. 

From Lemma~\ref{lem:restproj} it follows that all local relations defined are transitive and reflexive. Moreover, the deterministic relations $\DSUB$ and $\DPROJ$ are exactly the same as those defined in the classical setting of deterministic CA \cite{bulking2}.

\begin{fact}\label{fac:simupreorders}
  All simulation relations $\ssimui$, $\ssimus$, $\ssimum$, $\nsimui$, $\nsimus$, $\nsimum$, $\dsimui$, $\dsimus$, $\dsimum$  are pre-orders.
\end{fact}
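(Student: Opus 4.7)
The plan is to show that each of the nine simulation relations is reflexive and transitive. Reflexivity is immediate: for any CA $\CAA$, the trivial rescaling parameters $(m,t,z)=(1,1,0)$ together with the identity map as $i$ (resp. $\pi$) witness $\CAA\,\somerel\,\CAA$ for each local relation. Hence $\CAA\simu\CAA$ in all nine cases.

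For transitivity, I would follow the template developed for classical deterministic CA in \cite{bulking2}, which rests on three ingredients to be checked for all nine local relations. First, \emph{rescaling composes}: $\grp{(\grp{\CAA}{m,t,z})}{m',t',z'}$ coincides with $\grp{\CAA}{mm',tt',z''}$ for some shift $z''$ computable from $(m,t,z,m',t',z')$; this is a direct consequence of the definition and the identity $\bloc{m'}\circ\bloc{m}=\bloc{mm'}$ up to relabelling. Second, \emph{each local relation is preserved by rescaling}: if $\CAA\,\somerel\,\CAB$ then $\grp{\CAA}{m,t,z}\,\somerel\,\grp{\CAB}{m,t,z}$, where the witnessing $i$ or $\pi$ is obtained by cell-by-cell extension to $m$-blocks. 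Third, by Lemma~\ref{lem:restproj}, two consecutive restrictions compose into one, two consecutive projections compose into one, and any sequence mixing the two is a projection of a restriction.

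With these ingredients in hand, the transitivity argument runs as follows. Suppose $\CAA_1\simu\CAA_2$ is witnessed by $\grp{\CAA_1}{m_1,t_1,z_1}\,\somerel\,\grp{\CAA_2}{n_1,s_1,y_1}$ and $\CAA_2\simu\CAA_3$ by $\grp{\CAA_2}{n_2,s_2,y_2}\,\somerel\,\grp{\CAA_3}{m_3,t_3,z_3}$. Setting $m=\mathrm{lcm}(n_1,n_2)$ and $t=\mathrm{lcm}(s_1,s_2)$ and applying ingredient~1, I further rescale the four objects by factors $(m/n_i,t/s_i)$ and adjust shifts so that both middle rescalings of $\CAA_2$ become a common $\grp{\CAA_2}{m,t,y}$. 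By ingredient~2, the local relations survive this rescaling, yielding
\[\grp{\CAA_1}{M_1,T_1,Z_1}\,\somerel\,\grp{\CAA_2}{m,t,y}\,\somerel\,\grp{\CAA_3}{M_3,T_3,Z_3}\]
for suitable parameters. By ingredient~3, the two local relations compose into a single one of the same kind, producing $\grp{\CAA_1}{M_1,T_1,Z_1}\,\somerel\,\grp{\CAA_3}{M_3,T_3,Z_3}$ and thus $\CAA_1\simu\CAA_3$.

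The main obstacle is the careful verification of ingredient~2 in the non-deterministic and stochastic cases, because rescaling modifies the random alphabet (it becomes $(R^m)^t$) and iterates the local rule $t$ times. One must check that a stability condition (for $i$) or compatibility condition (for $\pi$) at the one-step level lifts to the corresponding condition for $F^t$ block-coded by $\bloc{m}$; this follows by induction on $t$ from the fact that restricting or projecting commutes with composition of the explicit global function, and the equality of stochastic global functions then propagates to the rescaled setting via Fact~\ref{fac:iterates}. The alignment of shifts is pure arithmetic bookkeeping that does not interact with the dynamical content.
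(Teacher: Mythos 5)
Your proof is correct and follows essentially the same route as the paper: both reduce transitivity to the compatibility of the local relations with rescaling, the composability of rescalings, and the transitivity of the local relations supplied by Lemma~\ref{lem:restproj}. The only cosmetic difference is that you align the two middle rescalings of $\CAA_2$ via least common multiples, whereas the paper cross-rescales each side by the other's parameters and invokes commutativity of rescalings; both treat the residual shift bookkeeping at the same informal level.
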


\begin{proof}
  It is sufficient to verify that for any local comparison relation $\somerel$:
  \begin{enumerate}
  \item $\somerel$ is compatible with rescalings, i.e.
    \[\CAA_1\somerel\CAA_2\Rightarrow \grp{\CAA_1}{m,t,z}\somerel\grp{\CAA_2}{m,t,z}\]
  \item rescalings are commutative with respect to $\somerel$, i.e.
    \[\grp{\grp{\CAA_1}{m,t,k}}{m',t',z'} \somerel \grp{\grp{\CAA_1}{m',t',z'}}{m,t,z}\]
  \end{enumerate}
  Both properties are straightforward from the definitions. Then, the transitivity of any simulation relation follows from the transitivity of the corresponding local comparison relation $\somerel$.
\end{proof}

Each stochastic pre-order is a refinement of the corresponding non-deterministic pre-order as shown by the following fact (straightforward corollary of Fact~\ref{fac:stocndet}).

\begin{fact}\label{fac:ifssimuthennsimu}
  If $\CAA_1\ssimui\CAA_2$ then $\CAA_1\nsimui\CAA_2$. The same is true for pre-orders $\ssimus$, $\ssimum$ and the corresponding (non-)deterministic pre-orders.
\end{fact}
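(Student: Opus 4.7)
The plan is to reduce the claim to Fact~\ref{fac:stocndet} by unfolding the definitions of simulation and of the local comparison relations. Fix a stochastic simulation $\CAA_1\ssimui\CAA_2$. By definition of $\ssimui$, there exist rescaling parameters $m_1,m_2,t_1,t_2,z_1,z_2$ and an injection $i$ witnessing the local relation $\grp{\CAA_1}{m_1,t_1,z_1}\SSUB\grp{\CAA_2}{m_2,t_2,z_2}$, meaning that the $i$-restriction of $\grp{F_2}{m_2,t_2,z_2}$ has the same stochastic global function as $\grp{F_1}{m_1,t_1,z_1}$.

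I would then apply Fact~\ref{fac:stocndet} to these two explicit global functions (after checking that they act on the same alphabet, which is precisely what the $i$-restriction guarantees). This yields equality of their non-deterministic global functions, which is exactly the local relation $\grp{\CAA_1}{m_1,t_1,z_1}\NSUB\grp{\CAA_2}{m_2,t_2,z_2}$. Unfolding the definition of $\nsimui$ then gives $\CAA_1\nsimui\CAA_2$, as desired. The same argument, replacing restrictions by projections or by mixed projections of restrictions, yields the two remaining implications $\ssimus\Rightarrow\nsimus$ and $\ssimum\Rightarrow\nsimum$: the only ingredient that changes is the witnessing local transformation, and in each case equality of the stochastic global functions of the transformed automata passes to equality of their non-deterministic global functions by Fact~\ref{fac:stocndet}.

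The only subtlety I expect is verifying that Fact~\ref{fac:stocndet} may be applied after rescaling. This is not really an obstacle: the rescaling $\grp{\cdot}{m,t,z}$ is a syntactic construction that produces a stochastic CA whose explicit global function is obtained by blocking, shifting and iterating the original explicit global function. Both $\STOC{\cdot}$ and $\NDET{\cdot}$ are defined uniformly from the explicit global function, so once two rescaled automata are compared via a trimming operation on the same alphabet, Fact~\ref{fac:stocndet} applies verbatim. No new computation is needed, and the whole proof is a short chain of definition-chasing plus a single invocation of Fact~\ref{fac:stocndet}.
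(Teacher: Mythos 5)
Your argument is correct and is exactly the route the paper takes: it presents Fact~\ref{fac:ifssimuthennsimu} as a ``straightforward corollary of Fact~\ref{fac:stocndet}'', obtained by unfolding the simulation relation into an equality of stochastic global functions of the rescaled/trimmed automata and applying Fact~\ref{fac:stocndet} to that equality. Your additional check that the trimmed automata live on the same state set, so that Fact~\ref{fac:stocndet} applies verbatim, is the right (and only) point of care.
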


Note that for any simulation relation $\simu$, ${\CAA_1\simu\CAA_2}$ means that two global functions are equal where one is obtained by applying \emph{only} space-time-diagram-preserving rescaling transformations to $\CAA_1$ (the simulated CA) and the other is obtained by applying both rescaling transformations and trimming operations to $\CAA_2$ (the simulator).

\subsection{Classifications of Stochastic Cellular Automata}

Simulation pre-orders can be seen as a tool to classify the behaviors of CA  \cite{bulking1,bulking2}. They can be used to formalize in a more precise way the empirical classes defined historically through experimentations. We now give some results on the structure induced on stochastic CA by this classification.

\paragraph{Ideals.} Some classes of stochastic CA may only simulate CA of their own class. This is the case of the deterministic CA and also of the class of the \emph{noisy} CA which are the CA $F$ such that ${\NDET{F}(c)=Q^\ZZ}$ for all~$c$. 

\begin{fact}\label{fact:ideals}
  Let $\simu$ be any non-deterministic or stochastic pre-order. Let $\CAA_1$ and $\CAA_2$ be stochastic CA such that $\CAA_1\simu\CAA_2$. If $\CAA_2$ is deterministic (resp. noisy) then $\CAA_1$ is deterministic (resp. noisy).
\end{fact}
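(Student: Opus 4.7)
The plan is to show that both properties in question — being deterministic and being noisy — are preserved by each of the three operations that make up the simulation relation (rescaling, $i$-restriction and $\pi$-projection), and then to propagate the property from $\CAA_2$ along the chain of equalities witnessing $\CAA_1\simu\CAA_2$ down to the rescaled $\grp{\CAA_1}{m_1,t_1,z_1}$. By Lemma~\ref{lem:restproj}, I may assume without loss of generality that the witnessing local relation is in mixed form: there exist parameters $m_j,t_j,z_j$ ($j=1,2$), an injection $i$ and a surjection $\pi$ such that the non-deterministic (resp.\ stochastic) global function of $\grp{\CAA_1}{m_1,t_1,z_1}$ coincides with that of $\mix{\pi}{i}{\grp{\CAA_2}{m_2,t_2,z_2}}$. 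When $\simu$ is a stochastic pre-order, I can additionally switch to equality of non-deterministic global functions via Fact~\ref{fac:ifssimuthennsimu} whenever convenient.

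For the deterministic case, the key observation is that the three operations act trivially on the random component of the explicit global function: if $F_2(c,s)$ does not depend on $s$, then neither does $F_2^{t_2}(c,s^1,\ldots,s^{t_2})$, nor its packing-shift $\grp{F_2}{m_2,t_2,z_2}$, nor the compositions $c'\mapsto I^{-1}\circ F(I(c'),s)$ and $(c',s)\mapsto \Pi\circ F(c,s)$ defining the $i$-restriction and the $\pi$-projection. Hence $\mix{\pi}{i}{\grp{\CAA_2}{m_2,t_2,z_2}}$ is deterministic, so its non-deterministic global function takes singleton values and its stochastic global function is Dirac-concentrated. In the non-deterministic case this singleton property transfers directly to $\grp{F_1}{m_1,t_1,z_1}$, forcing it to be constant in $s$ and hence deterministic by Proposition~\ref{prop:detdec}. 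In the stochastic case, Dirac concentration means that the set of $s\in R_1^\Z$ realizing the concentrated value has full measure and is therefore dense; since $\grp{F_1}{m_1,t_1,z_1}$ is continuous and the target singleton is closed, the equality extends to every $s$, giving determinism at the rescaled level.

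For the noisy case I would verify the three preservation properties directly. If $\NDET{F_2}(c)=Q_2^\Z$ for every $c$, iteration keeps this equality for $\NDET{F_2}^{t_2}$, and the packing-shift bijections turn it into $\NDET{\grp{F_2}{m_2,t_2,z_2}}(c)=(Q_2^{m_2})^\Z$; an $i$-restriction to an $F$-stable subsystem $(i(Q'))^\Z$ yields $\NDET{\rest{i}{F}}(c')=(Q')^\Z$ since the full ambient image already covers the restricted subsystem; and a $\pi$-projection yields $\NDET{\proj{\pi}{F}}(c')=(Q'')^\Z$ by surjectivity of $\Pi$. Hence $\mix{\pi}{i}{\grp{\CAA_2}{m_2,t_2,z_2}}$ is noisy, and equality of non-deterministic global functions transports noisiness back to $\grp{\CAA_1}{m_1,t_1,z_1}$. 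The main subtlety of the whole argument is that the conclusion is naturally obtained at the rescaled level — which is the appropriate granularity in this rescaling-based framework — and the only delicate technical step is the continuity-plus-density argument used in the stochastic sub-case of the deterministic statement, which relies squarely on the continuity of the explicit global function emphasized throughout the paper.
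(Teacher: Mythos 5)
Your overall strategy---reduce to equality of non-deterministic global functions via Fact~\ref{fac:ifssimuthennsimu}, check that determinism and noisiness survive each transformation applied to $\CAA_2$, and transport the property across the equality of global functions---is exactly the paper's. The individual preservation steps you give are sound: the continuity-plus-density argument in the stochastic sub-case is correct but redundant once you have passed to the non-deterministic setting, and for the noisy-plus-restriction step the cleanest formulation (the paper's) is that a noisy CA admits \emph{no} non-trivial restriction at all, since stability forces $\NDET{F}(I(c'))\subseteq (i(Q'))^{\Z}$ while noisiness forces that set to be all of $Q_2^{\Z}$; your phrase ``the full ambient image already covers the restricted subsystem'' gets to the right conclusion but hides this point.

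The genuine gap is your final step. The statement asserts that $\CAA_1$ itself is deterministic (resp.\ noisy), whereas your argument stops at $\grp{\CAA_1}{m_1,t_1,z_1}$ and declares the rescaled level to be ``the appropriate granularity''. It is not: you still owe the implication that if $\grp{\CAA_1}{m_1,t_1,z_1}$ is deterministic (resp.\ noisy) then so is $\CAA_1$. This is precisely the direction of ``preservation under rescaling'' that the paper's proof invokes, and it is the direction that needs an argument. For the packing map $\bloc{m_1}$ and the shift $\shift{z_1}$ it is immediate, since both are bijections that do not touch the random component; but the temporal grouping requires ``$F^{t}$ deterministic $\Rightarrow F$ deterministic'' and ``$F^{t}$ noisy $\Rightarrow F$ noisy'', and neither is a formality. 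A rule can randomize into a subalphabet in one step and collapse everything in the next, making $F^{2}$ independent of the random configuration while $F$ is not; similarly $\NDET{F}^{2}(c)=Q^{\Z}$ for all $c$ does not force $\NDET{F}(c)=Q^{\Z}$ for all $c$. As written, your proof establishes the conclusion only for the rescaled automaton; you must either supply this last descent explicitly or acknowledge that it is where the real content of the fact lies.
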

\begin{proof}
  By Fact~\ref{fac:ifssimuthennsimu} it is sufficient to prove this for non-deterministic simulations. The property that the explicit global function is deterministic or noisy (i.e. surjective on each configuration) is preserved by rescaling transformation. Hence it is sufficient to check that being deterministic or noisy is preserved by restriction and projection. This is straightforward for projection (because a projection is an onto map). Determinism is clearly preserved by restriction. Moreover, a noisy stochastic CA does not admit any non-trivial restriction because no subset of states is stable under iteration. Hence, the restriction of a noisy CA is necessarily itself (up to renaming of states) or the trivial CA with only one state. Both are noisy and the fact follows.
\end{proof}

\paragraph{Simulation of stochastic CA  by a \CFCA{}.} Even if some stochastic CA cannot be expressed as a \CFCA{} (because of potential local probabilistic correlations), each can be simulated by a particular \CFCA{}. 
 \label{sec:pCFCAsimul}

\begin{theorem}\label{thm:CFCAsimu}
  For any stochastic CA $\AUTO A=(Q,R,V,V',f_A)$ there is a \CFCA{} $\AUTO B$ such that $\CAA\ssimui\AUTO B$.
\end{theorem}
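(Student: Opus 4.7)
The plan is to construct a CFCA $\AUTO B$ that performs a single step of $\AUTO A$ over two of its own steps: the first step independently samples and stores a random symbol at each cell, and the second step applies $f_A$ deterministically, using the stored symbols in place of the random input of $\AUTO A$. Concretely, I would take $\AUTO B=(Q',R,V'',\{0\},f_B)$ with state set $Q'=Q\sqcup(Q\times R)$ and a neighborhood $V''\supseteq V\cup V'$. The local rule $f_B$ is defined as follows. If the central cell carries a state $q\in Q$ and its own random symbol is $s\in R$, then $f_B$ outputs $(q,s)\in Q\times R$. If instead every cell of the $V''$-neighborhood carries a state in $Q\times R$, say the cell at offset $v$ has state $(q_v,r_v)$, then $f_B$ outputs $f_A\bigl((q_{v_i})_{v_i\in V},(r_{v'_j})_{v'_j\in V'}\bigr)\in Q$, a value which does not depend on the incoming random symbol. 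In all other (inconsistent) input patterns $f_B$ may be defined arbitrarily. Since $V'''=\{0\}$, the automaton $\AUTO B$ is indeed a CFCA.

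Next I would verify that the injection $i:Q\hookrightarrow Q'$ with $i(q)=q$ meets the stability condition required to form the $i$-restriction of the rescaling $\grp{\AUTO B}{1,2,0}$. Starting from any $c\in Q^\ZZ$, one step of $F_B$ yields, with probability one, a configuration in $(Q\times R)^\ZZ$ where cell $z$ has state $(c_z,s^1_z)$, and a second step lands back in $Q^\ZZ$, since $f_A$ takes values in $Q$. Hence $Q^\ZZ=i(Q)^\ZZ$ is stable under $\grp{F_B}{1,2,0}$.

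It remains to check that this $i$-restriction induces the same stochastic global function as $\AUTO A$. Unwinding the rescaling, from $c\in Q^\ZZ$ with independent uniform $s^1,s^2\in R^\ZZ$, cell $z$ after two steps of $F_B$ has state
\[
f_A\bigl((c_{z+v_i})_{v_i\in V},(s^1_{z+v'_j})_{v'_j\in V'}\bigr) = F_A(c,s^1)_z,
\]
and this value is independent of $s^2$. Thus $\STOC{\rest{i}{(F_B^2)}}=\STOC{F_A}$, which gives $\grp{\AUTO A}{1,1,0}\SSUB\grp{\AUTO B}{1,2,0}$, and therefore $\AUTO A\ssimui\AUTO B$.

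The only subtlety I foresee is making sure that the rule of $\AUTO B$ is genuinely correlation-free while still encoding the probabilistic correlations that $V'$ enables in $\AUTO A$: this works precisely because those correlations, which in $\AUTO A$ arise because a single random symbol is consulted by several cells, are converted into deterministic information that was freshly sampled and locally stored in the preceding step. Once the construction is set up, the verification is a direct unfolding of the definitions of rescaling, restriction and stochastic global function.
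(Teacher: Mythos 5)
Your construction is exactly the one in the paper: a two-phase CFCA on states $Q\sqcup(Q\times R)$ that first samples and stores a random symbol per cell and then applies $f_A$ deterministically to the stored data, followed by the identity restriction of the $(1,2,0)$-rescaling. The argument is correct and matches the paper's proof essentially verbatim (your unfolding of the two steps is just a more explicit write-up of what the paper calls ``straightforward to check'').
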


\begin{proof}
The idea is to simulate one step of \CAA{} by two steps of $\AUTO B$:
\begin{enumerate}
\item generate a random symbol locally and copy it to a component of states;
\item simulate a stochastic transition of \CAA{} reading states only and ignoring random symbols.
\end{enumerate}
Formally, let $\AUTO B=(Q_B,R,V,V',f_B)$ where $Q_B=Q\cup Q\times R$ and $f_B$ is any local function such that the associated explicit global function $F_B$ verifies:
\begin{enumerate}
\item for any ${c\in Q^\ZZ\subseteq Q_B^\ZZ}$ and any $s\in R^\ZZ$, ${\bigl(F_B(c,s)\bigr)_z = (c_z,s_z)}$
\item for any ${c\in (Q\times R)^\ZZ\subseteq Q_B^\ZZ}$ and any $s\in R^\ZZ$, ${F_B(c,s) = F_A(\pi_Q(c),\pi_R(c))}$ where $\pi_Q$ and $\pi_R$ are cell-by-cell projections on $Q$ and $R$ respectively.
\end{enumerate}
It is straightforward to check that ${\CAA\SSUB{\AUTO B}^2}$ with the restriction induced by the identity injection ${i : Q^\ZZ\rightarrow Q^\ZZ\subseteq Q_B^\ZZ}$.  
\end{proof}

Note that the restriction is essential in the above construction since the behavior is not specified (and no correct behavior can be specified) on configurations where states of type $Q$ and states of type $Q\times R$ are mixed. In particular it is \textbf{false} that the stochastic CA is the square of the \CFCA{}; it is a restriction of that.

Still, one could think that we might achieve a simpler simulation by taking $Q_B=Q\times R$ and doing the two steps simultaneously so that ${F_B(c,s)}$ would be the cell by cell product of ${F_A(\pi_Q(c),\pi_R(c))}$ and $s$. But this does not work: for such a $\AUTO B$ there is generally  no restriction  nor projection nor combination of both able to reproduce the stochastic global function of $\AUTO A$. Indeed, if some $c$ and $s_1,s_2$ are such that ${F_A(c,s_1)\neq F_A(c,s_2)}$ there is no valid way to define  a corresponding configuration for $c$ in $F_B$ because the $Q$-component of states in $F_B$ depends only on the previous deterministic configuration, not on the random configuration. Then, one might see this impossibility as an argument against our formalism of simulation. Of course, many extensions of our definitions might be considered to allow more simulations between stochastic CA. However, we think that the random component of the simulated CA should never be used to determine which deterministic configuration of the simulator CA corresponds to which deterministic configuration of the simulated CA. Doing so would be like predicting the noise of a system to prepare the state of another system. In particular, we do not see any reasonable formal setting where $F_B$ defined as above would be able to simulate $F_A$. $F_A$ and $F_B$ might look like two syntactical variants of essentially the same object, but, as stochastic dynamical systems, they are very different. For instance, not every configuration can be reached from any configuration in $F_B$ whereas $F_A$  could have this property (i.e. be a noisy stochastic CA).

We believe that a better understanding of the relationship between stochastic CA and \CFCA{} should go through the following questions: Is there a \CFCA{} in any equivalence class induced by the pre-order $\ssimui$? Is any stochastic CA $\ssimus$-simulated by some \CFCA{}?

\section{Universality}
\label{sec:univ}

The quest for universal CA is as old as the model itself. Intrinsic universality has also a long story as reported in \cite{OllingerUnivhistory}. Our formalism of simulation allows to extend the quest to stochastic cellular automata. 

Indeed, one of the main by-products of each simulation pre-order defined above is a notion of intrinsic universality. Formally, given some simulation pre-order $\simu$, a stochastic CA $\CAA$ is \emph{$\simu$-universal} if for any stochastic CA $\AUTO B$ we have ${\AUTO B\simu \CAA}$. When considering deterministic pre-orders, we recover the notions of universality already studied in the literature for classical deterministic CA \cite{Ollinger6states,OllingerRichard4states,bulking2}.

\subsection{Negative results}

When considering non-deterministic or stochastic global functions, the random symbols are hidden. Still, the choice of the set of random symbols plays an important role in the global functions we can possibly obtain. We denote by $\PF(n)$ the set of the prime factors of~$n$. By extension, for a stochastic CA \CAA{} with set of random symbols $R$, we denote by $\PF(\CAA)$ the set $\PF(|R|)$. We have the following result:

\begin{lemma}\label{lem:primefactors}
  Let $\AUTO A_1=(Q,R_1,V_1,V_1',f_1)$ and $\AUTO A_2=(Q,R_2,V_2,V_2',f_2)$ be two stochastic CA with same set of states. If they are not deterministic and ${\STOC{F_1}=\STOC{F_2}}$ then $\PF({\AUTO A}_1)\cap\PF({\AUTO A}_2)\neq\emptyset$.
\end{lemma}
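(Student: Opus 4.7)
The plan is to argue by contradiction, exploiting the fact that all probabilities produced by a stochastic CA are rationals whose denominators are powers of the size of its random alphabet.

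First, I would establish a ``witness of non-determinism'' at the level of the stochastic global function: if $\AUTO A_1$ is not deterministic, then there exist a configuration $c$, a position $z$ and a letter $a\in Q$ such that
\[0 < \bigl(\STOC{F_1}(c)\bigr)(\cyl{a}{z}) < 1.\]
This follows from Proposition~\ref{prop:detdec}: non-determinism of $\AUTO A_1$ yields a state neighborhood $u\in Q^{\rho_1}$ and two random neighborhoods $v,v'\in R_1^{\rho_1'}$ with $f_1(u,v)\neq f_1(u,v')$; extending $u$ to a full configuration $c$ (positioned so that cell $z$ sees $u$) makes both values appear with positive probability at cell~$z$, so the probability of seeing $f_1(u,v)$ there lies strictly between $0$ and $1$.

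Second, I would rewrite this probability using the explicit formula for $\STOC{F_i}$ given in Section~\ref{sec:stochdyn}. For each $i\in\{1,2\}$ there are integers $N_i\geq 1$ (depending only on $|a|$ and the radius of $\AUTO A_i$) and $J_i$ with $0\leq J_i\leq |R_i|^{N_i}$ such that
\[\bigl(\STOC{F_i}(c)\bigr)(\cyl{a}{z}) = \frac{J_i}{|R_i|^{N_i}},\]
namely $J_i = |\{w\in R_i^{N_i} : F_i(c,[w]_{a_i})\subseteq \cyl{a}{z}\}|$ with $a_i = z-k_i$. Since $\STOC{F_1}=\STOC{F_2}$, these two fractions are equal, yielding
\[J_1\cdot |R_2|^{N_2} = J_2\cdot |R_1|^{N_1}.\]

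Third, I would derive the contradiction. Assume $\PF(\AUTO A_1)\cap\PF(\AUTO A_2)=\emptyset$. Then $|R_1|^{N_1}$ and $|R_2|^{N_2}$ are coprime, so $|R_1|^{N_1}$ divides $J_1\cdot |R_2|^{N_2}$ forces $|R_1|^{N_1}$ to divide $J_1$. Combined with $0\leq J_1\leq |R_1|^{N_1}$, this forces $J_1\in\{0,|R_1|^{N_1}\}$, i.e., the probability of $\cyl{a}{z}$ starting from $c$ is either $0$ or $1$, contradicting the strict inequalities from the first step.

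The only delicate point is the first step: one must go from local non-determinism of the rule (a syntactic property) to a genuinely random cylinder probability (a semantic property). The argument is straightforward provided one picks the configuration $c$ so that no ``fortunate'' cancellations occur at cell~$z$; choosing $c$ to realize the neighborhood $u$ exactly once around position~$z$ suffices, since the random inputs outside that window are independent and only scale $J_1$ uniformly without collapsing it to $0$ or $|R_1|^{N_1}$.
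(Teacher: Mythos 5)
Your proposal is correct and follows essentially the same route as the paper's proof: both hinge on producing a configuration and cylinder whose image probability lies strictly between $0$ and $1$, and on the fact that every such probability is a rational with denominator a power of $|R_i|$ for each automaton, forcing a common prime factor. The only cosmetic differences are that the paper extracts the witness from $\NDET{F_1}$ rather than from the local rule via Proposition~\ref{prop:detdec}, and phrases the arithmetic via a reduced fraction $p/q$ with $\PF(q)\subseteq\PF(\CAA_i)$ instead of your cross-multiplication and coprimality argument.
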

\begin{proof} If $\CAA_1$ is not deterministic, then there must exist some configuration $c\in Q^\ZZ$ and two configurations $y\neq y'$ such that $\{y,y'\}\subseteq\NDET{F_1}(c)$. So there are two disjoint cylinders $\cyl{u}{z}\cap\cyl{u'}{z} = \emptyset$ with $y\in\cyl{u}{z}$ and $y'\in\cyl{u'}{z}$. Therefore ${0<\bigl(\STOC{F}(c)\bigr)(\cyl{u}{z})<1}$. Besides, by definition of $\STOC{F}$, we have 
\[\bigl(\STOC{F_1}(c)\bigr)(\cyl{u}{z}) = \nu_1(\anevent^1_{c,\cyl{u}{z}}) = \frac{p}{q}<1\]
for some relatively prime numbers $p$ and $q$ (recall that $\nu_1$ is the uniform measure over $R_1^\ZZ$ and that ${\anevent^1_{c,\cyl{u}{z}}=\{s\in R_1^\ZZ : F_1(c,s)\in\cyl{u}{z}\}}$). Moreover, ${\PF(q)\subseteq\PF(|R_1|)=\PF(\CAA_1)}$ since $\anevent^1_{c,\cyl{u}{z}}$ is a finite union of cylinders and since the $\nu_1$-measure of any cylinder is a rational of the form $\frac{a}{|R_1|^b}$ for some integers $a,b\geq 1$.
Now, by hypothesis, we have also
\[\bigl(\STOC{F_2}(c)\bigr)(\cyl{u}{z}) = \frac{p}{q}\]
and by a similar argument as above we deduce that $\PF(q)\subseteq\PF(\CAA_2)$. The lemma follows since $\PF(q)\neq\emptyset$ (because $\frac pq<1$).
\end{proof}

From Lemma \ref{lem:primefactors} it follows, surprisingly perhaps, that the random symbols of a stochastic CA limit its simulation power to stochastic CA that have compatible random symbols.
\begin{theorem}\label{thm:primefactors}
  Let $\simu$ be any stochastic simulation pre-order, and $\CAA_1$ and $\CAA_2$ two stochastic CA which are not deterministic. If $\CAA_1\simu\CAA_2$ then ${\PF({\AUTO A}_1)\cap\PF({\AUTO A}_2)\neq\emptyset}$.
\end{theorem}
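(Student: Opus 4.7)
The plan is to reduce the statement directly to Lemma~\ref{lem:primefactors} by showing that the operations used in defining the simulation pre-orders -- rescaling and trimming (restriction/projection) -- all preserve the set of prime factors of the random alphabet.

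First I would verify this invariance. The rescaling $\grp{\CAA}{m,t,z}$ has random alphabet $(R^m)^t$ of cardinality $|R|^{mt}$, hence $\PF(\grp{\CAA}{m,t,z}) = \PF(\CAA)$. An $i$-restriction or a $\pi$-projection modifies only the state set and not the random alphabet, so each also preserves $\PF$. Combining these observations (in any order, via Lemma~\ref{lem:restproj}), every CA obtained from $\CAA$ by a rescaling followed by a projection of a restriction has the same set of prime factors for its random alphabet as $\CAA$.

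Next I would unfold the simulation $\CAA_1 \simu \CAA_2$: there exist parameters $(m_i,t_i,z_i)$ and trimming data such that, writing $\AUTO{B}_1 = \grp{\CAA_1}{m_1,t_1,z_1}$ and letting $\AUTO{B}_2$ be the corresponding projection of a restriction of $\grp{\CAA_2}{m_2,t_2,z_2}$, the two SCA share the same state set and satisfy $\STOC{F_{B_1}} = \STOC{F_{B_2}}$. To invoke Lemma~\ref{lem:primefactors} I need $\AUTO{B}_1$ and $\AUTO{B}_2$ both to be non-deterministic: $\AUTO{B}_1$ inherits non-determinism from $\CAA_1$ (a local witness for $f_1$'s dependence on its random argument lifts, by being fed into the last iteration step with arbitrary configurations supplied for the earlier steps, to a witness for $\grp{f_1}{m_1,t_1,z_1}$), and $\AUTO{B}_2$ is then non-deterministic by Fact~\ref{fac:stocndet}, which forces $\NDET{F_{B_1}} = \NDET{F_{B_2}}$ and therefore rules out $\AUTO{B}_2$ being deterministic.

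Applying Lemma~\ref{lem:primefactors} to $\AUTO{B}_1$ and $\AUTO{B}_2$ yields ${\PF(\AUTO{B}_1) \cap \PF(\AUTO{B}_2) \neq \emptyset}$, and the first paragraph gives ${\PF(\AUTO{B}_1) = \PF(\CAA_1)}$ and ${\PF(\AUTO{B}_2) = \PF(\CAA_2)}$, proving the theorem. The whole argument is really bookkeeping; the only subtle point is the preservation of non-determinism through rescaling, which I expect to be the main (but minor) obstacle and which reduces to a direct lifting of a single-step witness through one iteration of $F_1$.
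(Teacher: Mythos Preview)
Your proposal is correct and follows essentially the same route as the paper's proof: invariance of $\PF$ under rescaling and trimming, preservation of non-determinism under rescaling, transfer of non-determinism across the equality $\STOC{F_{B_1}}=\STOC{F_{B_2}}$, and then Lemma~\ref{lem:primefactors}. The only cosmetic difference is that the paper argues the last transfer directly from the existence of a cylinder of intermediate probability rather than via Fact~\ref{fac:stocndet}, but this is the same observation.
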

\begin{proof} Trimming operations (restrictions and projections) do not modify the set of random symbols. Rescaling transformations modify the set of random symbols in the following way: $R\mapsto R^n$ for some integer $n$. Therefore such transformations preserve the set of prime factors $\PF(\CAA)$ of the considered CA \CAA. Moreover, rescaling transformations do not affect determinism: the rescaled version of a CA which is not deterministic cannot be deterministic. Hence, the relation $\CAA_1\simu\CAA_2$ implies an equality of stochastic global functions of two CA which have the same prime factors as $\CAA_1$ and $\CAA_2$, one of which is not deterministic. Therefore none of them is deterministic and the theorem follows from lemma~\ref{lem:primefactors}.
\end{proof}
\DELETE{\begin{proof}(Sketch) First $\PF{\cdot}$ is invariant under any rescaling transformation. Then, if the distribution matches, there must be coupling of the random strings of some  finite lenght by Theorem~\ref{thm:coupling}. The existence of such a coupling implies that $|R_1|$ and $|R_2|$ must have a common prime factor. 
\end{proof}}
\DELETE{Given any stochastic pre-order $\simu$, a $\simu$-universal CA cannot be deterministic because determinism is preserved by simulation (Fact~\ref{fact:ideals}). Moreover, for any prime~$p$ there is some $\CAA_p$ with $\PF(\CAA_p)=\{p\}$. Hence Theorem~\ref{thm:primefactors},}

The consequence in terms of universality is immediate and breaks our hopes for a stochastic universality construction. 
\begin{corollary} \label{cor:no:stoc:universal}
  Let $\simu$ be any stochastic simulation pre-order. There is no \mbox{$\simu$-universal} stochastic CA.
\end{corollary}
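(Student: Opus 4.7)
The plan is to proceed by contradiction, combining Fact~\ref{fact:ideals} with Theorem~\ref{thm:primefactors}. Suppose $\CAA$ is a $\simu$-universal stochastic CA for some stochastic simulation pre-order $\simu$. I would first rule out the possibility that $\CAA$ is deterministic: by Fact~\ref{fact:ideals}, any stochastic CA $\simu$-simulated by a deterministic one must itself be deterministic (since $\simu$ refines the corresponding non-deterministic pre-order by Fact~\ref{fac:ifssimuthennsimu}), but there obviously exist non-deterministic stochastic CA (e.g.\ the blank-noise CA with $F(c,s)=s$ over any alphabet of size at least $2$). Hence $\CAA$ cannot be universal if it is deterministic, so $\CAA$ must be non-deterministic.

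Next, since $\CAA$ has a finite random alphabet, $\PF(\CAA)$ is a finite set of primes. I would choose any prime $p \notin \PF(\CAA)$ (such a $p$ always exists) and exhibit a non-deterministic stochastic CA $\CAA_p$ with $\PF(\CAA_p) = \{p\}$. The simplest witness is the blank-noise CA with $Q = R = \{0,1,\ldots,p-1\}$, $V = V' = \{0\}$, and explicit global function $F_p(c,s) = s$; clearly $|R_p| = p$, so $\PF(\CAA_p) = \{p\}$, and $\CAA_p$ is not deterministic since its local rule depends on the random component.

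Finally, universality of $\CAA$ would give $\CAA_p \simu \CAA$. Both CA being non-deterministic, Theorem~\ref{thm:primefactors} forces $\PF(\CAA_p) \cap \PF(\CAA) \neq \emptyset$, i.e.\ $p \in \PF(\CAA)$, contradicting the choice of $p$. This contradiction establishes the corollary.

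The proof is essentially a direct application of the two earlier results and presents no real obstacle; the only mildly delicate point is to make sure that the explicit witness $\CAA_p$ is genuinely non-deterministic so that Theorem~\ref{thm:primefactors} applies, which is immediate for the blank-noise construction above.
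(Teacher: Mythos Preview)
Your proof is correct and follows essentially the same approach as the paper: rule out the deterministic case via Fact~\ref{fact:ideals}, then pick a prime $p\notin\PF(\CAA)$, exhibit a non-deterministic $\CAA_p$ with $\PF(\CAA_p)=\{p\}$, and derive a contradiction from Theorem~\ref{thm:primefactors}. One minor remark: Fact~\ref{fact:ideals} is already stated for stochastic pre-orders, so the detour through Fact~\ref{fac:ifssimuthennsimu} is unnecessary (though harmless).
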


\subsection{Positive results}

Still, the negative result of Corollary \ref{cor:no:stoc:universal} leaves open the possibility of partial universality constructions. We will now describe how to construct a stochastic CA which is $\nsimui$-universal (hence also $\nsimum$-universal; however note that the existence of a  $\nsimus$- or even of a $\dsimus$-universal is still open), and then draw the consequences.\\
Since we are not concerned with size optimization, we will use simple construction techniques using parallel Turing heads and table lookup as described for classical deterministic CA in \cite{OllingerUnivhistory}. More precisely, we construct a stochastic CA $\AUTO U=(Q_U,R_u,V_U,V_U',f_U)$ able to $\nsimui$-simulate any stochastic CA $\AUTO A=(Q,R,V,V',f)$ with no rescaling transformation on $\AUTO A$ and no shift in the rescaling of $\AUTO U$. Therefore each cell of $\AUTO A$ will be simulated by a block of $m$ cells of $\AUTO U$ and each step of $\CAA$ will be simulated by $t$ steps of $\AUTO U$ ($t$ and $m$ depend on \CAA{} and are to be determined later).\\
The blocks of $m$ cells have the following structure (the restriction in the pre-order handles the trimming of any invalid block):
\begin{center}\scriptsize\sf
    \begin{tabular}{|c|c|c|c|c|c|}
      \hline
      SYNC & transition table & $Q$-state & $R$-symbol & $Q$-states of neighbors & $R$-symbols of neighbors \\
      \hline
    \end{tabular}
\end{center}
where each part uses a fixed alphabet (independent of $Q$ and $R$) and only the width of each part may depend on \CAA{}\DELETE{ (note that the transition table encodes in particular the size of $Q,R,V,V'$)}. To each such block is attached a Turing head which will repeat cyclically a sequence of $4$ steps (sub-routines) described below. On a complete configuration made of such blocks there will be infinitely many such heads (one per block) executing these steps in parallel. Execution is synchronized at the end of each step (\textsf{SYNC} part) and such that two Turing heads never collide. More precisely, for some steps (2 and 4) the moves of all heads are rigorously identical (hence synchronous and without head collision). For some other steps (1 and 3), the sequence of moves of each head depends on the content of their corresponding block but these steps are always such that the head does not go outside the block (hence no risk of head collision) and they are synchronized at the end by the \textsf{SYNC} part which implements a small time countdown initialized to the maximum time needed to complete the step in the worst case. The parts holding $R$-symbols are initially empty (uniformly equal to some symbol) for each block. The $4$ steps are as follows:
\begin{enumerate}
\item generate \DELETE{surjectively }a string representing a random $R$-symbol in the \textsf{$R$-symbol} part using (possibly several) random $R_U$-symbols present in that part of the block;
\item copy the \textsf{$R$-symbol} part to the appropriate position in the \textsf{$R$-symbols of neighbors} part of each neighboring block. Do the same for \textsf{$Q$-state};
\item using information about $Q$-states and $R$-symbols in the block, find the corresponding entry in the transition table and update the \textsf{$Q$-states} part of the block accordingly;
\item clean \textsf{$R$-symbol} and \textsf{$R$-symbols of neighbors} parts (i.e. write some uniform symbol everywhere).
\end{enumerate}

This construction scheme is very similar to the one used for classical deterministic CA but two points are important in our context:
\begin{itemize}
\item step $4$ is here to ensure that each configuration of \CAA{} has a canonical corresponding configuration of $\AUTO U$ made of blocks where the parts holding $R$-symbols is clean (i.e., step~4 is required for the existence of the injection~$i$);
\item depending on the way we generate strings representing an $R$-symbols from strings of $R_U$-symbols in step $1$, we will obtain or not a uniform distribution over $R$ (recall Theorem~\ref{thm:primefactors}).
\end{itemize}

In the general case, we can always fix (by the means of the injection~$i$) a  width large enough for the \textsf{$R$-symbols} parts containing  so that all $R$-symbols can be obtained (but with possibly different probabilities). We therefore obtain a universality result for non-deterministic simulations.

\begin{theorem}\label{thm:nondet:univ}
  Let $\simu$ be either $\nsimui$ or $\nsimum$. There exists a $\simu$-universal CA.
\end{theorem}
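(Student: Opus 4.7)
The plan is to fully formalize the four-step construction scheme already sketched before the statement, and then check that the relation $\CAA\NSUB\grp{\AUTO U}{m,t,0}$ holds (from which $\CAA\nsimui\AUTO U$ and $\CAA\nsimum\AUTO U$ follow directly). I would begin by fixing, once and for all, an alphabet $Q_U$ and random alphabet $R_U$ for $\AUTO U$, together with a radius, rich enough to implement: (i) Turing-head markers, (ii) a synchronization countdown field, (iii) a field for an arbitrary encoded transition table, (iv) cells for a $Q$-symbol, an $R$-symbol, and lists of neighbor states and random symbols, plus (v) blank separators. Typically $Q_U$ can be chosen as the product of a fixed finite collection of track alphabets, and $R_U$ as a small set (for instance $\{0,1\}$) from which arbitrary finite strings can be drawn by the head.

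Given a stochastic CA $\CAA=(Q,R,V,V',f)$, I would then determine a block size $m$ large enough to store the encoding of $(Q,R,V,V',f)$ together with a $Q$-symbol, an $R$-symbol, and the $|V|+|V'|$ neighbor-symbols needed to evaluate $f$ locally, and a period $t$ larger than the worst-case running time of the four sub-routines. The injection $i:Q\rightarrow Q_U^m$ sends a state $q\in Q$ to the canonical block that contains the fixed encoding of $\CAA$, a head at home position, the symbol $q$ in the $Q$-state field, and a distinguished blank symbol $\bot$ in every $R$-symbol and neighbor-$R$-symbol cell; its cell-by-cell extension $I:Q^\ZZ\rightarrow Q_U^\ZZ$ provides the injection witnessing the restriction.

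The heart of the verification is then to check the equality $\NDET{F}=\NDET{\rest{i}{(F_U^t)}}$, where $F_U^t$ is taken modulo the packing map~$\bloc{m}$. I would prove it by exhibiting a bijection between random configurations. On one direction, for any $s\in R^\ZZ$, one can choose a random $R_U$-configuration that drives step~1 of every head to produce the encoded symbol $s_z$ in its $R$-symbol field; steps~2, 3, 4 are effectively deterministic relative to the $Q$- and $R$-fields produced and, by the table-lookup implementation of step~3, write exactly $f\bigl((c_{z+v})_{v\in V},(s_{z+v'})_{v'\in V'}\bigr)$ in the $Q$-state slot, followed by a clean-up in step~4 that returns the block to the form~$I(f(\cdot))$. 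Thus $I(F(c,s))\in F_U^t(I(c),\cdot)$ for every $s$. Conversely, any $R_U$-configuration drives each block (by surjectivity of the encoding used in step~1) to some $R$-symbol $s_z$, and the same deterministic steps~2--4 produce $I(F(c,s))$. Hence the image set $\NDET{F_U^t}(I(c))$ equals $I\bigl(\NDET{F}(c)\bigr)$, and moreover the image set lies in $I(Q^\ZZ)$ so $I(Q^\ZZ)$ is $F_U^t$-stable, validating the restriction.

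The main obstacle, and the only delicate bookkeeping, is the synchronization argument: one must guarantee that the non-constant-time sub-routines (step~1 and step~3) never cause head collisions and always terminate before the \textsf{SYNC} countdown expires, so that the four steps can be composed cleanly into a single macro-step that depends locally on exactly the relevant neighbor blocks. This is handled by giving each Turing head a dedicated strip inside its own block (so heads physically cannot cross block boundaries) and by initializing \textsf{SYNC} to an upper bound on the worst-case sub-routine length, computable from the size of the encoded transition table. Once this synchronization lemma is in place, the equality of non-deterministic dynamics above is immediate, and since the construction is uniform in $\CAA$, $\AUTO U$ is $\nsimui$-universal, hence also $\nsimum$-universal, concluding the proof.
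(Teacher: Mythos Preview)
Your proposal is correct and follows essentially the same approach as the paper: the paper's own proof is nothing more than the four-step block construction sketched just before the statement, together with the single observation that, for non-deterministic simulation, it suffices that step~1 surjectively produce every $R$-symbol (uniformity being unnecessary). Your write-up is a faithful and more detailed formalization of that sketch; the only minor slip is the word ``bijection'' --- what you actually establish (and all that is needed) is the equality of image sets via the two inclusions you describe, not a bijection between random configurations.
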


Note that this  $\simu$-universal CA is a \CFCA, and we obtain thus a stronger version of the simulation mentioned in Section~\ref{sec:pCFCAsimul} page~\pageref{sec:pCFCAsimul}.

Now, if we are in a case where ${\PF(\CAA)\subseteq\PF(U)}$ then it is possible to choose a generation process in step $1$ such that each $R$-symbol is generated with the same probability. We therefore obtain an optimal partial universality construction for stochastic simulations.

\begin{theorem}\label{thm:stoc:univ}
  Let $\simu$ be either $\ssimui$ or $\ssimum$. For any finite set $P$ of prime numbers, there is a stochastic CA $\AUTO U_P$ such that for any stochastic CA \CAA: ${\PF(\CAA)\subseteq P\Rightarrow \CAA\simu\AUTO U_P.}$
Moreover,  $\AUTO U_P$ is a \CFCA.
\end{theorem}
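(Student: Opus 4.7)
The plan is to reuse the universal construction of Theorem~\ref{thm:nondet:univ} almost verbatim, the only modification being a careful choice of the random alphabet for $\AUTO U_P$ and a refinement of the random-symbol generation sub-routine (phase~1). Fix $|R_U|=\prod_{p\in P}p$, so that $\PF(R_U)=P$, and let $\AUTO U_P$ be the CFCA obtained by the block/Turing-head scheme of Theorem~\ref{thm:nondet:univ} with this random alphabet. By construction $\AUTO U_P$ is a CFCA, because the random symbol consumed at a cell depends only on that cell's own random input.

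The key refinement is phase~1. Given a simulated SCA $\CAA=(Q,R,V,V',f)$ with $\PF(|R|)\subseteq P$, write $|R|=\prod_{p\in P}p^{a_p}$, pick $N\geq\max_{p\in P}a_p$ so that $|R|$ divides $|R_U|^N$, and fix once and for all a map $\pi:R_U^N\to R$ with $|\pi^{-1}(r)|=|R_U|^N/|R|$ for every $r\in R$. Allocate $N$ cells of each macro-block to the \textsf{$R$-symbol} sub-part; during phase~1 the head deterministically reads the $N$ fresh uniform $R_U$-symbols present in those cells, applies $\pi$, and writes the resulting $R$-symbol in a dedicated sub-cell. Since the $R_U$-symbols at distinct cells are independent and uniform (as $\AUTO U_P$ is a CFCA), the resulting $R$-symbol is uniformly distributed over $R$, exactly matching the stochastic dynamics of $\CAA$.

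With this refinement, correctness proceeds as in Theorem~\ref{thm:nondet:univ}: choose $m,t$ as dictated by the block construction, let $i$ be the injection selecting the canonical configurations with clean $R$-components aligned on blocks of width $m$, and observe that one step of $\CAA$ at a cell corresponds to $t$ steps of $\AUTO U_P$ on the associated block. To verify the equality of stochastic global functions after rescaling, it suffices by Theorem~\ref{thm:coupling} to exhibit a coupling of the two random sources; the natural coupling pairs each cell's random $R$-symbol in $\CAA$ with the $N$-tuple of $R_U$-symbols of the corresponding block via $\pi$, and is well-defined because distinct blocks consume disjoint (hence independent) portions of the $R_U$-configuration. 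This yields $\CAA\ssimui\AUTO U_P$, and a fortiori $\CAA\ssimum\AUTO U_P$.

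The main obstacle is not conceptual: all synchronisation, table-lookup and collision-avoidance bookkeeping is inherited from Theorem~\ref{thm:nondet:univ}. The one genuinely new point is that phase~1's use of $\pi$ must yield \emph{exactly} the uniform distribution on $R$ --- this is precisely where the hypothesis $\PF(|R|)\subseteq P$ is essential, since without it no $N$ would ensure that $|R|$ divides $|R_U|^N$ and no equidistributed map $\pi$ could exist. By Theorem~\ref{thm:primefactors} this prime-factor constraint is unavoidable, so the partial universality obtained here is optimal.
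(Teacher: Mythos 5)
Your proposal is correct and follows essentially the same route as the paper: the paper also takes the non-deterministic universal construction of Theorem~\ref{thm:nondet:univ}, fixes the random alphabet so that $\PF(\CAA)\subseteq\PF(\AUTO U_P)$, and modifies only the generation sub-routine of step~1 so that each $R$-symbol is produced with equal probability (your explicit equidistributed map $\pi:R_U^N\to R$ with $|R|$ dividing $|R_U|^N$ is exactly the refinement the paper leaves implicit). The coupling verification via Theorem~\ref{thm:coupling} is an extra formalization the paper does not spell out, but it is consistent with its argument.
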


\section{Recap of Some Decidability Results}

\newcommand\Ude{\textit{\textcolor{red}{Undecidable}}}
\newcommand\Dec{\textit{\textcolor{black}{Decidable}}}

\begin{center}
\begin{tabular}{ccccc}
\bfseries Problem  & \bfseries General & \bfseries Correlation-Free & \bfseries 1D & \bfseries Correlation-free 1D\\ \toprule
$F$ deterministic & \Dec &  \Dec &  \Dec &  \Dec \\ 
 \midrule
$\STOC{F}=\STOC{G}$ & \Ude & \Dec & \Dec & \Dec \\
 \midrule
 $F$ noisy & \Ude & \Dec & \Dec & \Dec \\
 \midrule
 $\STOC{F}^2=\STOC{G}^2$ & \Ude & \Ude & \Dec & \Dec \\
 \midrule
 $F^2$ noisy & \Ude & \Ude & \Dec & \Dec \\
 \midrule
\PPT{}
& \Ude & ? &  \Ude & \Dec \\
\bottomrule
\end{tabular}
\end{center}

\section{Open Problems}
\label{sec:open}

Intrinsic simulation has been proven to be a powerful tool to hierarchize behaviors in the deterministic world. In particular, the notion of universal CA allows to formalize the concept of ``most complex'' CA as the ones concentrating ``all the possible behaviors'' within a given class \cite{bulking1,bulking2}.
The formalism and the notion of intrinsic simulation developed here for stochastic CA, enable us to export this classification tool to the stochastic world. In particular, it would be interesting to see whether our partial universality construction relates to experimentally observed classes, as in \cite{RST}.\\
It would also be interesting to extend these notions of intrinsic simulation between stochastic CA to noisy Quantum Cellular Automata, as this could be of use for quantum simulation.\\
At the theoretical level, and amongst all the concrete questions raised by this article, the following ones are of particular interest:
\begin{itemize}
\item 
Is there for any stochastic \CAA{}, a \CFCA{} $\AUTO B$ which is  ${\ssimui}$-equivalent to ${\AUTO A}$?
\item 
Is there for any stochastic \CAA{}, a \CFCA{} $\AUTO B$ such that ${\CAA\ssimus\AUTO B}$?
\item 
Are there $\nsimus$-universal cellular automata?
\item  
Are universal CA the same for pre-order $\nsimui$ and $\nsimus$?
\item 
Is \PPT{} undecidable for CFCA in dimension $2$ and higher?
\end{itemize}

Our setting can also be generalized by taking any Bernouilli measure on the $R$-component (instead of the uniform measure). We believe that positive and negative results about universality essentially still hold but under a different form.

More generally, as far as we know, there is no characterization of the probability distributions over the configurations that correspond to images of cellular automata: deterministic automata starting from a random initial configuration, nor stochastic cellular automata starting from fixed or random distribution. In particular, we failed in our attempts to obtain an ``Hedlund-like'' characterization \cite{Hedlund} of the stochastic maps corresponding to stochastic cellular automata  (recall that there are \emph{constant} stochastic maps which do not correspond to any stochastic cellular automaton). One possible direction might be to explore extensions of our framework  allowing arbitrary shift-invariant distributions for the $R$-configuration; a characterization of this extension would however be still unsatisfying since many  shift-invariant distributions are highly non-local and are thus only remotely related to cellular automata.

\bibliographystyle{fundam}
\bibliography{uspca}

\end{document}